\documentclass[11pt]{article}
\usepackage{amsthm,amssymb,amsmath}
\usepackage{graphicx,algorithmic,algorithm}%
\usepackage{enumerate}
\usepackage{tikz,tikz-cd}
\tikzcdset{scale cd/.style={every label/.append style={scale=#1},
    cells={nodes={scale=#1}}}}%
\usetikzlibrary{shapes}
\usepackage[letterpaper, margin=1in]{geometry}
\usepackage{hyperref}
\usepackage{todonotes}
\usepackage[normalem]{ulem}
\hypersetup{
	pdftitle=   {},
	pdfauthor=  {Yeonsu Chang, O-joung Kwon and Myounghwan Lee}
}
\usepackage{authblk}
\usepackage{mathabx}

\newtheorem{theorem}{Theorem}[section]
\newtheorem{lemma}[theorem]{Lemma}

\newtheorem{corollary}[theorem]{Corollary}

\newtheorem{claim}[theorem]{Claim}

\newtheorem*{thm:main1}{Theorem~\ref{thm:main1}}
\newtheorem*{thm:main2}{Theorem~\ref{thm:main2}}
\newenvironment{clproof}{\begin{list}{}{%
			\setlength{\leftmargin}{3mm}%
		} \item {\it Proof.} }{\hfill$\lozenge$\end{list}}

\newcommand\extrafootertext[1]{%
    \bgroup
    \renewcommand\thefootnote{\fnsymbol{footnote}}%
    \renewcommand\thempfootnote{\fnsymbol{mpfootnote}}%
    \footnotetext[0]{#1}%
    \egroup
}

\newcommand\abs[1]{\lvert #1\rvert}

\newcommand\ecw{\operatorname{ecw}}

\newcommand\secw{\operatorname{sec}}

\newcommand\tw{\operatorname{tw}}
\newcommand\carvw{\operatorname{carvw}}
\newcommand\fen{\operatorname{fen}}
\newcommand\tcw{\operatorname{tcw}}
\newcommand\stcw{\operatorname{stcw}}
\newcommand\tpw{\operatorname{tpw}}
\newcommand\ecrw{\operatorname{ecrw}}

\newcommand\cross{\operatorname{cross}}
\newcommand\adh{\operatorname{adh}}

\newcommand\WR{\preccurlyeq}

\newcommand{\specialcell}[2][c]{%
  \begin{tabular}[#1]{@{}c@{}}#2\end{tabular}}
  
\begin{document}
	\title{A new width parameter of graphs based on edge cuts: $\alpha$-edge-crossing width}
	
	\author{Yeonsu Chang}
	\author[1,2]{O-joung Kwon}
	\author{Myounghwan Lee}

	\affil{Department of Mathematics, Hanyang University, Seoul, South Korea.}
	\affil[2]{Discrete Mathematics Group, Institute for Basic Science (IBS), Daejeon, South Korea}
	
	\date\today
	\maketitle

	\extrafootertext{E-mail addresses: \texttt{yeonsu@hanyang.ac.kr} (Y. Chang), \texttt{ojoungkwon@hanyang.ac.kr} (O. Kwon) and \texttt{sycuel@hanyang.ac.kr} (M. Lee) }
	
	\begin{abstract}
	We introduce graph width parameters, called $\alpha$-edge-crossing width and edge-crossing width. These are defined in terms of the number of edges crossing a bag of a tree-cut decomposition. They are motivated by edge-cut width, recently introduced by Brand et al. (WG 2022).
    We show that edge-crossing width is equivalent to the known parameter tree-partition-width.
    On the other hand, $\alpha$-edge-crossing width is a new parameter; tree-cut width and $\alpha$-edge-crossing width are incomparable, and they both lie between tree-partition-width and edge-cut width. 

We provide an algorithm that, for a given $n$-vertex graph $G$ and integers $k$ and $\alpha$, in time $2^{O((\alpha+k)\log (\alpha+k))}n^2$ either outputs a tree-cut decomposition certifying that the $\alpha$-edge-crossing width of $G$ is at most $2\alpha^2+5k$ or confirms that the $\alpha$-edge-crossing width of $G$ is more than~$k$.
     As applications, 
 	for every fixed~$\alpha$, we obtain FPT algorithms for the \textsc{List Coloring} and \textsc{Precoloring Extension} problems parameterized by $\alpha$-edge-crossing width. They were known to be W[1]-hard parameterized by tree-partition-width, and FPT parameterized by edge-cut width, and we close the complexity gap between these two parameters.

	\end{abstract}

	\section{Introduction}\label{sec:intro}

    Tree-width is one of the basic parameters in structural and algorithmic graph theory, which measures how well a graph accommodates a decomposition into a tree-like structure.
    It has an important role in the graph minor theory developed by Robertson and Seymour~\cite{RS1986, RS1991, RS2004}.
    For algorithmic aspects, there are various fundamental problems that are NP-hard on general graphs, but fixed parameter tractable (FPT) parameterized by tree-width, that is, that can be solved in time $f(k)n^{O(1)}$ on $n$-vertex graphs of tree-width $k$ for some computable function $f$. However, various problems are still W[1]-hard parameterized by tree-width. 
    For example, \textsc{List Coloring} is W[1]-hard parameterized by tree-width~\cite{FellowsF2011}.

    Recently, edge counterparts of tree-width have been considered. One of such parameters is the \emph{tree-cut width} of a graph introduced by Wollan~\cite{Wollan2015}. Similar to the relationship between tree-width and graph minors, Wollan established a relationship between tree-cut width and weak immersions, and discussed structural properties. 
    Since tree-cut width is a weaker parameter than tree-width, one could expect that some problems that are W[1]-hard parameterized by tree-width, are fixed parameter tractable parameterized by tree-cut width. 
    But still several problems, including \textsc{List Coloring}, remain W[1]-hard parameterized by tree-cut width~\cite{DidemSC2017, GanianO2021, GanianK2021, RobertKDR2022, GanianKS2022}.

\begin{figure}[t]
    \centering
    \[\begin{tikzcd}[column sep=scriptsize, scale cd=0.85, row sep=large]
    &  &  &  & \mathrm{fen}\arrow[d,"\text{\cite{BrandCHGK2022}}"]\\
    \mathrm{carvw}\arrow[rrd,"\text{\cite{Ganian2022}}"]&  &  &  & \mathrm{ecw}\arrow[lld,"\text{\cite{Ganian2022}}"']\\
    &  & \mathrm{stcw}\arrow[lld,"\text{\cite{Ganian2022}}"']\arrow[rrd,"\text{Lem. 3.10}"] &  & \\    \mathrm{tcw}\arrow[rrd,"\text{\cite{giannopoulou2017packing}}"]&  &  &  & \mathrm{ecrw}_\alpha\arrow[lld,"\text{Lem. 3.2}"]\\
    & & \specialcell{$\mathrm{tpw}\arrow[d,"\text{\cite{seese1985tree}}"]\sim \mathrm{ecrw}$  \\ (\small Lem. 3.11 \& Lem. 3.18) } & &  & \\
    & & \mathrm{tw}\arrow[d,"\text{\cite{dallard2024treeindep}}"] &  & \\
    & & \mathrm{tree}\text{-}\alpha & &
\end{tikzcd}\]
    \caption{The hierarchy of the mentioned width parameters. For two width parameters $A$ and $B$, $A\to B$ means that every graph class of bounded $A$ has bounded $B$, but there is a graph class of bounded $B$ and unbounded $A$. Also, $A\sim B$ means that two parameters $A$ and $B$ are asymptotically equivalent. $\fen$, $\carvw$, $\ecw$, $\tcw$, $\stcw$, $\ecrw_\alpha$, $\ecrw$, $\tpw$, $\tw$, and $\mathrm{tree}$-$\alpha$ denote feedback edge set number, carving-width, edge-cut width, tree-cut width, slim tree-cut width, $\alpha$-edge-crossing width, edge-crossing width, tree-partition-width, tree-width, and tree-independence number, respectively.}
    \label{fig:parameters}
\end{figure}

    This motivates Brand et al.~\cite{BrandCHGK2022} to consider a more restricted parameter called the \emph{edge-cut width} of a graph. For the edge-cut width of a graph $G$, the authors considered maximal spanning forests $F$ of $G$. For each vertex $v$ of $F$, the \emph{local feedback edge set} of $v$ is the number of edges $e\in E(G)\setminus E(F)$ where the unique cycle of the graph obtained from $F$ by adding $e$ contains $v$, and the \emph{edge-cut width} of $F$ is the maximum local feedback edge set plus one over all vertices of $G$. The edge-cut width of $G$ is the minimum edge-cut width among all maximal spanning forests of $G$. The edge-cut width with respect to a maximal spanning forest was also considered by Bodlaender~\cite{Bodlaender1998} to bound the tree-width of certain graphs, with a different name called \emph{vertex remember number}. Brand et al. showed that the tree-cut width of a graph is at most its edge-cut width. Furthermore, they showed that several problems including \textsc{List Coloring} are fixed parameter tractable parameterized by edge-cut width.

    A natural question is to find a width parameter $f$ such that graph classes of bounded $f$ strictly generalize graph classes of bounded edge-cut width, and also \textsc{List Coloring}  admits a fixed parameter tractable algorithm parameterized by $f$. 
    This motivates us to define a new parameter called \emph{$\alpha$-edge-crossing width}. By relaxing the condition, we also define a parameter called \emph{edge-crossing width}, but it turns out that this parameter is equivalent to tree-partition-width~\cite{tpw1996}.
    Recently, Ganian and Korchemna~\cite{Ganian2022} introduced slim tree-cut width which also generalizes edge-cut width.
    See Figure~\ref{fig:parameters} for the hierarchy of new parameters and known parameters.

    We define the \emph{$\alpha$-edge-crossing width} and \emph{edge-crossing width} of a graph.
    For a graph $G$, a pair $\mathcal{T}=(T, \mathcal{X})$ of a tree $T$ and a collection $\mathcal{X}=\{X_t\subseteq V(G):t\in V(T)\}$ of disjoint sets of vertices in $G$, called bags (allowing empty bags), with the property $\bigcup_{t\in V(T)}X_t=V(G)$ is called the \emph{tree-cut decomposition} of $G$.
    For a node $p\in V(T)$, let $T_1, T_2, \cdots, T_m$ be the connected components of $T-p$, and let $\cross_{\mathcal{T}}(p)$ be the number of edges incident with two distinct sets in $\{\bigcup_{t\in V(T_i)}X_t :1\le i\le m\}$. Every edge $ab$ of $G$, where $a$ and $b$ belong to distinct sets in $\{\bigcup_{t\in V(T_i)}X_t :1\le i\le m\}$, is said to \emph{cross $X_p$}.
    The \emph{crossing number} of $\mathcal{T}$ is $\max_{p\in V(T)}\cross_{\mathcal{T}}(p)$, and the \emph{thickness} of $\mathcal{T}$ is $\max_{p\in V(T)}\abs{X_p}$.
     For a positive integer $\alpha$, the \emph{$\alpha$-edge-crossing width} of  a graph $G$, denoted by $\ecrw_\alpha(G)$, is the minimum crossing number over all tree-cut decompositions of $G$ whose thicknesses are at most $\alpha$.
   The \emph{edge-crossing width} of $\mathcal{T}$ is the maximum of the crossing number and the thickness of $\mathcal{T}$.
    The \emph{edge-crossing width} of $G$, denoted by $\ecrw(G)$, is the minimum \emph{edge-crossing width} over all tree-cut decompositions of $G$.

    It is not difficult to see that the $1$-edge-crossing width of a graph is at most its edge-cut width minus one, as we can take the completion of its optimal maximal spanning forest for edge-cut width into a tree as a tree-cut decomposition with small crossing number.

    We provide an FPT approximation algorithm for $\alpha$-edge-crossing width. We adapt an idea for obtaining an FPT approximation algorithm for tree-cut width due to Kim et al~\cite{KimOP2018}.
    \begin{theorem}\label{thm:approxalpha}
    Given an $n$-vertex graph $G$ and two positive integers $\alpha$ and $k$, one can in time $2^{\mathcal{O}\left((\alpha+k)\log(\alpha+k)\right)}n^2$ either 
    \begin{itemize}
        \item output a tree-cut decomposition of $G$ with thickness at most $\alpha$ and crossing number at most $2\alpha^2+5k$, or
        \item correctly report that $\ecrw_\alpha(G)>k$.
    \end{itemize}
    \end{theorem}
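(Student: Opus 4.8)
The plan is to adapt the balanced-separator recursion that Kim, Oum, and Paul~\cite{KimOP2018} used to $2$-approximate tree-cut width, with the thickness bound playing the role that the $3$-center condition plays there. We work with \emph{boundaried pieces} $(H,Q)$, where $H$ is a graph obtained from a part of $G$ by replacing the edges leaving it by a bounded multiset of dangling half-edges, and $Q\subseteq V(H)$ is the set of vertices incident to those half-edges, under the invariant $\abs{Q}\le b$ for a bound $b=b(\alpha,k)$ that will turn out to be roughly $2\alpha^2+\mathcal{O}(k)$. The main subroutine, run on such a piece, either certifies $\ecrw_\alpha(G)>k$, or returns a tree-cut decomposition of $H$ of thickness at most $\alpha$ and crossing number at most $2\alpha^2+5k$ in which the vertices of $Q$ occupy prescribed leaf bags; the top-level call is on $(G,\emptyset)$, and the pieces returned by the recursion are glued along the recursion tree. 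The gluing is legal because every adhesion the recursion introduces has order at most $2k$ and each piece attaches to $Q$ through at most $b$ edges; verifying that these two facts force the crossing number of the glued decomposition below $2\alpha^2+5k$ is the bookkeeping step that pins down this exact constant.

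The subroutine branches on the edge connectivity of the piece relative to its boundary. In the \emph{breakable} case there is a balanced bipartition $(A,B)$ of $V(H)$ with $\abs{E_H(A,B)}\le 2k$; we split $H$ along it, route $Q$ and the dangling half-edges to the two sides, add at most $2k$ new boundary vertices to each side, and recurse on the two pieces, each a constant factor smaller. A short submodularity argument is needed to keep $\abs{Q}\le b$ — if splitting would overflow $Q$, one pushes the smaller side of an offending cut out of $Q$, using that an optimal decomposition routes $Q$ through cuts of bounded order — and this is where $b$ is forced to be quadratic in $\alpha$. In the \emph{unbreakable} case no such balanced cut exists, so we invoke the hypothesis: fixing a hypothetical decomposition $(T,\mathcal{X})$ of $G$ of thickness $\le\alpha$ and crossing number $\le k$ and restricting it to $H$, unbreakability together with this hypothesis forces $H$, after deletion of its ``thin branches'' (parts separated from the remainder by a cut of order $\le 2k$ and, by unbreakability, of bounded relative size), to have a core of only $g(\alpha,k)$ vertices; we recurse on each thin branch, and inside the core together with $Q$ we choose the next bag — necessarily of size at most $\alpha$ — by copying a suitable bag of $(T,\mathcal{X})$ and then decompose the bounded residue by brute force.

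I expect the unbreakable case to be the main obstacle. Because the output bags must have size at most $\alpha$, the next bag cannot be synthesized combinatorially; one has to prove the clean structural statement ``an unbreakable boundaried piece of $\alpha$-edge-crossing width at most $k$ has bounded size after its thin branches are removed'' and then find the bag by exhaustive search inside the resulting core — and when $Q$ is large (up to $\sim\alpha^2$ vertices, all of which may sit in the core) one must additionally check that some bag of $(T,\mathcal{X})$ simultaneously separates the core well and leaves every resulting sub-piece with boundary below $b$, which is exactly what forces the $\alpha^2$ term. Algorithmically, bags of size $\alpha$ cannot be enumerated within the target running time, so the core is extracted using at most $\abs{Q}$ maximum-flow computations of value $\le 2k$ (each in time $\mathcal{O}((\alpha+k)\abs{E(H)})$ via truncated augmenting paths), after which one brute-forces over the $2^{\mathcal{O}((\alpha+k)\log(\alpha+k))}$ choices of the next bag and of the distributions of $Q$ and the dangling half-edges among its components. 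A standard analysis shows the recursion produces $\mathcal{O}(n)$ pieces in total, each processed in $2^{\mathcal{O}((\alpha+k)\log(\alpha+k))}n$ time, for the stated $2^{\mathcal{O}((\alpha+k)\log(\alpha+k))}n^2$ bound; the answer ``$\ecrw_\alpha(G)>k$'' is returned exactly when some step cannot be performed, which is sound because every step is witnessed by the hypothetical optimal decomposition.
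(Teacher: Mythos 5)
Your proposal has two genuine gaps, and the first is fatal as stated. The structural claim on which your unbreakable case rests --- ``an unbreakable boundaried piece of $\alpha$-edge-crossing width at most $k$ has bounded size after its thin branches are removed'' --- is false with your threshold of $2k$. Take the top-level call on $G=K_{\alpha,m}$ (complete bipartite) with $\alpha>2k$ and $m$ large, $Q=\emptyset$. Then $\ecrw_\alpha(G)=0$: put the side of size $\alpha$ into the central bag of a star and each of the $m$ remaining vertices into its own leaf bag; no edge crosses any bag. Yet every balanced bipartition has more than $2k$ crossing edges (whichever side contains a vertex of the $\alpha$-side sees all leaves on the other side, of which there are at least about $m/3$), so $G$ is unbreakable in your sense; and no nonempty proper vertex subset has edge boundary at most $2k$: a set of $s\ge 1$ leaves has boundary $\alpha s>2k$, and a set containing an $\alpha$-side vertex but missing many leaves has boundary at least the number of missing leaves. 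Hence there are no thin branches at all, your ``core'' is the whole graph, and the claimed bound $g(\alpha,k)$ on its size, together with the brute-force/enumeration step built on it, collapses. Raising the thin-branch threshold to a function of $\alpha$ repairs this example but invalidates the quantitative scheme (the dichotomy, the boundary bound $b$, and the target $2\alpha^2+5k$) that you explicitly defer to ``bookkeeping''.

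The breakable case is also not justified: you commit to an arbitrary balanced cut of order at most $2k$, but its edges become through-traffic whose tree paths cross nodes created inside both recursive pieces, and over many recursion levels the crossing number at a fixed node is controlled only if the number of half-edges of every piece stays bounded by roughly $2\alpha^2+4k$; that requires cuts balanced with respect to $Q$ and the half-edges rather than $V(H)$, and the ``short submodularity argument'' preventing $Q$ from overflowing is never given, nor is it shown that after such a split both boundaried sub-instances remain feasible whenever $\ecrw_\alpha(G)\le k$. Note also that this is not how Kim--Oum--Paul (or this paper) proceed: the paper's algorithm iteratively refines an oversized leaf bag $X_\ell$ of a partial tree-cut decomposition, using $\tw(G)\le 3\ecrw_\alpha(G)+2\alpha-1$ (Lemma~\ref{lem:twecrw}) and Korhonen's algorithm to obtain a tree-decomposition of $G[X_\ell]$ of width $\mathcal{O}(\alpha+k)$, solves a \textsc{Constrained Star-Cut Decomposition} subproblem on it by dynamic programming (Lemma~\ref{AlgoSC}), and certifies feasibility of that subproblem whenever $\ecrw_\alpha(G)\le k$ via the edge-orientation argument of Lemma~\ref{lem:useSCrecur}; each refinement strictly shrinks leaf bags, giving at most $n$ rounds. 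Your sketch supplies no substitute for these ingredients, so the proof does not go through as proposed.
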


   As applications of $\alpha$-edge-crossing width, we show that \textsc{List Coloring} and \textsc{Precoloring Extension} are FPT parameterized by $\alpha$-edge-crossing width. 
   They were known to be W[1]-hard parameterized by tree-cut width (and so by tree-partition-width)~\cite{GanianKS2022}, and FPT parameterized by edge-cut width~\cite{BrandCHGK2022} and by slim tree-cut width~\cite{Ganian2022}. We close the complexity gap between these parameters.

    \begin{theorem}\label{thm:listcoloring}
    For a fixed positive integer $\alpha$, the \textsc{List Coloring} and \textsc{Precoloring Extension} problems are FPT parameterized by $\alpha$-edge-crossing width.
    \end{theorem}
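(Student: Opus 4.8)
The plan is to combine the approximation algorithm of Theorem~\ref{thm:approxalpha} with a bottom-up dynamic program over the resulting tree-cut decomposition. Since \textsc{Precoloring Extension} is exactly the special case of \textsc{List Coloring} in which precolored vertices receive singleton lists and every other vertex receives the list of all available colors (and this transformation does not change the graph, hence not $\ecrw_\alpha$), it suffices to handle \textsc{List Coloring}. Since $\ecrw_\alpha(G)$ is not known in advance, we run Theorem~\ref{thm:approxalpha} for $k=1,2,3,\dots$; at the first value of $k$ for which it returns a tree-cut decomposition $\mathcal{T}=(T,\mathcal{X})$ of thickness at most $\alpha$ and crossing number at most $c:=2\alpha^2+5k$ (which happens for some $k\le \ecrw_\alpha(G)$), we keep $\mathcal{T}$; the total time of these calls is still of the form $f(\ecrw_\alpha(G))\,n^2$.

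We root $T$ arbitrarily. For a node $t$, let $V_t$ be the union of the bags in the subtree rooted at $t$, and let $I_t$ be the set of vertices of $V_t$ that are incident with an edge of $G$ leaving $V_t$. The structural facts we exploit are: (i) $\abs{X_t}\le\alpha$; (ii) every child $s$ of $t$ whose subtree $V_s$ sends an edge to a sibling subtree or out of $V_t$ contributes to $\cross_{\mathcal{T}}(t)$, so all but at most $2c$ children are \emph{light}, meaning that every edge leaving $V_s$ lands in $X_t$; (iii) for a light child $s$, every such edge either emanates from the top bag $X_s$ (at most $\alpha^2$ edges) or is counted in $\cross_{\mathcal{T}}(s)$, so there are at most $\alpha^2+c$ of them; and (iv) apart from the vertices of $X_t$, every vertex of $I_t$ is an endpoint of one of the at most $\cross_{\mathcal{T}}(t)\le c$ edges leaving $V_t$ from deeper in the subtree, so $\abs{I_t}\le\alpha+c$. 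For each node $t$ the table stores a bounded ``type'' recording, for each relevant colour pattern on $I_t$, whether $G[V_t]$ admits a proper list-coloring realising it. The transition at $t$ combines the list constraints and internal edges of $X_t$; the (at most $2c$) types of non-light children, joined over the at most $c$ edges among children and toward the parent (enumerating colour patterns of the few deep endpoints involved, and enumerating the at most $n^\alpha$ colorings of $X_t$, which is polynomial for fixed $\alpha$); and the light children, each of which imposes a constraint on the colorings of $X_t$ — namely the set of those extending into $G[V_s]$, read off from $s$'s type — which we intersect over all light children. The answer is the type at the root, whose interface is empty.

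The crux is keeping the type at each node of size bounded by a function of $\alpha+k$ (equivalently, of $\alpha$ and $k$): because $\abs{I_t}$ can be $\Omega(k)$, the $n^{\Omega(k)}$ colorings of $I_t$ cannot be enumerated, so the type must be a genuinely compact, yet \emph{faithful}, invariant of the boundaried graph $(G[V_t],I_t)$ for \textsc{List Coloring}. The approach is to record, instead of actual colors, the partition of $I_t$ into colour classes together with, per class, a capped summary of its remaining admissible colours: a class is flagged ``free'' as soon as its effective list exceeds $\abs{I_t}$, since it can then always avoid the at most $\abs{I_t}$ colours it may ever be required to differ from, while ``pinned'' classes (small effective lists) are tracked up to a bounded equivalence that records which pinned classes, possibly in different subtrees, are forced to share a colour. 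The main obstacle is precisely proving that such a summary is faithful and can be maintained through the transitions without blow-up — a representative-set/colour-pattern argument for \textsc{List Coloring} relative to a bounded boundary, together with careful bookkeeping of pinned-colour identities across subtrees. Granting this, each type has size $g(\alpha+k)$, each transition costs $g(\alpha+k)\cdot n^{O(\alpha)}$, and the whole dynamic program runs in time $f(k)\cdot n^{O(\alpha)}$ for every fixed $\alpha$, which is fixed-parameter tractable; as noted, \textsc{Precoloring Extension} follows as a special case.
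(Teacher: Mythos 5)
Your overall framework coincides with the paper's: run the approximation algorithm of Theorem~\ref{thm:approxalpha} to get a tree-cut decomposition with thickness $\le\alpha$ and crossing number $w=\mathcal{O}(\alpha^2+k)$, root it, observe that every subtree interface $I_t$ has size at most $\alpha+w$, separate children into the $\le 2w$ ``heavy'' ones and the remaining ``light'' ones, and do a bottom-up DP storing a compressed family of realisable boundary colorings at each node. The structural bounds (i)--(iv) you derive, the treatment of light vs.\ heavy children, the $n^{\mathcal{O}(\alpha)}$ enumeration of colorings of $X_t$, and the reduction from \textsc{Precoloring Extension} to \textsc{List Coloring}, all match the paper.

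However, the heart of the argument --- showing that the family of boundary colorings can be pruned to size bounded by a function of $\alpha+w$ while remaining a faithful representative set --- is not established in your write-up. You yourself flag it: ``the main obstacle is precisely proving that such a summary is faithful and can be maintained through the transitions without blow-up.'' In the paper this is exactly the content of Lemmas~\ref{lem:compatible} and~\ref{lem:compgrph}, and it is proved, not assumed. Lemma~\ref{lem:compatible} is a purely combinatorial claim about vectors in $\mathbb{N}^q$ under a compatibility relation $B\subseteq[q]\times[t]$: any family $\mathcal{P}$ has a subfamily $\mathcal{P}^*$ of size at most $q!\,2^{q(q+1)/2}t^{q-1}(t+1)$ that preserves the existence of a $B$-compatible partner for every $W\in\mathbb{N}^t$. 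The proof proceeds by induction on $q$: either there are $qt+1$ pairwise diagonally-compatible vectors (in which case any $W$ is served greedily), or every vector agrees coordinatewise with one of at most $qt$ base vectors, and one recurses in dimension $q-1$ on each agreement class.

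Your sketched replacement --- recording the partition of $I_t$ into colour classes plus a per-class ``free''/``pinned'' flag --- is a genuinely different idea, and it does not obviously work. The per-class effective list is a \emph{projection} of the set of realisable boundary colorings: a class can take many colours overall, yet once the colours of some other classes are fixed its remaining options may collapse, so labeling it ``free'' and then picking one representative for it can be unsound. The paper sidesteps this entirely by keeping a small \emph{subset of actual colorings} rather than a per-class summary, so joint constraints are preserved verbatim. Without a proof that your free/pinned summary is a faithful invariant under the DP transitions (including the join of $\le 2w$ heavy children, where different subtrees' pinned classes must be reconciled), the proposal has a real gap precisely at the point the paper calls its key idea.
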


    A key idea of the algorithm is to use the representative set technique for coloring problems. This can be considered a point of independent interest.

    This paper is organized as follows. 
    In Section~\ref{sec:prelim}, we give basic definitions and notations.
    We present an FPT approximation algorithm for $\alpha$-edge-crossing width in Section~\ref{sec:edgecrossing} and discuss algorithmic applications in Section~\ref{sec:applications}. In Section~\ref{sec:relationship}, we establish the relationship between width parameters as presented in Figure~\ref{fig:parameters}.  We conclude and present some open problems in Section~\ref{sec:conclusion}.

\section{Preliminaries}\label{sec:prelim}
For a set $X$ and a positive integer $n$, we call ${X\choose n}$ the set of all subsets of $X$ of size exactly $n$. Let $\mathbb{N}$ be the set of all non-negative integers, and for a positive integer $n$, let $[n]=\{1,2,\cdots,n\}$.
For each function $F:A\to B$ and a subset $A'\subseteq A$, we denote by $F|_{A'}$ a function from $A'$ to $B$ satisfying $F|_{A'}(x)=F(x)$ for every $x\in A'$.
For two functions $F:A\to B$ and $G: C\to D$ with $A\cap C=\emptyset$, we consider $F\cup G$ as the function $W$ from $A\cup C$ to $B\cup D$ satisfying $W|_A=F$ and $W|_C=G$.

For a graph $G$, we denote by $V(G)$ and $E(G)$ the vertex set and the edge set of $G$, respectively.
Let $G$ be a graph. 
For a set $S$ of vertices in $G$, let $G[S]$ denote the subgraph of $G$ induced by $S$, and 
let $G-S$ denote the subgraph of $G$ obtained by removing all the vertices in $S$.
For $v\in V(G)$, let $G-v:=G-\{v\}$.
For an edge $e$ of $G$, let $G-e$ denote the graph obtained from $G$ by deleting $e$.
The set of neighbors of a vertex $v$ is denoted by $N_G(v)$, and the \emph{degree} of $v$ is the size of $N_G(v)$.
For two disjoint sets $S_1, S_2$ of vertices in $G$, we denote by $\delta_G(S_1,S_2)$ the set of edges incident with both $S_1$ and $S_2$ in $G$.

An edge $e$ of a connected graph $G$ is a \emph{cut edge} if $G-e$ is disconnected. A connected graph is \emph{2-edge-connected} if it has no cut edges.

For two graphs $G$ and $H$, we say that $G$ is a \emph{subdivision} of $H$ if $G$ can be obtained from $H$ by subsequently subdividing edges.

Let $T$ be a rooted tree with root node $r$. A path of $T$ is \emph{rooted} if it is a subpath of the path in $T$ from some node $t$ to $r$. For a rooted path $P$ in $T$, the node of $P$ that is closest to $r$ in $T$ is called the \emph{top node} of $P$, and the node of $P$ that is farthest from $r$ in $T$ is called the \emph{bottom node} of $P$.

Let $K_{1,n}$ be the star with $n$ leaves.

\subsection{Width parameters}
A \emph{tree-decomposition} of a graph $G$ is a pair $(T, \{B_t\}_{t\in V(T)})$ consisting of a tree $T$ and a family of sets $\{B_t\}_{t\in V(T)}$ of vertices in $G$ such that 
(1) $V(G)=\bigcup_{t\in V(T)}B_t$,
    (2) for every edge $uv$ of $G$, there exists a node $t$ of $T$ such that $u,v\in B_t$, and 
    (3) for every vertex $v$ of $G$, the set $\{t\in V(T):v\in B_t\}$ induces a subtree of $T$.
    The \emph{width} of a tree-decomposition is $\max_{t\in V(T)}|B_t|-1$, and the \emph{tree-width} of a graph, denoted by $\tw(G)$, is the maximum width over all its tree-decompositions.
    
    A tree-decomposition $(T,\{B_t\}_{t\in V(T)})$ is called \emph{rooted} if $T$ is a rooted tree. 
A rooted tree-decomposition $(T,\{B_t\}_{t\in V(T)})$ with a root $r$ is called \emph{nice} if the following hold;
\begin{itemize}
    \item For a non-root leaf $t$ of $T$, $\abs{B_t}=1$.
    \item If a node $t$ is not a non-root leaf of $T$, then it is one of the following;
    \begin{itemize}
        \item (Forget node) $t$ has only one child $t'$ and $B_{t}=B_{t'}\setminus\{v\}$ for some $v\in B_{t'}$.
        \item (Introduce node) $t$ has only one child $t'$ and $B_{t}=B_{t'}\cup\{v\}$ for some $v\in V(G)\setminus B_{t'}$.
        \item (Join node) $t$ has exactly two children $t_1$ and $t_2$ and $B_t=B_{t_1}=B_{t_2}$.
    \end{itemize}
\end{itemize}

We will use the following approximation algorithm due to Korhonen.
\begin{theorem}[Korhonen~\cite{TuukaTwalgo}]\label{twalgo}
    There is an algorithm running in $2^{\mathcal{O}(w)}n$ time, that given an $n$-vertex graph $G$ and an integer $w$,   either outputs a tree-decomposition of $G$ of width at most $2w+1$ or reports that the tree-width of $G$ is more than $w$.
\end{theorem}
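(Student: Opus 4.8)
The statement is the classical $2$-approximation of Korhonen, and I would establish it via the \emph{iterative refinement} scheme. The plan is to maintain a tree-decomposition $(T,\{B_t\}_{t\in V(T)})$ of $G$ together with a progress measure, and to repeatedly \emph{split} any bag of size larger than $2w+2$, stopping either when every bag has size at most $2w+2$ (so the width is at most $2w+1$, and we output the decomposition) or when a split provably fails (so $\tw(G)>w$). To keep every local computation cheap, I would first run the known $2^{\mathcal{O}(w)}n$-time constant-factor approximation (the $5$-approximation of Bodlaender, Drange, Dregi, Fomin, Lokshtanov and Pilipczuk), which either reports $\tw(G)>w$ or returns an initial decomposition all of whose bags—and hence all of whose torsos—already have size $\mathcal{O}(w)$; since refinements only shrink bags, this size bound persists throughout.

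The combinatorial engine is the folklore balanced-separator fact: if $\tw(G)\le w$, then every vertex set $W$ admits a set $S$ with $\abs{S}\le w+1$ such that each component of $G-S$ contains at most $\tfrac12\abs{W}$ vertices of $W$. Indeed, in an optimal tree-decomposition of $G$, orient each edge toward the side carrying the larger share of $W$; a sink node $t$ gives $S=B_t$, since every component of $G-B_t$ lies inside a single component of $T-t$, each of which carries at most $\tfrac12\abs{W}$ vertices of $W$. Given a bag $B_t$ with $\abs{B_t}>2w+2$, I would apply this with $W=B_t$: enumerate the $2^{\mathcal{O}(w)}$ ways to assign the vertices of $B_t$ to the separator and the two sides, and for each candidate compute a minimum vertex separator realizing it by one max-flow computation, run on the bounded-size torso at $t$ rather than on $G$. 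If no candidate yields a separator of order at most $w+1$, report $\tw(G)>w$; otherwise rewire the neighbourhood of $t$ into a small gadget of new bags placing $S$ with each side of the separation, reattaching the subtrees of $T-t$ and absorbing the bounded-size adhesions as needed. Each new bag is strictly smaller than $B_t$ up to the additive term $\abs{S}\le w+1$, so the decomposition improves, and after finitely many refinements all bags have size at most $2w+2$.

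The points needing real care are the running-time bookkeeping, and this is where Korhonen's work lies. First, one must show that only $\mathcal{O}(n)$ refinements occur; this calls for a carefully designed potential function on the decomposition—say a weighted count of the bags exceeding the threshold, or $\sum_t f(\abs{B_t})$ for a suitable convex $f$—that strictly decreases at each refinement and is bounded linearly in $n$. Establishing this amortization, together with the guarantee that a refinement rewires only a bounded-size piece of the decomposition, is the main obstacle. Granting it, each refinement involves $2^{\mathcal{O}(w)}$ flow computations on graphs of size $\mathcal{O}(w)$, hence costs $2^{\mathcal{O}(w)}$ time, for a total of $2^{\mathcal{O}(w)}n$; I would also use the trivial preprocessing that $\abs{E(G)}>w\cdot\abs{V(G)}$ forces $\tw(G)>w$, so that $\abs{E(G)}=\mathcal{O}(wn)$ and the initial constant-factor approximation is affordable. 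By contrast, the combinatorics of the refinement gadget and the correctness of the separator search are routine, if fiddly.
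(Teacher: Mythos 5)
The paper does not prove this theorem---it is cited verbatim from Korhonen~\cite{TuukaTwalgo} and invoked as a black box in Section~\ref{sec:edgecrossing}---so there is no internal proof to compare your attempt against. Your sketch does capture the broad architecture usually attributed to Korhonen's result: bootstrap from a known constant-factor approximation so that all torsos have size $\mathcal{O}(w)$, repeatedly split an oversized bag using the folklore balanced-separator lemma applied to $W=B_t$, enumerate $2^{\mathcal{O}(w)}$ three-way partitions of $B_t$ and realize each candidate by a flow computation on the torso, and control the number of iterations with a potential function.

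But you explicitly defer the part that actually constitutes the theorem. Without the amortization, your outline only recovers the classical Robertson--Seymour-style $4$-approximation in time $2^{\mathcal{O}(w)}n^2$: each split is locally cheap, yet one may need $\Theta(n)$ splits, and a naive rewiring of the subtrees of $T-t$ can touch $\Theta(n)$ nodes each time. The step that earns the linear dependence on $n$ is precisely the one you label ``the main obstacle'': designing a potential that (i) decreases by a fixed amount under every refinement, (ii) is $\mathcal{O}(n)$ after the initial $5$-approximation and the preprocessing $|E(G)|\le w\,|V(G)|$, and (iii) is maintainable locally because each refinement demonstrably touches only a bounded-size piece of the decomposition while preserving all tree-decomposition axioms and keeping the gadget's adhesions small. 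Establishing (iii) in particular is nontrivial and is exactly where Korhonen's contribution lies. So what you have is a plausible road map with the central lemma missing, not a proof; and since the paper itself simply imports the theorem, the only honest verdict is that it cannot be checked against the source text and is incomplete on its own terms.
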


By applying the following lemma, we can find a nice tree-decomposition.
\begin{lemma}[folklore; see Lemma 7.4 in~\cite{Book:CyganParaAlgo}]\label{lem:tdtonicetd}
    Given a tree-decomposition of an $n$-vertex graph $G$ of width $w$, one can construct a nice tree-decomposition $(T,\mathcal{B})$ of width $w$ with $\abs{V(T)}=\mathcal{O}(wn)$ in $\mathcal{O}(w^2\cdot\max(\abs{V(T)},n))$ time.
\end{lemma}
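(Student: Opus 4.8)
The plan is the standard transformation of an arbitrary tree-decomposition into a nice one, carried out as a sequence of local surgeries while tracking the node count and the cost of each step. First I would normalize the input: repeatedly contract any edge $tt'$ of $T$ with $B_t\subseteq B_{t'}$, keeping the larger bag, which changes neither the width nor the validity of the decomposition. Afterwards no bag is contained in an adjacent one, and a standard counting argument bounds the tree — root $T$ arbitrarily and, for each non-root node $t$, pick $v_t\in B_t\setminus B_{p(t)}$ where $p(t)$ is the parent of $t$; by the connectivity axiom $v_t$ occurs in no bag outside the subtree rooted at $t$, so $t\mapsto v_t$ is injective and $\abs{V(T)}\le n+1$. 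This pass costs $\mathcal{O}(w\cdot\abs{V(T)})$ on the input tree.

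Next I would root $T$ and binarize it: a node with $d\ge 3$ children is replaced by a path of $d-1$ copies of its bag, each copy inheriting at most two of the original children; this at most doubles the number of nodes and preserves the width, and now every internal node has one or two children. For a node $t$ with two children, whenever a child $c$ has $B_c\ne B_t$ I splice in a fresh copy $c^{*}$ with $B_{c^{*}}=B_t$ between $t$ and $c$; then $t$ has exactly two children, both with bag $B_t$, so it is a legitimate join node. On every remaining edge $uv$ with $B_u\ne B_v$ (with $u$ the parent) I replace the edge by a path whose bags interpolate from $B_u$ to $B_v$ — first deleting the vertices of $B_u\setminus B_v$ one by one, then adding those of $B_v\setminus B_u$ one by one — a path of at most $2(w+1)$ edges all of whose internal nodes are forget or introduce nodes. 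Finally, below every non-root leaf $t$ with $\abs{B_t}\ge 2$ I hang a path of forget nodes shrinking the bag to a singleton. We start from $\mathcal{O}(n)$ nodes, multiply the edge set by a factor $\mathcal{O}(w)$ via the chains, and add $\mathcal{O}(w)$ nodes per leaf, so the output has $\abs{V(T)}=\mathcal{O}(wn)$; each new node is obtained by copying or editing a bag of size $\mathcal{O}(w)$, so the transformation runs in $\mathcal{O}(w^2\cdot\max(\abs{V(T)},n))$ time.

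There is no genuine obstacle — the statement is folklore precisely because every step is routine. The only things needing attention are the bookkeeping that no degenerate node (a single-child node with the same bag as its child) is ever produced, which is why the copies $c^{*}$ are inserted only when $B_c\ne B_t$ and why normalization first eliminates equal adjacent bags, and the direct verification that after each local move the connectivity of $\{t:v\in B_t\}$ and the edge-coverage axiom still hold — in particular that, along an interpolating chain, forgetting the vertices before introducing the new ones never disconnects a vertex's subtree.
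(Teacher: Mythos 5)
Your proof is correct and is the standard transformation (normalize by contracting subset-containment edges to get $\mathcal{O}(n)$ nodes, binarize, splice in join/introduce/forget chains, then bound the blowup and per-node cost). The paper does not prove this lemma at all — it cites it as folklore from Lemma~7.4 of Cygan et al. — and your argument is essentially the one found in that reference, so there is no divergence to report.
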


We prove the following relationship between tree-width and $\alpha$-edge-crossing width. This will be used in our approximation algorithm for $\alpha$-edge-crossing width in Section~\ref{sec:edgecrossing}, and also used to prove that edge-crossing width is equivalent to tree-partition-width in Subsection~\ref{subsec:ecrwtpw}.

\begin{lemma}\label{lem:twecrw}
    For every graph $G$ and every positive integer $\alpha$, $\tw(G)\le 5\ecrw(G)-1$ and $\tw(G)\le 3\ecrw_{\alpha}(G)+2\alpha-1$. 
\end{lemma}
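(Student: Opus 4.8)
The goal is to bound tree-width in terms of edge-crossing width and $\alpha$-edge-crossing width. The natural approach is to take a tree-cut decomposition $\mathcal{T}=(T,\mathcal{X})$ that witnesses $\ecrw(G)$ (resp.\ $\ecrw_\alpha(G)$) and convert it directly into a tree-decomposition on the same tree $T$, by enlarging each bag $X_t$ to a set $B_t$ that also contains, for every node $t$, endpoints of edges that ``pass through'' $t$. Concretely, for a node $t$ with components $T_1,\dots,T_m$ of $T-t$, every edge of $G$ that is not inside a single $Z_i=\bigcup_{v\in V(T_i)}X_v$ and not inside $X_t$ either joins two different $Z_i$'s (these are the $\cross_{\mathcal{T}}(t)$ edges), joins some $Z_i$ to $X_t$, or has an endpoint in $X_t$. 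I would define $B_t$ to be $X_t$ together with, for each edge $e=uv$ crossing $X_t$, one appropriately chosen endpoint (the endpoint ``closer to $t$'' so that edge connectivity along $T$ is respected), so that $|B_t|\le |X_t| + \cross_{\mathcal{T}}(t)$, and then argue that this is a valid tree-decomposition.

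First I would check the three tree-decomposition axioms. Coverage is immediate since $\bigcup X_t = V(G)\subseteq \bigcup B_t$. For the edge axiom: for $uv\in E(G)$, if $u,v$ lie in bags $X_s, X_{s'}$ with $s\ne s'$, then $uv$ crosses $X_t$ for every internal node $t$ on the $s$--$s'$ path in $T$, and also ``crosses'' at $s$ and $s'$ in the sense that it leaves one component; I need the chosen endpoint at each such node, together with the convention at the two endpoints $s,s'$, to make $u$ and $v$ appear together in some $B_t$ — the cleanest way is to ensure that at the node on the path where the ``choice switches from $u$ to $v$'' both endpoints are taken, or alternatively orient edges and always take the tail, then $u\in B_s$, $u\in B_t$ for all $t$ up to some point, $v\in B_t$ afterwards, and $u\in B_{s'}$ as well since $X_{s'}\ni v$ forces... — I'd handle this by a careful but routine orientation argument. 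The connectivity axiom (each vertex's bags form a subtree) follows because a vertex $v\in X_s$ is added to $B_t$ only for $t$ on rooted-type paths emanating from $s$ along edges whose adhesion the edge $v\cdot$ contributes to, and these form a connected subtree of $T$; again routine once the selection rule is fixed. This gives width at most $\max_t(|X_t|+\cross_{\mathcal{T}}(t)) - 1$.

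For the first inequality, if $\ecrw(G)=w$ take a decomposition with thickness $\le w$ and crossing number $\le w$; the above gives $\tw(G)\le 2w-1$, which is already stronger than the claimed $5w-1$, so either the bound is deliberately not tight or the correct construction needs both endpoints at each crossing (giving $|X_t|+2\cross_{\mathcal{T}}(t)$, hence $\le 3w-1$) — still within $5w-1$. For the second, with thickness $\le\alpha$ and crossing number $\le k$ we get $\tw(G)\le \alpha + c\cdot k - 1$ for $c\in\{1,2\}$; to match $3\ecrw_\alpha(G)+2\alpha-1$ I expect the construction genuinely needs to record, at node $t$, both endpoints of each crossing edge plus the whole of $X_t$ and possibly neighbors of $X_t$ reached by edges incident to $X_t$ — that last contribution is why an extra $2\alpha$ and a factor $3$ on $k$ appear.

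The main obstacle will be the bookkeeping in the edge axiom: making a single consistent choice of endpoint (or deciding to take both) along every tree-path so that every edge of $G$ genuinely lands in a common bag, while simultaneously keeping the per-node bag size controlled by $\cross_{\mathcal{T}}(t)$ and $|X_t|$ and not, say, by the degree of $X_t$. Once the selection rule is pinned down, verifying the subtree (connectivity) property and tallying the size bound are mechanical. I would therefore spend most of the write-up fixing an orientation of the crossing edges relative to the tree structure and proving that the resulting family $\{B_t\}$ satisfies all three axioms with the stated size bound.
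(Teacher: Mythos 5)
Your overall direction—keep the tree $T$ of an optimal tree-cut decomposition and enlarge each bag $X_t$ by endpoints of edges associated with $t$—is the same as the paper's, but the step you defer as ``routine bookkeeping'' is precisely the substance of the proof, and the concrete constructions you do commit to fail. If $B_t$ is $X_t$ plus one (or even both) endpoints of each edge crossing $X_t$, the edge axiom can fail outright: take $G=K_{1,n}$ with the tree-cut decomposition whose tree is a star and whose bags are singletons. No edge of $G$ crosses any bag, so every $B_t$ stays a singleton and no edge lies in a common bag, although the edge-crossing width is $1$; in particular your claimed width formula $\max_t\bigl(|X_t|+\cross_{\mathcal{T}}(t)\bigr)-1$ gives $0$ while $\tw(K_{1,n})=1$, so the asserted bound $|B_t|\le |X_t|+\cross_{\mathcal{T}}(t)$ (and the ensuing claim $\tw(G)\le 2\ecrw(G)-1$) is unsubstantiated as argued. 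The troublesome edges are exactly those whose endpoints lie in bags of different nodes but which cross no bag on their path (e.g.\ bags at adjacent nodes), and you cannot fix this by adding ``neighbors of $X_t$ reached by edges incident to $X_t$'': their number is controlled only by degrees, not by $\alpha$ or the crossing number. You name this obstacle yourself, but you never resolve it, and resolving it is the whole point of the lemma.

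The paper's resolution is a specific charging rule that your proposal lacks: root $T$, and put into $B_t$, besides $X_t$, (i) the endpoints of edges whose $\sigma$-path has $t$ as an internal node (such edges cross $X_t$, so at most $\cross_{\mathcal{T}}(t)$ of them, contributing at most $2k$ vertices), and (ii) the endpoints of edges with one end in $X_t$ whose other end's node lies outside the subtree rooted at $t$. For an edge of type (ii), the other endpoint is either in the parent bag $X_{t'}$ or belongs to an edge crossing $X_{t'}$, so type (ii) adds at most $|X_{t'}|+\cross_{\mathcal{T}}(t')$ new vertices; this ``charge the upward edges to the parent'' idea is what keeps bags bounded by $|X_t|+|X_{t'}|+2\cross_{\mathcal{T}}(t)+\cross_{\mathcal{T}}(t')$, i.e.\ $5k$ for $\ecrw$ and $2\alpha+3k$ for $\ecrw_\alpha$, and it simultaneously makes the edge axiom work (an edge between adjacent bags is caught at the lower end node) and the connectivity axiom immediate (each vertex occupies the bags of a union of tree-paths through its own node). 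Without such a rule and the accompanying bound on the type (ii) contribution, your proof has a genuine gap.
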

\begin{proof}
    We first show the first inequality.
    Let $k=\ecrw(G)$.
    Let $\mathcal{T}=(T, \{X_t\}_{t\in V(T)})$ be a tree-cut decomposition of $G$ of edge-crossing width $\ecrw(G)$.
    We consider $\mathcal{T}$ as a rooted tree-cut decomposition with root node $r$.
    Let $\sigma:V(G)\to V(T)$ be the function where $v$ is contained in $X_{\sigma(v)}$.

    We construct a rooted tree-decomposition $(T, \{B_t\}_{t\in V(T)})$ as follows.
    For each node $t$ of $T$, 
    let $F_t$ be the set of edges $ab$ of $G$ satisfying that 
    either 
    \begin{itemize}
        \item (type 1) the path between $\sigma(a)$ and $\sigma(b)$ in $T$ contains $t$ as an internal node, or
        \item (type 2) the path between $\sigma(a)$ and $\sigma(b)$ in $T$ has length at least $1$, and contains $t$ as an end node, and the subtree of $T$ rooted at $t$ does not contain the end node of this path other than $t$.
    \end{itemize}
    Let $B_t$ be the union of $X_t$ and the set of vertices of $G$ incident with an edge in~$F_t$.

Since $\cross_\mathcal{T}(t)\le k$, the number of the edges of type 1 is at most $k$. So, because of this type, we put at most $2k$ vertices into $B_t$.
    If $t$ is the root node, then there is no edge of type 2. Assume that $t$ is not the root node, and let $t'$ be its parent.
     For type 2, $\sigma(a)=t$ or $\sigma(b)=t$, and either the vertex in $\{a,b\}$ that is not contained in $X_t$ is contained in $B_{t'}$ or $ab$ crosses $X_{t'}$.
     Since the number of edges $ab$ of type 2 crossing $X_{t'}$ is at most $k$, we may add at most $k$ vertices other than $X_t\cup X_{t'}$. As $\abs{X_t\cup X_{t'}}\le 2k$, in total, we have that $\abs{B_t}\le (2k)+(2k)+k=5k$.  

    We now verify that $(T, \{B_t\}_{t\in V(T)})$ is a  tree-decomposition. 
    Since $X_t\subseteq B_t$ for each $t\in V(T)$, every vertex of $G$ appears in some bag. 
    Let $ab\in E(G)$ and assume that there is no bag of $\mathcal{T}$ containing both $a$ and $b$.  If the path between $\sigma(a)$ and $\sigma(b)$ in $T$ contains some node $t$ of $T$ as an internal node, then by the construction, $B_t$ contains both $a$ and $b$. Assume that 
    there is no node in $T$ that is an internal node of the path between $\sigma(a)$ and $\sigma(b)$ in $T$. This means that $\sigma(a)$ is adjacent to $\sigma(b)$ in $T$. By symmetry, we assume that $\sigma(a)$ is the parent of $\sigma(b)$. Then $ab$ is an edge of type 2 for the node $t=\sigma(b)$, and thus, $\{a,b\}\subseteq B_{\sigma(b)}$.
    Thus, $(T, \{B_t\}_{t\in V(T)})$ satisfies the second condition.

    Lastly, to see that $(T, \{B_t\}_{t\in V(T)})$ satisfies the third condition, let $a\in V(G)$.
    For every vertex $b\in V(G)$ adjacent to $a$ in $G$, 
    let $P_{ab}$ be the path between $\sigma(a)$ and $\sigma(b)$ in $T$. 
    We added $a$ to $B_x$ for all $x\in V(P_{ab}-\{\sigma(a),\sigma(b)\})$. 
    Also, when $\sigma(b)$ is a descendant of $\sigma(a)$, we added $a$ to $B_{\sigma(b)}$ as well.
    Since $B_{\sigma(a)}$ contains $a$, the subtree of $T$ induced by the union of all $t$ where $a\in B_t$ is connected.
        This implies that $(T, \{B_t\}_{t\in V(T)})$ satisfies the third condition. 
    
    It is straightforward to verify the second inequality using the same argument, by replacing the inequality $\abs{X_t\cup X_{t'}}\le 2k$ with $\abs{X_t\cup X_{t'}}\le 2\alpha$. 
\end{proof}

We recall tree-cut decomposition and tree-cut width defined by Wollan~\cite{Wollan2015}.
A \emph{tree-cut decomposition} of a graph $G$ is a pair $\mathcal{T}=(T,\mathcal{X})$ such that $T$ is a tree and $\mathcal{X}$ is a family $\{X_t\}_{t\in V(T)}$ of disjoint sets of vertices in $G$ allowing empty sets where $\bigcup_{t\in V(T)}X_t=V(G)$. The elements of $\mathcal{X}$ are called bags.
For an edge $e=\{u,v\}$ of $T$, let $T_{e,u}$ and $T_{e,v}$ be two subtrees of $T-uv$ which contain $u$ and $v$, respectively. We define the \emph{adhesion} of an edge $uv$, denoted by $\adh_\mathcal{T}(uv)$, as the set of all edges incident with both $\bigcup_{t\in V(T_{uv,u})}X_t$ and $\bigcup_{t\in V(T_{uv,v})}X_t$ in $G$.

The \emph{torso} $H_t$ of $\mathcal{T}$ at a node $t$ is the graph obtained from $G$ as follows. If $|V(T)|=1$, then let $H_t=G$. Otherwise, let $T_1,T_2,\cdots, T_m$ be the connected components of $T-t$, and for each $i\in [m]$, let $Z_i=\bigcup_{v\in V(T_i)}X_v$. The torso $H_t$ at a node $t$ is obtained from $G$ by consolidating each vertex set $Z_i$ into a single vertex $z_i$. The operation of consolidating a vertex $Z$ into $z$ is to substitute $Z$ by $z$ in $G$, and for each edge $e$ between $Z$ and $v\in V(G)\setminus Z$, to add an edge $zv$ in the new graph. Note that this procedure may create multi-edges.
For a vertex $v$ of degree at most $2$ in a graph $F$, the operation of \emph{suppressing} $v$ is to remove $v$ and if $v$ has degree exactly $2$, add an edge between the neighbors of $v$ in $F$.
The \emph{$3$-center} of $H_t$, denoted by $\widetilde{H_t}$, is the unique graph obtained from $H_t$ by exhaustively suppressing vertices of degree at most $2$ in $H_t$ contained in $V(H_t)\setminus X_t$.

The \emph{tree-cut width} of $\mathcal{T}$ is defined as
\[\max \left(\max_{uv\in E(T)}|\adh_\mathcal{T}(uv)|, \max_{t\in V(T)}|V(\widetilde{H_t})|\right).\] 
The \emph{tree-cut width} of $G$, denoted by $\tcw(G)$, is the minimum tree-cut width over all tree-cut decompositions of $G$.

Ganian and Korchemna~\cite{Ganian2022} introduced the slim tree-cut width using the 2-center of a torso. The \emph{$2$-center} of a torso $H_t$, denoted by $\widehat{H}_t$, is the graph obtained from $H_t$ by removing vertices of degree $1$ in $H_t$ contained in $V(H_t)\setminus X_t$.
The \emph{slim tree-cut width} of a tree-cut decomposition $\mathcal{T}=(T,\mathcal{X})$ of a graph $G$ is defined as \[\max\left(\max_{uv\in E(T)}|\adh_\mathcal{T}(uv)|,\max_{t\in V(T)}|V(\widehat{H_t})|\right).\] The \emph{slim tree-cut width} of $G$, denoted by $\stcw(G)$, is the minimum slim tree-cut width  over all tree-cut decompositions of $G$.

Ding and Oporowski~\cite{tpw1996} introduced the notion of tree-partition-width. A \emph{tree-partition} of a graph $G$ is a tree-cut decomposition $(T,\{X_t\}_{t\in V(T)})$ such that for every $uv\in E(G)$, either 
\begin{itemize}
    \item there is a node $t$ of $T$ such that $X_t$ contains both $u$ and $v$, or 
    \item there is an edge $t_1t_2$ of $T$ such that one of $u$ and $v$ is contained in $X_{t_1}$ and the other is contained in $X_{t_2}$.
\end{itemize}
The \emph{tree-partition-width} of $G$, denoted by $\tpw(G)$, is the minimum thickness over all tree-partitions of $G$.

\section{An FPT approximation algorithm for $\alpha$-edge-crossing width}\label{sec:edgecrossing}

In this section, we present an FPT approximation algorithm for $\alpha$-edge-crossing width. We similarly follow the strategy to obtain a 2-approximation algorithm for tree-cut width designed by Kim et al~\cite{KimOP2018}. We formulate a new problem called \textsc{Constrained Star-Cut Decomposition}, which corresponds to decomposing a large leaf bag in a tree-cut decomposition, and we want to apply this subalgorithm recursively. By Lemma~\ref{lem:twecrw}, we can assume that a given graph admits a tree-decomposition of bounded width, and we design a dynamic programming to solve \textsc{Constrained Star-Cut Decomposition} on graphs of bounded tree-width. 

For a graph $G$, we call a function from $V(G)$ to $\mathbb{N}$ a (vertex) weight function. For a weight function $\gamma$ and a non-empty vertex subset $S\subseteq V(G)$, we define $\gamma(S):=\sum_{v\in S}\gamma(v)$ and $\gamma(\emptyset):=0$.

\medskip
\noindent
\fbox{\parbox{0.97\textwidth}{
	\textsc{Constrained Star-Cut Decomposition}\\
	\textbf{Input :} A graph $G$, two positive integers $\alpha$, $k$, and a weight function $\gamma:V(G)\to \mathbb{N}$ \\
	\textbf{Question :} Determine whether there is a tree-cut decomposition $\mathcal{T}=(T, \{X_t\}_{t\in V(T)})$ of $G$ such that 
	\begin{itemize}
	    \item $T$ is a star with center $t_c$ and it has at least one leaf, 
	    \item $\abs{X_{t_c}}\le \alpha$ and $\cross_\mathcal{T}(t_c)\le k$,
	    \item for each leaf $t$ of $T$, $\gamma(X_t)\le \alpha^2+2k$ and $\abs{\delta_G(X_t, X_{t_c})}\le \alpha^2+k$, and 
        \item there is no leaf $q$ of $T$ such that $X_q=V(G)$.
	\end{itemize}}}
\medskip

We will consider the following situation. Let $G$ be a graph and $S$ be a set of vertices in $G$ where $\abs{S}\ge \alpha+1$ and the number of edges between $S$ and $V(G)\setminus S$ is at most $2\alpha^2+4k$. 
Lemma~\ref{lem:useSCrecur} shows that in this situation, if $G$ has $\alpha$-edge-crossing width at most $k$, then $(G[S], \alpha, k, \gamma_S)$ is a Yes-instance of \textsc{Constraint Star-Cut Decomposition}, where $\gamma_S(v)=\abs{\delta_G(\{v\}, V(G)\setminus S)}$. In Lemma~\ref{AlgoSC}, we design an algorithm that solves \textsc{Constraint Star-Cut Decomposition} on graphs of bounded tree-width. Based on these two lemmas, we design an approximation algorithm for $\alpha$-edge-crossing width in Theorem~\ref{thm:alphaecrwalgo}.

\begin{lemma}\label{lem:useSCrecur}
Let $G$ be a graph, let $\alpha, k$ be positive integers, and let $S$ be a set of vertices in $G$. Assume that $\abs{S}\ge \alpha+1$ and $\abs{\delta_G(S, V(G)\setminus S)}\le 2\alpha^2+4k$.
For each vertex $v\in S$, let $\gamma_S(v)=\abs{\delta_G(\{v\}, V(G)\setminus S)}$.

If $\ecrw_{\alpha}(G)\le k$, then $(G[S], \alpha, k, \gamma_S)$ is a Yes-instance of \textsc{Constraint Star-Cut Decomposition}.
\end{lemma}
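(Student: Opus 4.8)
The plan is to take a tree-cut decomposition $\mathcal{T}=(T,\{X_t\}_{t\in V(T)})$ of $G$ with thickness at most $\alpha$ and crossing number at most $k$, and to contract all of $V(G)\setminus S$ down to a single ``outside'' region in order to manufacture a star-shaped tree-cut decomposition of $G[S]$. Concretely, I would root $T$ arbitrarily and consider, for each node $t$, the bag $X_t\cap S$ together with how the ``pieces'' of $T$ hanging off $t$ intersect $S$. The idea is that the star center $t_c$ should collect the vertices of $S$ that lie in bags ``close to'' where $V(G)\setminus S$ sits, and each leaf of the star should be a connected chunk of $S$ that is cleanly separated from the rest. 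So I would first locate a node $t^\ast$ of $T$ such that no component of $T-t^\ast$ contains ``too much'' of $V(G)\setminus S$ in a suitable sense — but actually, since $|\delta_G(S,V(G)\setminus S)|$ is already bounded by $2\alpha^2+4k$, the cleaner route is to look at the subtree structure induced by $S$ and cut $T$ into the minimal subtrees whose union of bags meets $S$.

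The key steps, in order, would be: (1) Discard from $T$ all nodes whose bags are disjoint from $S$ and then suppress/contract to obtain a tree $T'$ on the nodes whose bags meet $S$, being careful that this only decreases crossing numbers when restricted to $G[S]$; this gives a tree-cut decomposition $\mathcal{T}'$ of $G[S]$ with thickness at most $\alpha$ and crossing number at most $k$ (edges of $G[S]$ crossing a bag in $\mathcal{T}'$ already crossed the corresponding bag in $\mathcal{T}$). (2) Root $\mathcal{T}'$ and pick a node $t_c$ with the property that every component of $T'-t_c$ contributes few crossing edges; more precisely, since $\mathcal{T}'$ has crossing number at most $k$, for $t_c$ the number of components $C$ of $T'-t_c$ such that some $G[S]$-edge goes between $\bigcup_{t\in V(C)}X_t$ and the rest is at most $2k$. (3) For the ``light'' components (those not sending any edge past $t_c$ to another component), greedily merge them into leaves of a star so that each leaf $X_t$ satisfies $\gamma_S(X_t)\le\alpha^2+2k$ and $|\delta_{G[S]}(X_t,X_{t_c})|\le\alpha^2+k$ — here I would use that $\gamma_S$ counts edges leaving $S$ and that $|\delta_G(S,V(G)\setminus S)|\le 2\alpha^2+4k$, so a single greedy pass keeps each bucket below $\alpha^2+2k$ while the edges back to the center contribute at most another $k$. (4) For the ``heavy'' components (at most $2k$ of them, or rather at most $2k$ edge-endpoints' worth) absorb the offending vertices into $X_{t_c}$ — but this threatens the $|X_{t_c}|\le\alpha$ bound, so instead I would keep $X_{t_c}$ as the original bag of the chosen node (size $\le\alpha$) and handle heavy components by recursively flattening them into the star leaves as well, at the cost of a larger but still $O(\alpha^2+k)$ crossing number and leaf weights, which is exactly what the problem allows. (5) Finally, verify the ``no leaf equals $V(G[S])=S$'' condition, which holds because $|S|\ge\alpha+1>\alpha\ge|X_{t_c}|$ forces at least one vertex of $S$ off the center, hence at least one nonempty leaf, and if some leaf were all of $S$ we would relocate one vertex to the center.

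The main obstacle I expect is reconciling the three simultaneous budgets on the star: the center must have at most $\alpha$ vertices (a thickness-style constraint, tight), while the leaves are allowed the more generous $\alpha^2+2k$ weight and $\alpha^2+k$ edge bounds. This asymmetry means I cannot simply dump problematic vertices into the center; I must argue that after choosing $t_c$ to be an original node of $\mathcal{T}'$ (so $|X_{t_c}|\le\alpha$ for free), the combined edge-set leaving $S$ plus the crossing edges at $t_c$ is small enough — order $\alpha^2+k$ — that the leaves can each be kept under their quadratic thresholds by a counting/greedy argument. Getting the constant right so that ``$\alpha^2+2k$'' and ``$\alpha^2+k$'' are genuinely respected (rather than, say, $2\alpha^2$) is the delicate part, and it is where the hypothesis $|\delta_G(S,V(G)\setminus S)|\le 2\alpha^2+4k$ must be used sharply: splitting those at most $2\alpha^2+4k$ boundary edges across leaves so that no leaf sees more than $\alpha^2+2k$ of them, while the at most $k$ genuine crossing edges at $t_c$ account for the remaining slack.
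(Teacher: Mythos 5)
Your high-level plan (restrict the decomposition to $S$, pick a center node, make the components of $T-t_c$ the leaves) matches the paper's strategy, but it omits the one technical idea that makes the proof work, and substitutes a step that does not.

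The crux of the lemma is \emph{choosing} $t_c$ so that every component $C$ of $T-t_c$ already satisfies $\gamma_S(\bigcup_{t\in V(C)}X_t\cap S)\le\alpha^2+2k$. You acknowledge this budget must be met, but you propose to achieve it by a ``greedy pass'' that merges light components into leaves, and by ``recursively flattening'' heavy ones. This cannot work: if a single component of $T-t_c$ already carries $\gamma_S$-weight exceeding $\alpha^2+2k$, no merging or bucketing can split it — the leaf it forms is fixed by the component. You would have to move vertices out of that component, which changes the decomposition and destroys the $\cross_{\mathcal{T}}(t_c)\le k$ guarantee. Moreover, your criterion for selecting $t_c$ — ``a node such that every component contributes few crossing edges'' — is vacuous, since \emph{every} node already has $\cross_{\mathcal{T}}(t)\le k$ by assumption; it gives no control on the $\gamma_S$-weights of the components.

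The paper's proof resolves this by orienting each edge $e=xy$ of $T$ according to two rules: from $x$ to $y$ if the $x$-side contains no vertex of $S$, and from $x$ to $y$ if the $\gamma$-weight of the $y$-side exceeds $\alpha^2+2k$. The hypothesis $\gamma(V(G))=|\delta_G(S,V(G)\setminus S)|\le 2\alpha^2+4k$ is used precisely to show no edge receives both directions (two ``heavy'' sides would double-count past the total budget), so some node $t_c$ has all incident edges oriented toward it or undirected. For that $t_c$, Rule~2 immediately gives $\gamma(\bigcup_{t\in V(C)}X_t)\le\alpha^2+2k$ for every component $C$, with no merging needed. This sink-node argument is the missing ingredient in your write-up. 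Separately, the adhesion bound $|\delta_{G[S]}(X'_{t_c},X'_{t_i})|\le\alpha^2+k$ needs the specific split into (edges between $X_{t_c}$ and the adjacent bag $X_{t'}$, at most $\alpha\cdot\alpha$) plus (edges between $X_{t_c}$ and the rest of the component, which cross $X_{t'}$, at most $k$); your sketch does not supply this. Finally, your fix for the ``no leaf equals $S$'' condition — relocate a vertex to the center — is not safe, since it changes $X_{t_c}$ and could break the thickness bound or the crossing bound; the paper instead derives a contradiction from the orientation at $t_c$.
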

\begin{proof}
Let $\mathcal{T}=(T, \{X_t\}_{t\in V(T)})$ be a tree-cut decomposition of $G$ of thicknesses at most $\alpha$ and crossing number at most $k$.
For an edge $e=uv$ of $T$, let $T_{e,u}$ and $T_{e,v}$ be two subtrees of $T-uv$ which contain $u$ and $v$,
respectively.

We want to identify a node $t_c$ of $T$ that will correspond to the central node of the resulting star decomposition.
First, we define an extension $\gamma$ on $V(G)$ of the weight function $\gamma_S$ on $S$ as $\gamma(v)=\gamma_S(v)$ if $v\in S$ and $\gamma(v)=0$, otherwise.
We orient an edge $e=xy\in E(T)$ from $x$ to $y$ if the edge $e$ satisfies at least one of the following rules;
\begin{enumerate}[Rule 1.]
    \item $S\cap \left(\bigcup_{t\in V(T_{e,x})}X_t\right)=\varnothing$.
    \item $\gamma\big(\bigcup_{t\in V(T_{e,y})}X_t\big)>\alpha^2+2k$.
\end{enumerate}
Note that an edge may have no direction.

\begin{claim}\label{claim:onedirection}
Every edge has at most one direction.
\end{claim}
\begin{clproof}
Suppose that $e=xy$ has two different directions.

First assume that Rule 1 gives an orientation from $x$ to $y$, or from $y$ to $x$.
By symmetry, we assume that Rule 1 gives an orientation from $x$ to $y$.
Then Rule 1 cannot give an orientation from $y$ to $x$, since $S$ is not empty. Also, Rule 2 cannot give an orientation from $y$ to $x$, because Rule 1 guarantees that $\bigcup_{t\in V(T_{e,x})}X_t$ contains no vertex of $S$, and thus, the $\gamma$-value of this set is $0$.

Now, we assume that Rule 2 gives orientations from $x$ to $y$ and from $y$ to $x$.
Then we have that  $2\alpha^2+4k<\gamma\big(\bigcup_{t\in V(T_{e,x})}X_t\big)+\gamma\big(\bigcup_{t\in V(T_{e,y})}X_t\big)$. This  contradicts the fact that $\gamma\big(\bigcup_{t\in V(T_{e, x})}X_t\big)+\gamma\big(\bigcup_{t\in V(T_{e, y})}X_t\big)= \gamma(V(G))\le 2\alpha^2+4k$.
\end{clproof}

By Claim~\ref{claim:onedirection}, $T$ has a node whose incident edges have no direction or a direction to the node. Take such a node as a central node $t_c$.
Let $T_1,\ldots, T_m$ be the connected components of $T-t_c$ such that for every $i\in[m]$, $\big(\bigcup_{t\in V(T_i)}X_t\big) \cap S\neq \emptyset$.
Note that there is at least one such component, because $\abs{S}>\alpha$ and $\abs{X_{t_c}}\le \alpha$.

Let $(T',\{X'_t\}_{t\in V(T')})$ be a tree-cut decomposition of $G[S]$ such that 
\begin{itemize}
    \item $T'$ is a star with the central node $t_c$ and leaves $t_1,\ldots, t_m$,
    \item $X'_{t_c}=X_{t_c}\cap S$, and
    \item for every $i\in [m]$, $X'_{t_i}=\left(\bigcup_{t\in V(T_{i})}X_t\right)\cap S$.
\end{itemize}
We claim that $(T',\{X'_t\}_{t\in V(T')})$ satisfies the conditions of answer of \textsc{Constrained Star-Cut Decomposition}.
By the construction of decomposition, the first condition holds. Since $X'_{t_c}=X_{t_c}\cap S$ and the crossing number of $t_c$ in $(T,\{X_t\}_{t\in V(T)})$ is at most $k$, the second condition also holds.

Let $t_i$ be a leaf of $T'$ and let $V_i=\bigcup_{t\in V(T_{i})}X_t$.
By Rule 2 of  the orientation, we have that $\gamma(X'_{t_i})\le \alpha^2+2k$.  Let $t'$ be the node in $T_i$ that is adjacent to $t_c$ in $T$. Since $\abs{X_{t_c}}\le \alpha$ and $\abs{X_{t'}}\le \alpha$, we have $\abs{\delta_{G} (X_{t_c}, X_{t'})}\le \alpha^2$. Furthermore, because $\cross_{\mathcal{T}}(t')\le k$, we have $\abs{\delta_{G} (X_{t_c}, V_i\setminus X_{t'})}\le k$. Thus, we have 
\[ \abs{\delta_{G[S]}( X_{t_c}', X_{t_i}')}\le \abs{\delta_{G}(X_{t_c}, V_i)} \le \alpha^2+k. \]
This shows that the third condition holds.

We verify the last condition.
\begin{claim}
There is no leaf $q$ of $T'$ such that $X_q'=S$.
\end{claim}
\begin{clproof}
Suppose that there is a leaf $q$ of $T'$ such that $X_q'=S$. This means that there is no other leaf in $T'$ and $X_{t_c}'=\emptyset$. Let $T^*$ be the connected component of $T-t_c$ for which $\big(\bigcup_{t\in V(T^*)}X_t\big)\cap S=X_q'$. 
Then the edge of $T$ between $T^*$ and $t_c$ should be oriented towards $T^*$, because 
$\big(\bigcup_{t\in V(T)\setminus V(T^{*})}X_t\big)\cap S=\emptyset$.
This contradicts the choice of $t_c$.
\end{clproof}

This proves the lemma.
\end{proof}

We now devise an algorithm for \textsc{Constraint Star-Cut Decomposition} on graphs of bounded tree-width. 
To design a dynamic programming algorithm, we consider a pair consisting of a family $\mathcal{X}=(X_0, X_1, \ldots, X_{2k}, Y)$ of vertex sets and a partition $\mathcal{P}$ of $Y$, called a \emph{legitimate pair}, and we recursively enumerate all possible legitimate pairs.  Briefly speaking, a legitimate pair corresponds to a partial solution to \textsc{Constrained Star-Cut Decomposition}. The set $X_0$ corresponds to the center of the solution, and $X_1, \ldots, X_{2k}$ correspond to leaf bags incident with at most $k$ crossing edges, and parts of $\mathcal{P}$ correspond to leaf bags that do not incident with crossing edges. At the root node, if there is a legitimate pair, then it constitutes a solution to \textsc{Constrained Star-Cut Decomposition}.

\begin{lemma}\label{AlgoSC}
    Let $(G, \alpha, k, \gamma)$ be an instance of \textsc{Constrained Star-Cut Decomposition} and let $(T,\{B_t\}_{t\in V(T)})$ be a nice tree-decomposition of width at most $w$. In $2^{\mathcal{O}\left((k+w)\log(w(\alpha+k))\right)}\abs{V(T)}$ time, one can either output a solution of $(G, \alpha, k, \gamma)$, or correctly report that $(G, \alpha, k, \gamma)$ is a No-instance.
\end{lemma}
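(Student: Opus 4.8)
The plan is to design a dynamic programming algorithm over the given nice tree-decomposition $(T,\{B_t\}_{t\in V(T)})$ that, for the root $r$, decides whether $(G,\alpha,k,\gamma)$ admits a solution, i.e.\ a star-shaped tree-cut decomposition with central bag $X_{t_c}$ of size at most $\alpha$ and crossing number at most $k$, whose leaf bags satisfy the $\gamma$-weight bound $\alpha^2+2k$ and the edge-bound $\alpha^2+k$, and which is nontrivial (no leaf bag equals $V(G)$). The core structural observation is that such a solution is completely described by (i) the set $C := X_{t_c}$ of at most $\alpha$ ``center'' vertices, and (ii) a partition of $V(G)\setminus C$ into the leaf bags; moreover, because the center bag has crossing number at most $k$, only at most $2k$ of the leaf bags can be ``non-simple'' (have an edge leaving them to another leaf bag rather than into $C$), while all remaining leaf bags are exactly unions of connected components of $G-C$ whose only neighbours lie in $C$. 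So a solution is a choice of $C$, together with a grouping of the components of $G-C$ into at most $2k$ ``big'' groups (each group becoming one leaf bag, and these are the only groups allowed to contain more than one component or to have cross-group edges) and arbitrarily many ``small'' leaf bags, each a single component or a union of components with all neighbours in $C$, subject to the $\gamma$- and $\delta$-bounds per leaf bag.

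For the dynamic programming, I would process the nice tree-decomposition bottom-up, and at each node $t$ store a table indexed by: the intersection $C\cap B_t$ of the (eventual) center with the current bag (at most $2^{w+1}$ choices, but since $|C|\le\alpha$ we only need subsets of size $\le\alpha$); for each vertex of $B_t\setminus C$, a label recording which ``type'' of leaf bag it will belong to — either it is assigned to one of the (at most $2k$) big groups, identified by an index together with the partial record of that group's accumulated $\gamma$-weight and $\delta$-count, or it is in some small leaf bag that is already ``closed off'' below $t$, or it is in a small leaf bag still being built; and the running $\gamma(C)$-irrelevant but the running $\delta_G(C, V(G)\setminus C)$ count (bounded by $k$, else reject) split appropriately per big group. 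The number of states at each node is $2^{\mathcal{O}((k+w)\log(w(\alpha+k)))}$: there are at most $w+1$ vertices to label, each label drawn from a set of size $\mathcal{O}(k)$ for the group index plus $\mathcal{O}(w(\alpha+k))$-sized counters (weights are bounded by $\alpha^2+2k$, crossing counts by $k$, and one must also track how much of each budget has been spent), and the big-group counters contribute another $2^{\mathcal{O}(k\log(\alpha+k))}$ factor. Transitions at introduce, forget, and join nodes are the standard ones: at an introduce node we guess the new vertex's center-membership and leaf-label and update the crossing counter for the center and the $\gamma/\delta$ counters of the affected big group; at a forget node we make sure that if the forgotten vertex was in a still-open small leaf bag with no remaining representative in the bag, that bag is legitimately closed (all its neighbours already placed into $C$, and its budgets respected); at a join node we combine the two children's partial big-group counters additively, checking the bounds, and merge the small-bag bookkeeping consistently. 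Finally, at the root we accept iff there is a state with $|C|\le\alpha$, center crossing count $\le k$, all big-group budgets within $\alpha^2+2k$ and $\alpha^2+k$, at least one leaf bag created, and not the degenerate single-leaf-equals-$V(G)$ configuration; and by standard backtracking through the DP table we can reconstruct an actual star decomposition within the same time bound.

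The main obstacle I expect is the bookkeeping needed to correctly handle the ``big'' leaf bags: unlike the small leaf bags (which are determined locally as components of $G-C$), a big leaf bag may span many subtrees of $T$ and may contain edges between its own parts, so the crossing contribution of the center at $t_c$ is the number of edges from $C$ to the *union of all but one* leaf bag, which forces us to count, per big group, exactly how many of its incident $C$-edges have been seen, and to ensure globally that the *total over all components outside one distinguished big group* stays $\le k$ — this is exactly the subtlety already handled in Kim--Haxell--Oum--Paul and in Kim et al.~\cite{KimOP2018} for tree-cut width, and I would adapt their ``at most $2k$ heavy children'' argument. A secondary technical point is bounding the number of DP states so that the $\log(w(\alpha+k))$ appears rather than $\log n$: this works because all counters are capped (weights by $\alpha^2+2k$, which is $\mathcal{O}((\alpha+k)^2)$, and we only need to track them up to those caps, contributing $\mathcal{O}((k+w)\log(\alpha+k))$ bits), and because we never need to remember the precise component structure of $G-C$, only its trace on $B_t$. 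Once these are set up, correctness follows by a routine induction on $T$ asserting that the table at $t$ records exactly the set of ``partial solutions'' compatible with $G_t$ (the subgraph induced by vertices in bags below $t$), and the running time is $2^{\mathcal{O}((k+w)\log(w(\alpha+k)))}\cdot|V(T)|$ as claimed.
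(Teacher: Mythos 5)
Your high-level plan matches the paper's: encode a putative star solution by labeling each vertex of the current bag $B_t$ with ``center'', one of at most $2k$ ``big'' (cross-edge-incident) leaf groups, or ``small'' leaf group (with a partition of these into parts), together with bounded numeric counters, and run a bottom-up DP over the nice tree-decomposition with the usual leaf/introduce/forget/join transitions. The paper formalizes this via ``legitimate pairs'' $(\mathcal{X},\mathcal{P})$ with $\mathcal{X}=(X_0,X_1,\dots,X_{2k},Y)$ and a partition $\mathcal{P}$ of $Y$, and ``records'' $(I,\mathcal{Q},C_1,C_2,D_1,D_2,a,b)$; your sketch is the same in spirit. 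Your early structural observation (``only at most $2k$ of the leaf bags can have an edge to another leaf bag, the rest are unions of components of $G-C$ with all neighbours in $C$'') is exactly right and is the heart of the argument.

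However, there is a genuine gap in how you then translate the crossing constraint into DP counters, and the two places where you do so are inconsistent with both each other and with the definition. For a star with center $t_c$ and leaves $t_1,\dots,t_m$, the quantity $\cross_{\mathcal T}(t_c)$ is the number of edges whose two endpoints lie in \emph{two distinct leaf bags}; edges between $C=X_{t_c}$ and a leaf bag do \emph{not} cross $X_{t_c}$. Yet your DP tracks ``the running $\delta_G(C,V(G)\setminus C)$ count (bounded by $k$, else reject)'', which is not a constraint of the problem at all — the per-leaf bound $\abs{\delta_G(X_t,X_{t_c})}\le\alpha^2+k$ is, but the global $\delta_G(C,V(G)\setminus C)\le k$ is spurious and overly restrictive. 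Later you assert that ``the crossing contribution of the center at $t_c$ is the number of edges from $C$ to the union of all but one leaf bag'', which is simply not the definition. As written, the DP is missing the one counter you actually need, namely the accumulated number of inter-leaf-bag edges (the paper's scalar $b=\sum_{\{i,j\}\in\binom{[2k]}{2}}\abs{\delta_G(X_i,X_j)}$, kept $\le k$), and it substitutes a wrong constraint in its place. Relatedly, you also need to explicitly forbid edges between a small part and any $X_i$, $i\in[2k]$, and between two distinct small parts (the paper's conditions $\abs{\delta_G(X_i,P)}=0$ and $\abs{\delta_G(P_i,P_j)}=0$); otherwise the restriction ``at most $2k$ big groups'' is unjustified and the final decomposition could have crossing number $>k$. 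These are local fixes, but without them the algorithm you describe answers a different question and cannot be certified correct.
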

\begin{proof}
Let $r$ be the root of $T$. We design a dynamic programming to compute a solution of \textsc{Constrained Star-Cut Decomposition} in bottom-up sense. 
For each node $t$ of $T$, let $A_t$ be the union of all bags $B_{t'}$ where $t'$ is a descendant of $t$ in $T$.

Let $Z\subseteq V(G)$ be a set. 
A pair $(\mathcal{X}, \mathcal{P})$ of a sequence $\mathcal{X}=(X_0, X_1, \ldots, X_{2k}, Y)$ and a partition $\mathcal{P}$ of $Y$ is \emph{legitimate} with respect to $Z$  if  
\begin{itemize}
    \item $X_0, X_1, \ldots, X_{2k}, Y$ are pairwise disjoint subsets of $Z$ that are possibly empty, 
    \item $\left(\bigcup_{i\in \{0, 1, \ldots, 2k\}}X_i\right)\cup Y=Z$,
    \item $|X_0|\le \alpha$,
    \item for each $i\in [2k]$, $\abs{\delta_G(X_i,X_0)}\le \alpha^2+k$ and $\gamma(X_i)\le \alpha^2+2k$,
\item $\sum_{\{i,j\}\in {[2k]\choose 2}}\abs{\delta_{G}(X_i,X_j)}\le k$,
    \item for each $i\in [2k]$ and each $P\in \mathcal{P}$,  $\abs{\delta_G(X_0, P)}\le \alpha^2+k$, and
    $\abs{\delta_G(X_i,P)}=0$,
    \item for each $P\in \mathcal{P}$, $\gamma(P)\le \alpha^2+2k$,
    \item for any distinct sets $P_i, P_j\in \mathcal{P}$, $\abs{\delta_G(P_i, P_j)}=0$.
    \end{itemize}

\begin{claim}\label{claim:solution}
    $(G,\alpha,k,\gamma)$ is a Yes-instance if and only if there is a legitimate pair $(\mathcal{X}, \mathcal{P})$ with respect to $V(G)$ with $\mathcal{X}=(X_0, X_1, \ldots, X_{2k}, Y)$ such that any set of $X_1, \ldots, X_{2k}$ or a set of $\mathcal{P}$ is not the whole set $V(G)$.
\end{claim}
\begin{clproof}
    Assume that $(\mathcal{X}=(X_0, X_1, \ldots, X_{2k}, Y), \mathcal{P})$ is a legitimate pair with respect to $V(G)$ where any set of $X_1, \ldots, X_{2k}$ or a set of $\mathcal{P}$ is not the whole set $V(G)$. Let $(T,\{U_t\}_{t\in V(T)})$ such that $T$ is a star with center $t_0$ and leaves $t_1,t_2, \ldots, t_m$ where $m=2k+\abs{\mathcal{P}}$.
    For each integer $0\le i \le2k$, let $U_{t_i}:=X_i$. We assign parts of $\mathcal{P}$ as the bags $U_{t_{2k+1}}, \ldots, U_{t_m}$ if $\mathcal{P}$ is non-empty.
    Then $(T,\{U_t\}_{t\in V(T)})$ is a solution of \textsc{Constraint Star-Cut Decomposition}.
    
    Conversely, assume that $(G,\alpha,k,\gamma)$ is a Yes-instance and let $\mathcal{T}=(T,\{U_t\}_{t\in V(T)})$ be a solution for the instance. We may assume that $\mathcal{T}$ has no empty leaf bag.
    Let $t_c$ be the central node of $T$, and let $t_1, \ldots, t_x$ be the set of all node $t$ of $T$ such that there is at least one edge incident with $U_t$ and $V(G)\setminus (U_t\cup U_{t_c})$. As $\cross_{\mathcal{T}}(t_c)\le k$, we have $x\le 2k$. Let $t_{2k+1}, \ldots, t_m$ be the other nodes if one exists.

    Let $X_0=U_{t_c}$ and for all $1\le i\le x$, let $X_i=U_{t_i}$. If $x<2k$, then for all $x+1\le i\le 2k$, let $X_i=\emptyset$.
    Lastly, if there is a node $t_j$ with $j\ge 2k+1$, then let $Y=\bigcup_{2k+1\le j\le m}U_j$ and $\mathcal{P}=\{U_j:2k+1\le j\le m\}$. Otherwise, let $Y=\emptyset$ and $\mathcal{P}=\emptyset$.
    Observe that any set of $X_1, \ldots, X_{2k}$ or a set of $\mathcal{P}$ is not the whole set $V(G)$, because of the last condition of a solution.
    Therefore $((X_0,X_1,\ldots,X_{2k},Y),\mathcal{P})$ is a legitimate pair with respect to $V(G)$.
\end{clproof}

For a legitimate pair $(\mathcal{X},\mathcal{P})$ with respect to $Z$ and $Z'\subseteq Z$, let 
\begin{itemize}
\item $\mathcal{X}|_{Z'}=(X_0\cap Z',X_1\cap Z',\ldots, X_{2k}\cap Z',Y\cap Z')$ and 
\item $\mathcal{P}|_{Z'}=\{P\cap Z':P\in\mathcal{P}, P\cap Z'\neq \emptyset\}$.
\end{itemize}
We can see that  $(\mathcal{X}|_{Z'},\mathcal{P}|_{Z'})$ is legitimate with respect to $Z'$, because the constraints are the number of vertices in a set, the number of edges between two sets, and the sum of $\gamma$-values. Based on this fact, we will recursively store information about all legitimate pairs with respect to $A_t$ for nodes $t$.

Let $t$ be a node of $T$. A tuple $(I, \mathcal{Q}, C_1, C_2, D_1, D_2, a, b)$ is a \emph{valid tuple} at $t$ if
\begin{itemize}
    \item $I: B_t\to \{0,1,\ldots,2k, 2k+1\}$,
    \item $\mathcal{Q}$ is a partition of $I^{-1}(2k+1)$,
    \item $C_1:[2k]\to \{0, 1\ldots, \alpha^2+2k\}$, 
    \item $C_2:\mathcal{Q}\to \{0, 1\ldots, \alpha^2+2k\}$,
    \item $D_1:[2k]\to \{0, 1\ldots, \alpha^2+k\}$, 
    \item $D_2:\mathcal{Q}\to \{0, 1\ldots, \alpha^2+k\}$, and
    \item $a$, $b$ are two integers with $0\le a\le \alpha$ and $0\le b\le k$.
\end{itemize}
We say that a valid tuple $(I,\mathcal{Q},C_1,C_2,D_1,D_2,a,b)$ at a node $t$ represents a legitimate pair $(\mathcal{X},\mathcal{P})=((X_0, X_1, \ldots, X_{2k}, Y), \mathcal{P})$ with respect to $A_t$ if the following are satisfied;
\begin{itemize}
    \item for every $v\in B_t$, $I(v)=i$ if and only if $v\in\left\{ \begin{array}{ll}
    X_i & \textrm{if $0\le i\le 2k$}\\
    Y & \textrm{if $i=2k+1$}
    \end{array}\right.$
    \item $\mathcal{Q}=\mathcal{P}|_{B_t}$,
    \item for each $i\in[2k]$, $C_1(i)=\gamma(X_i)$,
    \item for each $Q\in \mathcal{Q}$, $C_2(Q)=\gamma(Q)$,
    \item for each $i\in[2k]$, $D_1(i)=\abs{\delta_G(X_i,X_0)}$,
    \item for each $Q\in \mathcal{Q}$, $D_2(Q)=\abs{\delta_G(Q,X_0)}$,
    \item $a=\abs{X_0}$,
    \item $b=\sum_{\{i,j\}\subseteq{[2k]\choose 2}}\abs{\delta_G(X_i,X_j)}$.
    \end{itemize}
We say that a valid tuple is a \emph{record} at $t$ if it represents some legitimate pair with respect to $A_t$.
Let $\mathcal{R}(t)$ be the set of all records at $t$. 

It is known that there is a constant $d$ such that the number of partitions of a set of $m$ elements is at most $dm^m$.
We define a function \[\zeta(x)=(2k+2)^{x}(d x^x)(\alpha^2+2k+1)^{2(x-1)+4k}(\alpha+1)(k+1).\] 
Observe that if $B_t$ has size $q$, then the number of all possible valid tuples at $t$ is at most $\zeta(q)$. Note that $\zeta(q)=2^{\mathcal{O}\left((q+k)\log(q(\alpha+k))\right)}$.

We describe how to store all records in $\mathcal{R}(t)$ for each node $t$ of $T$, depending on the type of $t$.

\medskip
\textbf{1) $t$ is a leaf node with $B_t=\{v\}$.}

We consider a sequence $\mathcal{X}=(X_0, X_1, \ldots, X_{2k}, Y)$ where one set is $\{v\}$ and all the others are empty, and a partition $\mathcal{P}$ of $Y$. Clearly, $(\mathcal{X}, \mathcal{P})$ is a legitimate pair with respect to $A_t$. 
For such a pair, we define $I, C_1, C_2, D_1, D_2, a, b$ such that 
\begin{itemize}
    \item $I(v)=\left\{ \begin{array}{ll}
    i & \textrm{if $v\in X_i$}\\
    2k+1 & \textrm{if $v\in Y$}
    \end{array} \right. $,
    \item for every $i\in [2k]$, $C_1(i)=\gamma(I^{-1}(i))$,
    \item for every $P\in \mathcal{P}$, $C_2(P)=\gamma(P)$, 
    \item for every $i\in [2k]$, $D_1(i)=0$, 
    \item for every $P\in \mathcal{P}$, $D_2(P)=0$,
    \item $a=\left\{ \begin{array}{ll}
    1 & \textrm{if $I(v)=0$}\\
    0 & \textrm{otherwise}
    \end{array} \right. $,
    \item $b=0$.
\end{itemize}
Then $(I,\mathcal{P},C_1,C_2,D_1,D_2,a,b)$ is a record that represents the legitimate pair $(\mathcal{X}, \mathcal{P})$. Let $\mathcal{R}(t)$ be the set of all records defined as above. By the construction, $\mathcal{R}(t)$ collects all the records representing some legitimate pair with respect to $A_t$.

In this case, we have $\abs{\mathcal{R}(t)}\le2k+2$.
So, $\mathcal{R}(t)$ is computed in $\mathcal{O}(k)$ time.

\medskip
\textbf{2) $t$ is a forget node with child $t'$ such that $B_t=B_{t'}\setminus \{v\}$.}

We construct a set $\mathcal{R}^*$ from $\mathcal{R}(t')$ as follows. 
Let $\mathcal{J}'=(I',\mathcal{Q}',C_1,C_2,D_1,D_2,a,b)$ be a record at $t'$. We construct a tuple $\mathcal{J}=(I,\mathcal{Q},C_1,C_2,D_1,D_2,a,b)$ as follows;
\begin{itemize}
    \item If $I'(v)\le 2k$, then we set $I=I'|_{B_{t}}$ and $\mathcal{Q}=\mathcal{Q}'$.
    \item If $I'(v)=2k+1$ and $v\in Q\in \mathcal{Q}'$, then we set $I=I'|_{B_{t}}$ and set $\mathcal{Q}$ to be the set obtained from $\mathcal{Q}'\setminus\{Q\}$ by adding $Q\setminus\{v\}$ if $Q\setminus\{v\}$ is non-empty.
\end{itemize}
Observe that $\mathcal{J}$ is a record at $t$.
 Let $(\mathcal{X}, \mathcal{P})$ be a legitimate pair with respect to $A_{t'}$ represented by $\mathcal{J}'$. 
Since $A_t=A_{t'}$, $(\mathcal{X}, \mathcal{P})$ is a legitimate pair with respect to $A_t$. As we obtained $\mathcal{J}$ from $\mathcal{J}'$ just by forgetting $v$ from $I'$ and $\mathcal{Q}'$, $\mathcal{J}$ represents the legitimate pair $(\mathcal{X}, \mathcal{P})$. This implies that $\mathcal{J}$ is a record at $t$. We add $\mathcal{J}$ to $\mathcal{R}^*$.

We claim that $\mathcal{R}^*=\mathcal{R}(t)$. By the above argument, $\mathcal{R}^*\subseteq \mathcal{R}(t)$. 
Let $\mathcal{J}\in\mathcal{R}(t)$. This means that $\mathcal{J}$ represents some legitimate pair $(\mathcal{X}, \mathcal{P})$ with respect to $A_t$.
Since $A_t=A_{t'}$, 
$(\mathcal{X}, \mathcal{P})$  is also a legitimate pair with respect to $A_{t'}$. So, a record $\mathcal{J}'$ representing $(\mathcal{X}, \mathcal{P})$ has been stored at $\mathcal{R}(t')$, and by the construction, $\mathcal{J}$ is computed from $\mathcal{J}'$. This shows that $\mathcal{R}(t)\subseteq \mathcal{R}^*$. 

Observe that $\abs{\mathcal{R}(t')}\le \zeta(w+1)$.
As we construct $\mathcal{R}(t)$ from $\mathcal{R}(t')$ by choosing a tuple in $\mathcal{R}(t')$  and modifying as above. This modification can be done in constant time. 
Thus, $\mathcal{R}(t)$ is computed in $2^{\mathcal{O}\left((w+k)\log(w(\alpha+k))\right)}$ time.

\medskip
\textbf{3) $t$ is an introduce node with child $t'$ such that $B_t=B_{t'}\cup\{v\}$.}

We construct a set $\mathcal{R}^*$ from $\mathcal{R}(t')$ as follows.
Let $\mathcal{J}'=(I',\mathcal{Q}',C'_1,C'_2,D'_1,D'_2,a',b')\in \mathcal{R}(t')$. 
For every $i\in \{0, 1, \ldots, 2k, 2k+1\}$ and $Q^*\in \mathcal{Q}'\cup \{\emptyset\}$ when $i=2k+1$, we construct a new tuple $\mathcal{J}=(I,\mathcal{Q},C_1,C_2,D_1,D_2,a,b)$ as follows:
\begin{itemize}
    \item $I(w)=\left\{ \begin{array}{ll}
    I'(w) & \textrm{if $w\in B_{t'}$}\\
    i & \textrm{if $w=v$}
    \end{array} \right. $,
    \item $\mathcal{Q}=\left\{ \begin{array}{ll}
    \mathcal{Q}'& \textrm{if $0\le I(v)\le 2k$}\\
   (\mathcal{Q}'\setminus\{Q^*\})\cup\{Q^*\cup\{v\}\} & \textrm{if $I(v)=2k+1$} 
    \end{array} \right. $,
     \item  for every $j\in[2k]$, 
    $C_1(j)=\left\{ \begin{array}{ll}
    C'_1(j) & \textrm{if $I(v)\neq j$}\\
    C'_1(j)+\gamma(v) & \textrm{if $I(v)=j$} 
    \end{array} \right. $,
    \item for every $Q\in \mathcal{Q}$, $C_2(Q)=\left\{ \begin{array}{ll}
    C'_2(Q) & \textrm{if $v\notin Q$}\\
    C'_2(Q\setminus \{v\})+\gamma(v) & \textrm{if $v\in Q$} 
    \end{array} \right. $,
    \item for every $j\in [2k]$, 
     $D_1(j)=\left\{ \begin{array}{ll}
    D'_1(j) & \textrm{if $I(v)\neq j$}\\
    D'_1(j)+\abs{\delta_G(\{v\},I^{-1}(0))} & \textrm{if $I(v)=j$} 
    \end{array} \right. $,
    \item for every $Q\in \mathcal{Q}$, $D_2(Q)=\left\{ \begin{array}{ll}
    D'_2(Q) & \textrm{if $v\notin Q$}\\
    D'_2(Q\setminus \{v\})+\abs{\delta_G(\{v\},I^{-1}(0))} & \textrm{if $v\in Q$} 
    \end{array} \right. $,
    \item $a=\left\{ \begin{array}{ll}
    a'+1 & \textrm{if $I(v)=0$}\\
    a' & \textrm{otherwise}
    \end{array}\right.$,
    \item $b=\left\{ \begin{array}{ll}
    b'+\abs{\delta_G\big(\{v\}, I^{-1}([2k]\setminus \{I(v)\})\big)  } & \textrm{if $1\le I(v)\le 2k$}\\
    b' & \textrm{otherwise}
    \end{array}\right.$.
\end{itemize}
We add this tuple to $\mathcal{R}^*$ whenever it is valid and 
\begin{itemize}
    \item if $1\le I(v)\le 2k$, then there is no edge between $v$ and $I^{-1}(2k+1)$ in $G$,
    \item if $I(v)=2k+1$, then there is no edge between $v$ and $I^{-1}(\{1, \ldots, 2k\})$ in $G$ and there is no edge between $v$ and $I^{-1}(2k+1)\setminus (Q^*\cup \{v\})$ in $G$.
\end{itemize}

We claim that $\mathcal{R}^*=\mathcal{R}(t)$. 

First we show that $\mathcal{R}^*\subseteq \mathcal{R}(t)$. Let  $\mathcal{J}$ be a valid tuple constructed as above from $\mathcal{J}'\in \mathcal{R}(t')$.
We have to show that $\mathcal{J}$ represents some legitimate pair with respect to $A_t$. Since $\mathcal{J}'\in \mathcal{R}(t')$, it represents a legitimate pair $(\mathcal{X}'=(X_0', \ldots, X_{2k}', Y'), \mathcal{P}')$ with respect to $A_{t'}$. 

If $0\le i\le 2k$, then we obtain $\mathcal{X}$ from $\mathcal{X}'$ by replacing $X_i'$ with $X_i'\cup \{v\}$ and set $\mathcal{P}=\mathcal{P}'$.
If $i=2k+1$, then we obtain $\mathcal{X}$ from $\mathcal{X}'$ by replacing $Y'$ with $Y'\cup \{v\}$ and 
\begin{itemize}
    \item adding $v$ to the part $P^*\in \mathcal{P}'$ with $P^*\cap B_t=Q^*$ when $Q^*\in \mathcal{Q}'$ or 
    \item adding a single part $\{v\}$ when $Q^*=\emptyset$,
    \end{itemize}
    to obtain a new partition $\mathcal{P}$ of $Y'\cup \{v\}$.
Then it is straightforward to verify that $(\mathcal{X}, \mathcal{P})$ is a legitimate pair with respect to $A_t$ and $\mathcal{J}$ represents it. This shows that $\mathcal{J}\in \mathcal{R}(t)$.

Now, we show that $\mathcal{R}(t)\subseteq\mathcal{R}^*$. Suppose that there is a record $\mathcal{J}=(I,\mathcal{Q},C_1,C_2,D_1,D_2,a,b)\in\mathcal{R}(t)$.
Then there is a legitimate pair $(\mathcal{X},\mathcal{P})$ represented by $\mathcal{J}$. Since a pair $(\mathcal{X}|_{B_{t'}},\mathcal{P}|_{B_{t'}})$ is legitimate, there is a record $\mathcal{J}'\in\mathcal{R}(t')$ which represents the pair. By the construction, $\mathcal{J}$ is computed from $\mathcal{J}'$. Hence $\mathcal{R}(t)\subseteq\mathcal{R}^*$.

In this case, we take one record in $\mathcal{R}(t')$ and construct a new tuple as explained above. After then, we check its validity. Note that $\abs{\mathcal{R}(t')}\le \zeta(w)$.
Checking the validity takes time $\mathcal{O}(w+k)$.
Hence, $\mathcal{R}(t)$ is computed in  $2^{\mathcal{O}((w+k)\log(w(\alpha+k)))}$ time.

\medskip
\textbf{4) $t$ is a join node with two children $t_1$ and $t_2$.}

We construct a set $\mathcal{R}^*$ from $\mathcal{R}(t_1)$ and $\mathcal{R}(t_2)$ as follows.

For two records $\mathcal{J}'=(I',\mathcal{Q}',C'_1,C'_2,D'_1,D'_2,a',b')\in \mathcal{R}(t_1)$ and $\mathcal{J}''=(I'',\mathcal{Q}'',C''_1,C''_2,D''_1,D''_2,a'',b'')\in \mathcal{R}(t_2)$ with $I'=I''$ and $\mathcal{Q}'=\mathcal{Q}''$, we construct a new tuple $\mathcal{J}=(I,\mathcal{Q},C_1,C_2,D_1,D_2,a,b)$ at $t$ as follows;
\begin{itemize}
    \item $I=I'=I''$,
    \item $\mathcal{Q}=\mathcal{Q}'=\mathcal{Q}''$,
    \item for all $i\in[2k]$, $C_1(i)=C'_1(i)+C''_1(i)-\gamma(I^{-1}(i))$,
    \item for all $Q\in\mathcal{Q}$, $C_2(Q)=C'_2(Q)+C''_2(Q)-\gamma(Q)$,
    \item  for all $i\in[2k]$, $D_1(i)=D'_1(i)+D''_1(i)-\abs{\delta_G(I^{-1}(i), I^{-1}(0))}$,
    \item for all $Q\in\mathcal{Q}$, $D_2(Q)=D'_2(Q)+D''_2(Q)-\abs{\delta_G(Q, I^{-1}(0))}$,
        \item $a=a'+a''-I^{-1}(0)$,
    \item $b=b'+b''-\sum_{\{i,j\}\in{[2k]\choose 2}}\abs{\delta_G(I^{-1}(i), I^{-1}(j))}$.
\end{itemize}
We add a new tuple to $\mathcal{R}^*$ when it is valid. 

We claim that $\mathcal{R}^*=\mathcal{R}(t)$. 

First show that $\mathcal{R}^*\subseteq \mathcal{R}(t)$.
Let $\mathcal{J}$ be a valid tuple constructed from $\mathcal{J}'$ and $\mathcal{J}''$ as above. Let $(\mathcal{X}'=(X_0', \ldots, X_{2k}', Y'), \mathcal{P}')$ and $(\mathcal{X}''=(X_0'', \ldots, X_{2k}'', Y''), \mathcal{P}'')$ be legitimate pairs with respect to $A_{t_1}$ and $A_{t_2}$, respectively. Note that since $\mathcal{Q}'=\mathcal{Q}''$, parts $P_1\in \mathcal{P}'$ and $P_2\in \mathcal{P}''$ intersect if and only if $P_1\cap B_t=P_2\cap B_t$. Let $\mathcal{P}$ be the family obtained from the disjoint union of $\mathcal{P}'$ and $\mathcal{P}''$ by merging two sets if they have a common vertex.  
Then $(\mathcal{X}, \mathcal{P})$ becomes a legitimate pair if the third to seventh conditions for being a legitimate pair hold. These conditions clearly hold when the tuple is valid. So, $\mathcal{J}\in \mathcal{R}(t)$.

Now, we show that $\mathcal{R}(t)\subseteq\mathcal{R}^*$. Suppose that there is a record $\mathcal{J}=(I,\mathcal{Q},C_1,C_2,D_1,D_2,a,b)\in\mathcal{R}(t)$. Since $\mathcal{J}$ is a record at $t$ and $B_t=B_{t_1}=B_{t_2}$, there are two records $(I',\mathcal{Q}',C'_1,C'_2,D'_1,D'_2,a',b')\in\mathcal{R}(t_1)$ and $(I'',\mathcal{Q}'',C''_1,C''_2,D''_1,D''_2,a'',b'')\in\mathcal{R}(t_2)$ with $I=I'=I''$ and $\mathcal{Q}=\mathcal{Q}'=\mathcal{Q}''$. By the constructions in each type of nodes, they are unique.

Note that $\mathcal{R}(t_1)$ and $\mathcal{R}(t_2)$ have at most $\zeta(w+1)$ tuples.
Merging two tuples takes time $\mathcal{O}(w+k)$, and checking the validity takes time $\mathcal{O}(w+k)$.
Therefore, $\mathcal{R}(t)$ is computed in $2^{\mathcal{O}\left((w+k)\log(w(\alpha+k))\right)}$ time.  

\medskip
Overall, this algorithm runs in $2^{\mathcal{O}\left((w+k)\log(w(\alpha+k))\right)}\abs{V(T)}$ time.
\end{proof}

Now, we provide an approximation algorithm for $\alpha$-edge-crossing width.

\begin{theorem}\label{thm:alphaecrwalgo}
    Given an $n$-vertex graph $G$ and two positive integers $\alpha$ and $k$, one can in time $2^{\mathcal{O}\left((\alpha+k)\log(\alpha+k)\right)}n^2$ either 
    \begin{itemize}
        \item output a tree-cut decomposition of $G$ with thickness at most $\alpha$ and crossing number at most $2\alpha^2+5k$, or
        \item correctly report that $\ecrw_\alpha(G)>k$.
    \end{itemize}
\end{theorem}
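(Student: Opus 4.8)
The plan is to mimic the recursive strategy of Kim, Oum, and Paul for tree-cut width, using \textsc{Constrained Star-Cut Decomposition} as the engine that ``peels off'' a heavy leaf bag. First I would maintain, throughout the recursion, a partial tree-cut decomposition of $G$ in which some leaf bag $X_q$ carries all vertices not yet decomposed; call this set $S=X_q$. Initially $S=V(G)$ and the tree is a single node. At each step, if $\abs{S}\le\alpha$, the current leaf bag is already small enough and we stop developing it. Otherwise, we apply the subalgorithm of Lemma~\ref{AlgoSC} to the instance $(G[S],\alpha,k,\gamma_S)$, where $\gamma_S(v)=\abs{\delta_G(\{v\},V(G)\setminus S)}$ counts the edges leaving $S$ at $v$. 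To run Lemma~\ref{AlgoSC} we need a bounded-width nice tree-decomposition of $G[S]$: by Lemma~\ref{lem:twecrw}, if $\ecrw_\alpha(G)\le k$ then $\tw(G)\le 3k+2\alpha-1$, and since $G[S]$ is a subgraph, $\tw(G[S])\le 3k+2\alpha-1$ as well; we invoke Korhonen's algorithm (Theorem~\ref{twalgo}) to get a tree-decomposition of width $\mathcal{O}(k+\alpha)$ and then Lemma~\ref{lem:tdtonicetd} to make it nice, all in $2^{\mathcal{O}(k+\alpha)}n$ time. If Korhonen's algorithm reports $\tw(G[S])>3k+2\alpha-1$, or if Lemma~\ref{AlgoSC} reports a No-instance, then by Lemma~\ref{lem:useSCrecur} we may safely output ``$\ecrw_\alpha(G)>k$''.

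When Lemma~\ref{AlgoSC} succeeds, it returns a star decomposition $(T',\{X'_t\})$ of $G[S]$ with center bag of size $\le\alpha$ and crossing number $\le k$, each leaf bag $X'_t$ satisfying $\gamma_S(X'_t)\le\alpha^2+2k$ and $\abs{\delta_{G[S]}(X'_t,X'_{t_c})}\le\alpha^2+k$, and no leaf bag equal to all of $S$. I would graft this star onto the current global decomposition: replace the leaf $q$ (with bag $S$) by the center node, whose bag becomes $X'_{t_c}$, and hang the leaves of $T'$ off it, turning each leaf bag $X'_t$ into a new ``undeveloped'' leaf. The key point for correctness of the crossing-number bound is that, for the center node, the edges crossing it in the \emph{global} decomposition are of two kinds: those internal to $S$ (bounded by $k$ because $T'$ has crossing number $\le k$) and those with an endpoint outside $S$; the latter pass through $\delta_G(S,V(G)\setminus S)$, and at the center this set is ``split'' among the center bag and the leaves. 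Using $\gamma_S(X'_t)\le\alpha^2+2k$ on each leaf and $\abs{X'_{t_c}}\le\alpha$ (each vertex having at most—well, we bound via $\delta_G(S,V(G)\setminus S)$ directly), one shows the crossing number at the grafted center is at most $2\alpha^2+5k$. A completely analogous bookkeeping argument — identical to how $\gamma_S$ and the hypothesis $\abs{\delta_G(S,V(G)\setminus S)}\le 2\alpha^2+4k$ feed into Lemma~\ref{lem:useSCrecur} — shows that every newly created leaf bag $X'_t$ again satisfies $\abs{\delta_G(X'_t,V(G)\setminus X'_t)}\le 2\alpha^2+4k$, so the invariant needed to reapply Lemma~\ref{lem:useSCrecur} at the next level is preserved. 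Thus the recursion is sound all the way down, and since no leaf bag ever equals the whole undeveloped set, each recursive call strictly shrinks the undeveloped bags, so after at most $n$ levels every remaining leaf bag has size $\le\alpha$; at that point the accumulated decomposition has thickness $\le\alpha$ and crossing number $\le 2\alpha^2+5k$, which is exactly what we output.

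For the running time: there are $\mathcal{O}(n)$ recursive calls (each one either terminates a branch or strictly decreases the total size of undeveloped bags, so the recursion tree has $\mathcal{O}(n)$ nodes), and each call runs Korhonen's algorithm and Lemma~\ref{lem:tdtonicetd} in $2^{\mathcal{O}(\alpha+k)}n$ time, then Lemma~\ref{AlgoSC} with $w=\mathcal{O}(\alpha+k)$ and $\abs{V(T)}=\mathcal{O}(wn)$, which costs $2^{\mathcal{O}((w+k)\log(w(\alpha+k)))}\cdot\mathcal{O}(wn)=2^{\mathcal{O}((\alpha+k)\log(\alpha+k))}n$. Multiplying by $\mathcal{O}(n)$ calls gives the claimed $2^{\mathcal{O}((\alpha+k)\log(\alpha+k))}n^2$ bound. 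I expect the main obstacle to be the bookkeeping in the grafting step: making precise exactly which edges cross the center node of the grafted star and verifying the numerical bounds $2\alpha^2+5k$ on the crossing number and $2\alpha^2+4k$ on the boundary of the new leaf bags, so that the invariant of Lemma~\ref{lem:useSCrecur} is genuinely maintained across levels and the bound $2\alpha^2+5k$ is not exceeded as decompositions are stitched together. The rest is a routine adaptation of the tree-cut-width approximation scheme.
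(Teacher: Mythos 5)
Your proposal is correct and follows essentially the same route as the paper: recursively peel off an oversized leaf bag via \textsc{Constrained Star-Cut Decomposition} (Lemma~\ref{AlgoSC}) on a nice tree-decomposition obtained from Theorem~\ref{twalgo}, Lemma~\ref{lem:tdtonicetd} and Lemma~\ref{lem:twecrw}, justify No-answers by Lemma~\ref{lem:useSCrecur}, maintain the invariant that internal nodes have bag size at most $\alpha$ and crossing number at most $2\alpha^2+5k$ while leaf bags have boundary at most $2\alpha^2+4k$, and terminate in at most $n$ grafting steps. The bookkeeping you flag as the main obstacle is exactly the paper's short computation and does go through: at the grafted center the crossing edges are the at most $k$ internal crossings plus the at most $2\alpha^2+4k$ edges of $\delta_G(S,V(G)\setminus S)$, and for a new leaf $X'_t$ one has $\abs{\delta_G(X'_t,V(G)\setminus X'_t)}\le(\alpha^2+2k)+(\alpha^2+k)+k=2\alpha^2+4k$.
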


\begin{proof}
We recursively apply the algorithm for \textsc{Constrained Star-Cut Decomposition} as follows. At the beginning, we consider a trivial tree-cut decomposition with one bag containing all the vertices. In the recursive steps, we assume that we have a tree-cut decomposition $\mathcal{T}=(T, \{X_t\}_{t\in V(T)})$ such that  
\begin{enumerate}[(i)]
    \item for every internal node $t$ of $T$, $\abs{X_t}\le \alpha$ and $\cross_{\mathcal{T}}(t)\le 2\alpha^2+5k$, 
    \item for every leaf node $t$ of $T$, $\abs{\delta_G(X_t, V(G)\setminus X_t)}\le 2\alpha^2+4k$.
\end{enumerate}
If all leaf bags have size at most $\alpha$, then this decomposition has thickness at most $\alpha$ and crossing number at most $2\alpha^2+5k$. Thus, we may assume that there is a leaf bag $X_\ell$ having at least $\alpha+1$ vertices.

We apply  Theorem~\ref{twalgo} for $G[X_{\ell}]$ with $w=3k+2\alpha-1$. Then in time $2^{\mathcal{O}(k+\alpha)}n$, either we have a tree-decomposition of width at most $2(3k+2\alpha-1)+1=6k+4\alpha-1$ or we report that $\tw(G)\ge \tw(G[X_\ell])>3k+2\alpha-1$.
In the latter case, by Lemma~\ref{lem:twecrw}, we have $\ecrw_\alpha(G)>k$. Thus, we may assume that we have a tree-decomposition of $G[X_{\ell}]$ of width at most $6k+4\alpha-1$. By applying Lemma~\ref{lem:tdtonicetd}, we can find a nice tree-decomposition $(F, \{B_t\}_{t\in V(F)})$ of $G[X_{\ell}]$ of width at most $6k+4\alpha-1$ with $\abs{V(F)}=\mathcal{O}((k+\alpha)n)$ in time $\mathcal{O}((k+\alpha)^3 n)$. 

We define $\gamma$ on $X_{\ell}$ so that $\gamma(v)=\abs{\delta_G(\{v\}, V(G)\setminus X_\ell)}$. We run the algorithm in Lemma~\ref{AlgoSC} for the instance $(G[X_\ell], \alpha, k, \gamma)$. Then in time $2^{\mathcal{O}\left((\alpha+k)\log(\alpha+k)\right)}\abs{V(F)}$, one can either output a solution of $(G[X_{\ell}], \alpha, k, \gamma)$, or correctly report that $(G[X_{\ell}], \alpha, k, \gamma)$ is a No-instance. In the latter case, by Lemma~\ref{lem:useSCrecur}, we have $\ecrw_\alpha(G)>k$. 
In the former case, 
let $\mathcal{T}^*=(T^*, \{Y_t\}_{t\in V(T^*)})$ be the outcome, where $q_c$ is the center of $T^*$ and $q_1, \ldots, q_m$ are the leaves of $T^*$.
Then we modify the tree-cut decomposition $\mathcal{T}$ by replacing $X_{\ell}$ with $Y_{q_c}$ and then attaching bags $Y_{q_i}$ to $Y_{q_c}$, where corresponding nodes are $q_c$ and $q_1, \ldots, q_m$. 
Let $\mathcal{T}'$ be the resulting tree-cut decomposition. 

Observe that $\abs{Y_{q_c}}\le \alpha$ and $\cross_{\mathcal{T}^*}(q_c)\le k$. So, we have $\cross_{\mathcal{T}'}(q_c)\le k+(2\alpha^2+4k)=2\alpha^2+5k$. 
Also, for each $i\in [m]$, we have 
\begin{align*}
\abs{\delta_G(Y_{q_i}, V(G)\setminus Y_{q_i})}&\le \abs{\delta_G(Y_{q_i}, V(G)\setminus X_{\ell})}+ \abs{\delta_G(Y_{q_i}, Y_{q_c})}+\cross_{\mathcal{T}^*}(q_c) \\
&\le \gamma(Y_{q_i})+ \abs{\delta_G(Y_{q_i}, Y_{q_c})}+\cross_{\mathcal{T}^*}(q_c) \\
&\le (\alpha^2+2k)+(\alpha^2+k)+k=2\alpha^2+4k. 
\end{align*}
Therefore, we obtain a refined tree-cut decomposition with properties (i) and (ii). Note that by the last condition of the solution for \textsc{Constraint Star-Cut Decomposition}, new leaf bags have size less than $X_{\ell}$.
Thus, the algorithm will terminate in at most $n$ recursive steps.
When this procedure terminates, we either 
obtain a tree-cut decomposition of $G$ of thickness at most $\alpha$ and crossing number at most $2\alpha^2+5k$ or,
    conclude that $\ecrw_\alpha(G)>k$.
    
    The total running time is $(2^{\mathcal{O}\left((\alpha+k)\log(\alpha+k)\right)}n)\cdot n=2^{\mathcal{O}\left((\alpha+k)\log(\alpha+k)\right)}n^2$.
\end{proof}

\section{Algorithmic applications on coloring problems}\label{sec:applications}

In this section, we prove Theorem~\ref{thm:listcoloring}, which is split into Theorem~\ref{thm:lcalpha} and Corollary~\ref{cor:precoloring}.
In Theorem~\ref{thm:lcalpha}, we prove that \textsc{List Coloring} is fixed-parameter tractable when parameterized by $\alpha$-edge-crossing width, for every fixed $\alpha$. In Corollary~\ref{cor:precoloring}, we show that \textsc{Precoloring Extension} is also fixed parameter tractable parameterized by $\alpha$-edge-crossing width.

A vertex-coloring $f:V(G)\to \mathbb{N}$ on a graph $G$ is said to be \emph{proper} if $f(u)\neq f(v)$ for all edges $uv\in E(G)$. For a given set $\{L(v)\subseteq \mathbb{N}:v\in V(G)\}$, a coloring $c:V(G)\to \mathbb{N}$ is called an \emph{$L$-coloring} if $c(v)\in L(v)$ for all $v\in V(G)$.

\noindent
\fbox{\parbox{0.97\textwidth}{
	\textsc{List Coloring}\\
	\textbf{Input :} A graph $G$ and a set of lists $\mathcal{L}=\{L(v)\subseteq \mathbb{N}:v\in V(G)\}$
	\\
	\textbf{Question :} Does $G$ admit a proper $L$-coloring $c:V(G)\to\bigcup\mathcal{L}$? }}
\medskip

\noindent
\fbox{\parbox{0.97\textwidth}{
	\textsc{Precoloring Extension}\\
	\textbf{Input :} A graph $G$, a subset $S$ of $V(G)$, a positive integer $q$, and a proper coloring $c_S$ from $G[S]$ to $[q]$
	\\
	\textbf{Question :} Does $G$ admit a proper coloring $c:V(G)\to [q]$ with $c(v)=c_S(v)$ for all $v\in S$? }}
\medskip

We provide a sketch of the proof for Theorem~\ref{thm:lcalpha}. 

Assume that a tree-cut decomposition $(T, \mathcal{X})$ of the input graph $G$ of thickness at most $\alpha$ and crossing number $w$ is given.  
Note that we can obtain a decomposition of width $w=2\alpha^2+5\ecrw_\alpha(G)$ using Theorem~\ref{thm:approxalpha}. We consider it as a rooted decomposition. Let $t$ be a node of $T$, and let $G_t$ be the graph induced by the union of all $X_{t'}$ where $t'$ is a descendant of $t$. As $(T, \mathcal{X})$ has crossing number $w$, there are at most $w+\alpha$ vertices in $G_t$ having a neighbor in $V(G)\setminus V(G_t)$. By symmetry, also, there are at  most $w+\alpha$ vertices in $V(G)\setminus V(G_t)$ having a neighbor in $V(G_t)$.

    If we store all possible remaining colorings on these $w+\alpha$ boundaried vertices, then in the worst case, we need to store $n^{w+\alpha}$ many colorings, which is not helpful to obtain a fixed parameter algorithm.  We show  in Lemma~\ref{lem:compatible} that, using a representative set technique, the number of colorings to store can be reduced to $g(w+\alpha)$ for some function $g$, which does not depend on the size of the graph. This is one of the key ideas.

    Now, let $t$ be a node of $T$ and let $t_1, \ldots, t_m$ be its children. Because $(T, \mathcal{X})$ has crossing number $w$, there is a set $I_1$ of at most $2w$ integers $i$ in $\{1, 2, \ldots, m\}$ for which 
    there is an edge that is incident with both $V(G_{t_i})$ and $V(G)\setminus V(G_{t_i})\setminus X_t$. Let $I_2=\{1, 2, \ldots, m\}\setminus I_1$. For $j\in I_2$, we know that the set of neighbors of vertices in $V(G_{t_j})$ are all contained in $V(G_{t_j})\cup X_t$.  
    
    We first enumerate all possible colorings on $X_t$ in time $\mathcal{O}(n^\alpha)$, and then compare with stored colors for $G_{t_j}$ with $j\in I_2$. We only remain a coloring of $X_t$ such that for every $j\in I_2$, there is at least one coloring of $G_{t_j}$ that is compatible with this coloring on $G_{t_j}$. Once we have done this process, we can forget about graphs $G_{t_j}$ with $j\in I_2$. 
    
    In the next step, we need to consider graphs $G_{t_i}$ with $i\in I_1$.
    We consider all possible combinations of a coloring in $X_t$ and stored colorings for $G_{t_j}$ with $j\in I_1$, and remain only compatible colorings. Since the size of $I_1$ is at most $2w$ and in each $G_{t_i}$ we stored $g(w+\alpha)$ many colorings, this process will be done in FPT time. At the end, we apply the representative set technique for remaining colorings to reduce the total number of colorings on boundaried vertices.

\medskip
 We prove lemmas concerning the representative set technique.
For $V\in \mathbb{N}^q$, $W\in \mathbb{N}^t$, and $B\subseteq [q]\times [t]$,
we say that $(V, W)$ is \emph{$B$-compatible} if $V[i]\neq W[j]$ for every $(i, j)\in B$. When we have a vertex partition $(X, Y)$ of a graph $G$, possible colorings on boundaried vertices in $X$ and $Y$ will be related to vectors $V$ and $W$. 

\begin{lemma}\label{lem:compatible}
 Let $q$ and $t$ be positive integers,  and let $B\subseteq [q]\times [t]$.
 For every set $\mathcal{P}$ of distinct vectors in $\mathbb{N}^q$,
 there is a subset $\mathcal{P}^*$ of $\mathcal{P}$ of size at most $2^{\frac{q(q+1)}{2}}t^{q-1}(t+1)$ satisfying that  
 \begin{itemize}
     \item[($\ast$)] for every $W\in \mathbb{N}^{t}$, if there is $V\in \mathcal{P}$ where $(V, W)$ is $B$-compatible, then there is $V^*\in \mathcal{P}^*$ where $(V^*, W)$ is $B$-compatible.
 \end{itemize}
 Furthermore, such a set $\mathcal{P}^*$ can be computed in time $\mathcal{O}(\abs{\mathcal{P}} 2^{q^2} t^{q+2})$.
\end{lemma}
\begin{proof}
Let $g(1, t)=t+1$ and for $q>1$, let 
\[g(q, t)=t \cdot 2^q \cdot g(q-1, t). \]
Observe that $g(q, t)\le 2^{\frac{q(q+1)}{2}}t^{q-1}(t+1)$.

We prove by induction on $q$ that for every set $\mathcal{P}$ of distinct vectors in $\mathbb{N}^q$,
there is a subset $\mathcal{P}^*$ of $\mathcal{P}$ of size at most $g(q, t)$ satisfying $(\ast)$. This will prove the lemma.

First assume that $q=1$. If $\abs{\mathcal{P}}\ge t+1$, then let $\mathcal{P}^*$ be any subset of $\mathcal{P}$ of size $t+1$ and otherwise, let $\mathcal{P}^*=\mathcal{P}$. 
When $\mathcal{P}^*=\mathcal{P}$, $(\ast)$ is clearly satisfied. Assume that $\abs{\mathcal{P}}\ge t+1$ and there are $W\in \mathbb{N}^t$ and $V\in \mathcal{P}$ such that $(V, W)$ is $B$-compatible. As $\abs{\mathcal{P}^*}=t+1$, there is $V^*\in \mathcal{P}^*$ whose element does not appear in $W$. So, $(V^*, W)$ is $B$-compatible.  Thus, the condition $(\ast)$ is satisfied.

Now, we assume that $q>1$.

Let $S=\{(i,i):i\in [q]\}$.
We first greedily find a maximal set $\mathcal{B}$ of pairwise $S$-compatible elements in $\mathcal{P}$, that is up to the size $t+1$. We can construct such a set in time $\mathcal{O}(\abs{\mathcal{P}} q^2t^2)$. If $\abs{\mathcal{B}}=t+1$, then for every $W\in \mathbb{N}^t$, we can always find an element $V^*\in \mathcal{B}$ that is $B$-compatible with $W$. So, we can set $\mathcal{P}^*=\mathcal{B}$. Thus, we may assume that $\abs{\mathcal{B}}\le t$. As $\mathcal{B}$ is maximal, we have the property that 
\begin{itemize}
    \item for every $V\in \mathcal{P}\setminus \mathcal{B}$, there exists $B\in \mathcal{B}$ where $V[i]=B[i]$ for some $i\in [q]$. 
\end{itemize}

Now, for each $B\in \mathcal{B}$ and each non-empty set $I\subseteq [q]$, we define $C[B, I]$ as the set of all vectors $Z$ in $\mathcal{P}\setminus \mathcal{B}$ such that 
\begin{itemize}
    \item for all $i\in I$, $B[i]=Z[i]$, and 
    \item for all $i\in [q]\setminus I$, $B[i]\neq Z[i]$.
\end{itemize}

For each set $C[B, I]$, 
we compute a subset $C^*[B, I]$ satisfying the property $(\ast)$ for restrictions on vectors on the coordinates in $[q]\setminus I$. That is, we obtain a set by applying the induction hypothesis to $\{V|_{[q]\setminus I}:V\in C[B, I]\}$, and take the corresponding subset of $C[B, I]$. By induction, each set $C^*[B, I]$ has size at most $g(q-\abs{I},t)\le g(q-1,t)$.
Let 
\[\mathcal{P}^*=\mathcal{B}\cup \left(\bigcup_{B\in \mathcal{B}, \emptyset\subsetneq I\subseteq [q]} C^*[B, I]\right).\]

We claim that $\mathcal{P}^*$ satisfies the property $(\ast)$.
Note that 
\[\abs{\mathcal{P}^*}\le t+ t \cdot (2^q-1) \cdot g(q-1, t)\le t \cdot 2^q \cdot g(q-1, t).   \]

Let $W\in \mathbb{N}^t$ and assume that there is $V\in \mathcal{P}$ where $(V, W)$ is $B$-compatible. 
If $V$ is in $\mathcal{P}^*$, then there is nothing to prove. Thus, we may assume that $V\notin \mathcal{P}^*$. It means that $V\in C[B,I]\setminus C^*[B, I]$ for some $B\in \mathcal{B}$ and $\emptyset\neq I\subseteq [q]$.
By the construction of $C^*[B, I]$, there is $V^*\in C^*[B, I]$ where $(V^*|_{[q]\setminus I}, W)$ is $\big(B\cap (([q]\setminus I)\times [t])\big)$-compatible. As $V|_I=V^*|_I$, the pair $(V^*|_I, W)$ is also  $\big(B\cap (I\times [t])\big)$-compatible. Therefore, $(V^*, W)$ is $B$-compatible.

Now, we analyze the running time. When $q=1$, the set $\mathcal{P}^*$ is computed in time $\mathcal{O}(\abs{\mathcal{P}})$.  For $q>1$, we first compute $\mathcal{B}$ in time $\mathcal{O}(\abs{\mathcal{P}}qt^2)$, and then for every $B\in \mathcal{B}$ and non-empty set $I\subseteq [q]$, we apply the induction hypothesis with smaller $q$. It implies that the whole algorithm runs in time $\mathcal{O}(\abs{\mathcal{P}}\cdot (qt^2) \cdot (2^q qt)^q))=\mathcal{O}(\abs{\mathcal{P}} 2^{q^2} t^{q+2})$.
\end{proof}

Let $G$ be a graph. For disjoint sets $S, T$ of vertices in $G$ and functions $g:S\to \mathbb{N}$ and $h:T\to \mathbb{N}$, we say that $(S, g)$ is \emph{compatible} with $(T, h)$ if for every edge $vw$ with $v\in S$ and $w\in T$, $g(v)\neq h(w)$. If $S$ and $T$ are clear from the context, we simply say that $g$ and $h$ are compatible.

Using Lemma~\ref{lem:compatible}, we can show the following.
\begin{lemma}\label{lem:compgrph}
    Let $q$ and $t$ be positive integers. 
    Let $G$ be a graph and let $S$ and $T$ be disjoint sets of vertices in $G$ such that $\abs{\{v\in S:N_G(v)\cap T\neq\emptyset\}}\le q$ and $\abs{\{v\in T:N_G(v)\cap S\neq\emptyset\}}\le t$.
    Then for every set $\mathcal{F}$ of proper colorings of $G[S]$, there is a subset $\mathcal{F}^*$ of $\mathcal{F}$ of size at most $2^{\frac{q(q+1)}{2}}t^{q-1}(t+1)$ such that \begin{itemize}
        \item for every proper coloring $h$ of $G[T]$, if there is a pair $(S, g)$ with $g\in \mathcal{F}$ that is compatible with $(T, h)$, then there is a pair $(S, g^*)$ with $g^*\in \mathcal{F}^*$ that is compatible with $(T, h)$. 
    \end{itemize}
     Furthermore, such a set $\mathcal{F}^*$ can be computed in time $\mathcal{O}(\abs{\mathcal{F}} 2^{q^2} t^{q+2})$.
\end{lemma}
\begin{proof}
    Let $bd(S)=\{v\in S:N_G(v)\cap T\neq\emptyset\}=\{s_1, s_2, \ldots, s_x\}$ and $bd(T)=\{v\in T:N_G(v)\cap S\neq\emptyset\}=\{t_1, t_2, \ldots, t_y\}$. Observe that $(S, g)$ is compatible with $(T, h)$ if and only if $(bd(S), g|_{bd(S)})$ is compatible with $(bd(T), h|_{bd(T)})$.
    
    For each function $g:S\to \mathbb{N}$, we consider a vector $A_g=(g(s_1), g(s_2), \ldots, g(s_x))$, and similarly, for each function $h:T\to \mathbb{N}$, we consider a vector $B_h=(h(t_1), h(t_2), \ldots, h(t_y))$. Note that 
    $(bd(S), g|_{bd(S)})$ is compatible with $(bd(T), h|_{bd(T)})$ if and only if 
    $(A_g, B_h)$ is $U$-compatible where $U=\{(i, j):s_it_j\in E(G), s_i\in S, t_j\in T\}$.
    Thus, we can apply Lemma~\ref{lem:compatible} to the set of vectors $A_g$ for $g\in \mathcal{F}$ to obtain the required set $\mathcal{F}^*$.
\end{proof}

\begin{theorem}\label{thm:lcalpha}
    For a fixed positive integer $\alpha$, the \textsc{List Coloring} problem is fixed parameter tractable parameterized by $\alpha$-edge-crossing width.
\end{theorem}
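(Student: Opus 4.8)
The plan is a bottom-up dynamic program over a tree-cut decomposition of bounded thickness and crossing number, storing at each node a bounded-size \emph{representative} family of partial colorings obtained via Lemma~\ref{lem:compgrph}. First, running Theorem~\ref{thm:alphaecrwalgo} for $k=1,2,\ldots$ and stopping at the first $k$ for which it does not report $\ecrw_\alpha(G)>k$, we obtain in time $2^{\mathcal{O}((\alpha+k)\log(\alpha+k))}n^2$ a tree-cut decomposition $\mathcal{T}=(T,\{X_t\}_{t\in V(T)})$ of thickness at most $\alpha$ and crossing number at most $w:=2\alpha^2+5k$ with $k\le\ecrw_\alpha(G)$; thus $w$ is bounded by a function of $\alpha$ and $\ecrw_\alpha(G)$, and we may assume $\abs{V(T)}=\mathcal{O}(n)$. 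We root $T$ at a node $r$ and, for $t\in V(T)$, let $G_t$ be the subgraph of $G$ induced by the union of the bags at $t$ and its descendants. Using $\cross_\mathcal{T}(\cdot)\le w$ and thickness $\le\alpha$, one checks that $\partial_t:=\{v\in V(G_t):N_G(v)\setminus V(G_t)\neq\emptyset\}$ and the analogous set on the complement side each have size $\mathcal{O}(w+\alpha)$ --- a vertex of $G_t$ outside $X_t$ with a neighbour outside $V(G_t)$ is an endpoint of an edge crossing $X_t$, and symmetrically on the other side, plus at most $\alpha$ vertices from the bag itself. Finally, a standard preprocessing lets us assume $\abs{L(v)}\le n$ for all $v$ (if $\abs{L(v)}>n\ge\deg_G(v)+1$, replace $L(v)$ by an arbitrary $n$-subset; any proper $L$-coloring survives after recolouring such vertices one by one within their new lists), so $G[X_t]$ has at most $n^\alpha$ proper $L$-colorings, all of which we can enumerate.

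\emph{The dynamic program.} For each node $t$ we compute a set $\mathcal{R}(t)$ of proper $L$-colorings of $G_t$, recorded as their restrictions to $\partial_t$, with the invariant: \emph{(a)} every coloring in $\mathcal{R}(t)$ extends to an actual proper $L$-coloring of $G_t$, and \emph{(b)} for every proper $L$-coloring $h$ of $G-V(G_t)$, if some proper $L$-coloring of $G_t$ is compatible with $h$, then some member of $\mathcal{R}(t)$ is compatible with $h$. Applying Lemma~\ref{lem:compgrph} with the two boundaries, both of size $\mathcal{O}(w+\alpha)$, keeps $\abs{\mathcal{R}(t)}$ bounded by a function of $w$ and $\alpha$. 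For a leaf $t$ we enumerate all (at most $n^\alpha$) proper $L$-colorings of $G[X_t]=G_t$, restrict to $\partial_t$, and shrink by Lemma~\ref{lem:compgrph}. For an internal node $t$ with children $t_1,\ldots,t_m$ we split them into $I_1=\{i:$ some edge joins $V(G_{t_i})$ and $V(G)\setminus(V(G_{t_i})\cup X_t)\}$ and $I_2=\{1,\ldots,m\}\setminus I_1$. Every such edge crosses $X_t$, so $\abs{I_1}\le 2w$; and for $j\in I_2$ every neighbour of $V(G_{t_j})$ lies in $V(G_{t_j})\cup X_t$, with no edge between two such $G_{t_j}$ or between a $G_{t_j}$ ($j\in I_2$) and a $G_{t_i}$ ($i\in I_1$). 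We then: enumerate the $\le n^\alpha$ proper $L$-colorings $c$ of $G[X_t]$; for each $c$, form all tuples $(g_i)_{i\in I_1}$ with $g_i\in\mathcal{R}(t_i)$ that are pairwise compatible and compatible with $c$ (at most $\abs{\mathcal{R}}^{2w}$ per $c$); discard $c\cup\bigcup_{i\in I_1}g_i$ unless for every $j\in I_2$ some member of $\mathcal{R}(t_j)$ is compatible with $c$ (this test is exact, since each member of $\mathcal{R}(t_j)$ is a genuine proper $L$-coloring of $G_{t_j}$ and $G_{t_j}$ touches the rest only through $X_t$); restrict the survivors to $\partial_t$, and shrink by Lemma~\ref{lem:compgrph} to get $\mathcal{R}(t)$. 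A survivor is the $\partial_t$-restriction of a genuine proper $L$-coloring of $G_t$, namely $c$ together with full extensions of the $g_i$ and, for each $j\in I_2$, a compatible member of $\mathcal{R}(t_j)$ extended to $G_{t_j}$; nothing conflicts, so invariant~\emph{(a)} holds. For invariant~\emph{(b)}, given a proper $L$-coloring $f$ of $G_t$ compatible with a proper $L$-coloring $h$ of $G-V(G_t)$, processing $i\in I_1$ one at a time and using invariant~\emph{(b)} at $t_i$ on the current global coloring lets us reroute $f$ onto such a tuple $(g_i)_{i\in I_1}$ while remaining a proper $L$-coloring of all of $G$; the resulting survivor restricted to $\partial_t$ is then compatible with $h$, and Lemma~\ref{lem:compgrph} preserves this. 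At the root, $\partial_r=\emptyset$, so $\mathcal{R}(r)\neq\emptyset$ if and only if $G$ admits a proper $L$-coloring.

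\emph{Running time and the main obstacle.} Each node performs $\mathcal{O}(n^\alpha)$ work times $(m+1)$ times a factor depending only on $w$ and $\alpha$, and $\sum_{t}(\text{number of children of }t)=\mathcal{O}(n)$, so the dynamic program runs in $n^{\mathcal{O}_\alpha(1)}$ time times a function of $\alpha$ and $\ecrw_\alpha(G)$; together with the decomposition phase this is an FPT algorithm for fixed $\alpha$. The step I expect to be the main obstacle is the verification of invariant~\emph{(b)} under the combine-filter-restrict operation: one must instantiate Lemma~\ref{lem:compgrph} with precisely the right bounded boundary sets and argue that families which are individually representative with respect to the \emph{full} complements of the $V(G_{t_i})$ can be glued along $X_t$ without losing representativeness --- essentially a compositionality statement for representative sets in the presence of the crossing edges of $X_t$. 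The supporting structural bound on $\abs{\partial_t}$ (via $\cross_\mathcal{T}\le w$ and thickness $\le\alpha$) and the reduction to lists of size at most $n$ are the remaining routine ingredients.
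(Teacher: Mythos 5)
Your proposal is correct and follows essentially the same route as the paper: approximate the $\alpha$-edge-crossing width via Theorem~\ref{thm:alphaecrwalgo}, root the resulting tree-cut decomposition, observe that the number of boundary vertices at each subtree is $\mathcal{O}(\alpha+w)$, and do bottom-up dynamic programming storing a representative family of boundary colorings shrunk via Lemma~\ref{lem:compgrph}, with the children split into the $\le 2w$ ``crossing'' children handled by brute-force combination and the remaining ``local'' children filtered against $X_t$-colorings; your one-at-a-time rerouting argument for invariant~(b) is the same idea as the paper's correctness claim. The only cosmetic difference is the preprocessing: the paper iteratively deletes any vertex $v$ with $\abs{L(v)}>\deg_G(v)$, which then makes the leaf-bag count $(w+2\alpha-1)^\alpha$ trivially bounded without invoking Lemma~\ref{lem:compgrph} at leaves, whereas you truncate lists to size $n$ and apply the lemma there too; both are fine.
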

\begin{proof}
We describe the algorithm for connected graphs. 
If a given graph is disconnected, then we can apply the algorithm for each component.
We assume that a given graph $G$ is connected.

Assume that there is a vertex $v$ where the list $L(v)$ has size more than the degree of $v$. Then after having a proper $L$-coloring $G-v$, we can extend it to a proper $L$-coloring of $G$ by selecting a color in $L(v)$ that does not appear in the neighborhood of $v$. Since $\ecrw_\alpha(G-v)\le \ecrw_\alpha(G)$, we can apply the algorithm for $G-v$ and then extend to $G$.
Therefore, we may assume that 
each list $L(v)$ has size at most the degree of $v$.

Let $\ecrw_\alpha(G)=k$.
Using the algorithm in Theorem~\ref{thm:alphaecrwalgo}, we obtain a tree-cut decomposition $\mathcal{T}=(T,\mathcal{X}=\{X_t\}_{t\in V(T)})$ of the input graph $G$ of thickness at most $\alpha$ and crossing number $w\le 2\alpha^2+5k$.
We consider it as a rooted decomposition by choosing a root node $r$ with $X_r\neq\emptyset$.

For every node $t\in V(T)$, we denote by $T_t$ the subtree of $T$ rooted at $t$, and let $G_t=G[\bigcup_{v\in V(T_t)}X_v]$.
For every node $t\in V(T)$, the \emph{boundary} $\partial(t)$ of $T_t$ is a graph $H$ where $E(H)$ is the set of edges incident with both $V(G_t)$ and $V(G)\setminus V(G_t)$, and $V(H)$ is the set of vertices in $G$ incident with an edge in $E(H)$. 
One can observe that $|E(\partial(t))|\le \alpha^2+2w$ for any $t\in V(T_t)$.
Let $\hat{\partial}(t):= (V(\partial(t))\cap V(G_t))\cup X_t$.
Note that $\abs{\hat{\partial}(t)}\le \alpha+w$  for any node $t$ of $T$.

Let $t\in V(T)$. A coloring $g$ on $\hat{\partial}(t)$ is \emph{valid at $t$} if there is a proper $L$-coloring $f$ of $G_t$ for which $f|_{\hat{\partial}(t)}=g$. Clearly, the problem is a Yes-instance if and only if there is a valid coloring at the root node.

Let
$\zeta=\max\{2^{\frac{(\alpha+w)(\alpha+w+1)}{2}}(\alpha+w)^{\alpha+w-1} (\alpha+w+1),(w+2\alpha-1)^\alpha\}.$

For each node $t\in V(T)$, let $Q[t]$ be the set of all valid colorings at $t$.
We will recursively construct a subset $Q^*[t]\subseteq Q[t]$  of size at most $\zeta$ satisfying that 
\begin{itemize}
    \item[($\ast$)] for every proper $L$-coloring $h$ on $G-V(G_t)$, if there is a valid coloring $g\in Q[t]$ compatible with $h$, then there is a valid coloring $g^*\in Q^*[t]$ compatible with $h$.
\end{itemize}
We describe how to construct $Q^*[t]$ depending on whether $t$ is a non-root leaf or not.
Let $t_p$ be the parent of $t$ when $t$ is not the root.

\medskip
\textbf{Case 1. $t$ is a non-root leaf.}

In this case, $G_t=G[X_t]$. 
Note that the degree of a vertex $v$ of $X_t$ in $G$ is at most $(\abs{X_t}-1)+\abs{X_{t_p}}+w\le w+2\alpha-1$, because there are at most $w$ edges that are crossing $X_{t_p}$.
Thus, $|L(v)|\le w+2\alpha-1$ for every $v\in X_t$ and this implies that $\abs{Q[t]}\le (w+2\alpha-1)^\alpha$. We take $Q^*[t]:=Q[t]$. This can be computed in time $\mathcal{O}(\zeta)$.

\medskip
\textbf{Case 2. $t$ is not a non-root leaf.}

We classify the children of $t$ into two types. Let $A_1$ be the set of all children $p$ of $t$ such that $V(\partial(p))\setminus \hat{\partial}(p)\subseteq X_t$, and let $A_2$ be the set of all other children of $t$. 
Note that $A_2$ is exactly the set of children $p$ of $t$ where $G_p$ is incident with some edge that crosses $X_t$.
Thus, $\abs{A_2}\le 2w$ because $(T, \mathcal{X})$ has crossing number at most $w$.
We assume that $Q^*[x]$ is computed for every child $x$ of $t$.
Let $C[t]$ be the set of all proper $L$-colorings on $X_t$. Clearly, $\abs{C[t]}\le n^\alpha$.

\medskip
(Step 1.) We first find the set $C'[t]$ of all proper $L$-colorings $f$ such that 
for each $x\in A_1$, there exists $g_x\in Q^*[x]$ that is compatible with $f$. This can be checked by recursively choosing $x\in A_1$, and comparing each coloring in $C[t]$ with a coloring in $Q^*[x]$, and then remaining one that has a compatible coloring in $Q^*[x]$. 
For fixed $x\in A_1$, this runs in time $\mathcal{O}(\abs{Q^*[x]}\cdot 
n^\alpha\cdot (\alpha+w)^2)$, and therefore, the whole procedure runs in time $\mathcal{O}(\abs{A_1}\cdot \abs{Q^*[x]}\cdot n^\alpha\cdot (\alpha+w)^2)=\mathcal{O}(\zeta \cdot n^{\alpha+1}\cdot (\alpha+w)^2 )$, because each $Q^*[x]$ has the size at most $\zeta$ and $\abs{A_1}\le n$.

\medskip
(Step 2.) Next, we compute the set $I[t]$ of all tuples $U$ in $\prod_{x\in A_2}Q^*[x]$ such that for all distinct $x,y\in A_2$, $U(x)$ and $U(y)$ are compatible, where $U(x)$ denotes the coordinate of $U$ that comes from $Q^*[x]$. 
Since $\abs{A_2}\le 2w$, we have $\abs{\prod_{x\in A_2}Q^*[x]}\le\zeta^{2w}$.
The set $I[t]$ can be computed in time $\mathcal{O}(\zeta^{2w}\cdot w^2\cdot  (\alpha+w)^2)$.

\medskip
(Step 3.) Lastly, we construct $Q'[t]$ from $I[t]$ and $C'[t]$ as follows.
For every $U\in I[t]$ and every $g\in C'[t]$ where $U(x)$ and $g$ are compatible for all $x\in A_2$, we obtain a new function $g'$ on $\hat{\partial}(t)$ such that 
\begin{itemize}
\item $g'(v)=(U(x))(v)$ if $v\in \hat{\partial}(x)$ for some $x\in A_2$, and \item $g'(v)=g(v)$ if $v\in X_t$,
\end{itemize}
and add it to $Q'[t]$.
This can be done in time \[\mathcal{O}(\abs{I[t]}\cdot \abs{C'[t]}\cdot (\alpha(\alpha+w))^{2w})=\mathcal{O}(\zeta^{2w}\cdot n^{\alpha}\cdot (\alpha(\alpha+w))^{2w}).\]
This stores valid colorings at $t$ and the size of $Q'[t]$ is at most $\abs{I[t]}\times \abs{C'[t]}$.
Using Lemma~\ref{lem:compgrph}, we find a subset $Q^*[t]$ of $Q'[t]$ of size at most $\zeta$.
    This can be computed in time \[\mathcal{O}(\abs{Q'[t]}\cdot 2^{(\alpha+w)^2}\cdot (\alpha+w)^{\alpha+w+2})=\mathcal{O}(\zeta^{2w}\cdot n^{\alpha}\cdot 2^{(\alpha+w)^2}\cdot (\alpha+w)^{\alpha+w+2}).\]
The total running time for this case is 
\[ \mathcal{O}(\zeta^{2w}\cdot n^{\alpha+1}\cdot 2^{(\alpha+w)^2}\cdot (\alpha+w)^{\alpha+4w+2}).  \]

For the correctness, we prove that $Q^*[t]$ satisfies the property ($\ast$).

\begin{claim}
The set $Q^*[t]$ satisfies the property ($\ast$).
\end{claim}
\begin{clproof}
Let $h$ be a proper $L$-coloring on $G-V(G_t)$, and suppose that there is a valid coloring $g\in Q[t]$ compatible with $h$. We need to prove that there is a valid coloring $g^*\in Q^*[t]$ that is compatible with $h$.
Since $g$ is a valid coloring at $t$, by definition, there is a proper $L$-coloring $f$ of $G_t$ for which $f|_{\hat{\partial}(t)}=g$.
By construction, the coloring $f|_{X_t}$ of $X_t$ is contained in $C[t]$.

Observe that  for every child $x$ of $t$, $f|_{V(G_x)}$ is a proper $L$-coloring of $G_x$, and thus, $f|_{\hat{\partial}(x)}\in Q[x]$.

Let $x\in A_1$.
As $f|_{\hat{\partial}(x)}$ is compatible with $f|_{V(G)\setminus V(G_x)}$, there exists $f_x^*\in Q^*[x]$ that is compatible with $f|_{V(G)\setminus V(G_x)}$.
Note that  $f^*_x$ is compatible with $f|_{X_t}$. This shows that $f|_{X_t}$ remains after considering all children in $A_1$ in Step 1, and $f|_{X_t}\in C'[t]$.    

If $A_2$ is empty, then this $f|_{X_t}$ is contained in $Q'[t]$. By Lemma~\ref{lem:compgrph}, there is a valid coloring $g^*\in Q^*[t]$ that is compatible with $h$. Thus, we may assume that $A_2$ is not empty.

Let $A_2=\{x_1, \ldots, x_d\}$.
We construct a sequence of proper $L$-colorings $f_0, f_1, \ldots, f_d$ of $G_t$ and colorings $g_1^*\in Q^*[x_1], \ldots, g_d^*\in Q^*[x_d]$ such that 
\begin{itemize}
    \item $f_0=f|_{V(G_t)}$,
    \item for each $i\in [d]$, $f_i=g_i^*\cup f_{i-1}|_{V(G_t)\setminus V(G_{x_i})}$.
\end{itemize}
Let $i\in [d]$ and suppose that $f_{i-1}$ has been constructed.  Note that $f_{i-1}|_{V(G_{x_i})}$ is compatible with $f|_{V(G)\setminus V(G_t)}\cup f_{i-1}|_{V(G_t)\setminus V(G_{x_i})}$. Thus, there exists $g_i^*\in Q^*[x_i]$ such that $g_i^*$ is compatible with $f|_{V(G)\setminus V(G_t)}\cup f_{i-1}|_{V(G_t)\setminus V(G_{x_i})}$.

Observe that  $\{g^*_i:i\in [d]\}$ are pairwise compatible. Thus, the tuple $U$ where $U(x_i)=g^*_i$ is added to $I[t]$ in Step 2. Note that $U(x)$ and $f|_{X_t}$ are compatible for all $x\in A_2$. So, the function $g'$ described in Step 3 is added to $Q'[t]$.

 Note that when we obtain $f^*_x$ from  $f|_{\hat{\partial}(x)}$, we have the property that $f^*_x$ is compatible with $h$. Thus, $g'$ is compatible with $h$. So, by the construction of $Q^*[t]$, there is $g''\in Q^*[t]$ that is compatible with $h$.
\end{clproof}

As $|V(T)|=\mathcal{O}(n)$, the algorithm runs in time 
\[ \mathcal{O}(\zeta^{2w}\cdot n^{\alpha+2}\cdot 2^{(\alpha+w)^2}\cdot (\alpha+w)^{\alpha+4w+2})=2^{\mathcal{O}((\alpha^2+k)^3)}n^{\alpha+2}.\qedhere\]
\end{proof}

\begin{corollary}\label{cor:precoloring}
For a fixed positive integer $\alpha$, the \textsc{Precoloring Extension} problem is fixed parameter tractable parameterized by $\alpha$-edge-crossing width.
\end{corollary}
\begin{proof}
Let $(G, S, q, c_S)$ be an instance.
We assign a list $L(v)$ for every vertex $v$ of $G$ as follows.
\begin{itemize}
    \item For every $v\notin S$, let $L(v)=[q]\setminus \{c_S(u):u\in N_G(v)\cap S\}$.
    \item For every $v\in S$, let $L(v)=\{c_S(v)\}$.
\end{itemize}
This spends $\mathcal{O}(n)$ time.
Using Theorem~\ref{thm:lcalpha}, in $2^{\mathcal{O}((\alpha^2+k)^3)}n^{\alpha+2}$ time, one can either get a proper $L$-coloring $c$ of $G$ or report that there is no proper $L$-coloring of $G$.
By the construction of lists, for every $x\in S$, $x$ must be colored by $c_S(x)$. 
So, the coloring $c$ is also a solution of the \textsc{Precoloring Extension} problem.
If there is no proper $L$-coloring of $G$, then this implies that there is no coloring $c$ with $c(v)=c_S(v)$ for all $v\in S$.
This proves the statement.
\end{proof}

\section{Relationships between width parameters}\label{sec:relationship}

 In this section, we compare width parameters as presented in Figure~\ref{fig:parameters}. 

For two functions $\phi$ and $\psi$ defined on graphs, we write $\psi\WR\phi$ if there is a function $f$ such that $\phi(G)\le f(\psi(G))$ for all graphs $G$. Otherwise, we write $\phi\not\WR\psi$.
We say that two functions $\phi$ and $\psi$ are \emph{incomparable} when $\phi\not\WR\psi$ and $\psi\not\WR\phi$.

\subsection{Edge-crossing width and 
$\alpha$-edge-crossing width}\label{subsec:ecrwandalpha}
We start with proving basic properties of edge-crossing width and $\alpha$-edge-crossing width.

\begin{lemma}\label{lem1st}
Let $\alpha$ and $\beta$ be positive integers.
\begin{enumerate}
    \item If $\alpha<\beta$, then $\ecrw_\beta(G)\le \ecrw_\alpha(G)$ for every graph    $G$. Thus, $\ecrw_\alpha\WR\ecrw_\beta$.
    \item For every graph $G$, $\ecrw(G)\le \max(\alpha,\ecrw_\alpha(G))$. Thus, $\ecrw_\alpha\WR\ecrw$.
\end{enumerate}
\end{lemma}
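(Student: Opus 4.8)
The plan is to prove both statements directly by manipulating tree-cut decompositions, since both are essentially monotonicity-type observations about the two quantities that define edge-crossing width (thickness and crossing number).

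For statement (1), I would start with an optimal tree-cut decomposition $\mathcal{T}=(T,\{X_t\}_{t\in V(T)})$ of $G$ witnessing $\ecrw_\alpha(G)$, so $\mathcal{T}$ has thickness at most $\alpha$ and crossing number exactly $\ecrw_\alpha(G)$. The key observation is that the \emph{same} decomposition $\mathcal{T}$ is a valid competitor in the minimization defining $\ecrw_\beta(G)$: since $\alpha<\beta$, the thickness bound $\max_t|X_t|\le\alpha\le\beta$ is automatically satisfied, so $\mathcal{T}$ is a tree-cut decomposition of thickness at most $\beta$. Its crossing number is $\ecrw_\alpha(G)$, and since $\ecrw_\beta(G)$ is the minimum crossing number over a \emph{larger} family of admissible decompositions, $\ecrw_\beta(G)\le\cross(\mathcal{T})=\ecrw_\alpha(G)$. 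The consequence $\ecrw_\alpha\WR\ecrw_\beta$ follows immediately by taking the identity function as $f$ (or, if one wants a single statement covering all $\alpha,\beta$, the function $f(x)=x$ works regardless of whether $\alpha<\beta$, together with part of a later argument for $\alpha\ge\beta$; but here only $\alpha<\beta$ is claimed).

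For statement (2), again take a tree-cut decomposition $\mathcal{T}=(T,\{X_t\}_{t\in V(T)})$ of $G$ of thickness at most $\alpha$ whose crossing number equals $\ecrw_\alpha(G)$ (such a decomposition exists by definition of $\ecrw_\alpha$). The edge-crossing width of this \emph{particular} decomposition is by definition $\max(\cross(\mathcal{T}),\text{thickness}(\mathcal{T}))\le\max(\ecrw_\alpha(G),\alpha)$. Since $\ecrw(G)$ is the minimum edge-crossing width over all tree-cut decompositions of $G$, we get $\ecrw(G)\le\max(\alpha,\ecrw_\alpha(G))$. For the second sentence, $\ecrw_\alpha\WR\ecrw$ holds via the function $f(x)=\max(\alpha,x)$ (for fixed $\alpha$), or — noting that $\ecrw_\alpha(G)\ge 1$ always and $\alpha$ is a constant — via any linear function dominating $\max(\alpha,x)$; the cleanest statement is simply $\ecrw(G)\le\alpha+\ecrw_\alpha(G)$, so $f(x)=\alpha+x$ suffices.

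I do not anticipate a genuine obstacle here: both parts are one-line consequences of the fact that the decomposition optimizing one parameter is a legal candidate for the other. The only point requiring a moment of care is making sure the implicit quantifiers line up — in (1) the \emph{same} tree is reused and we are enlarging the admissible set, whereas in (2) we are comparing two different objective functions on the \emph{same} admissible set (all decompositions of thickness $\le\alpha$ form a subfamily, but we only need one witness). Once that bookkeeping is in place, the $\WR$ conclusions are just a matter of exhibiting the appropriate bounding function $f$.
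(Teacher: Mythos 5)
Your proposal is correct and follows essentially the same argument as the paper: in both parts the optimal decomposition witnessing $\ecrw_\alpha(G)$ is reused as a candidate for the other parameter, giving $\ecrw_\beta(G)\le\ecrw_\alpha(G)$ when $\alpha<\beta$ and $\ecrw(G)\le\max(\alpha,\ecrw_\alpha(G))$. Your extra remarks on the explicit choice of the bounding function $f$ for the $\WR$ conclusions are fine and merely spell out what the paper leaves implicit.
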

\begin{proof}
Let $G$ be a graph.

(1) Assume that $\alpha<\beta$. Let $(T,\mathcal{X})$ be a tree-cut decomposition of $G$ with thickness at most $\alpha$ and crossing number $\ecrw_\alpha(G)$. Since $\alpha<\beta$, $(T,\mathcal{X})$ has thickness at most $\beta$ and crossing number $\ecrw_\alpha(G)$.  Thus, the $\beta$-edge-crossing width of $G$ is at most $\ecrw_\alpha(G)$.

(2) Let $(T,\mathcal{X})$ be a tree-cut decomposition of $G$ with thickness at most $\alpha$ and crossing number $\ecrw_\alpha(G)$. Then it has edge-crossing width at most $\max (\alpha,\ecrw_\alpha(G))$.
It follows that $\ecrw(G)\le \max(\alpha,\ecrw_\alpha(G))$.
\end{proof}

Next, we prove that $\ecrw_\beta\not\WR\ecrw_\alpha$ if $\alpha<\beta$.
For all positive integers $k$ and $n$, we construct a graph $G^n_k$ as follows.
Let $A:=\{a_i: i\in [n]\}$, $B={A\choose 2}$, and $B_k=\{(W,\ell):W\in B\text{ and } \ell\in[k]\}$.
Let $G_k^n$ be the graph such that
\begin{itemize}
    \item $V(G_k^n)=A\cup B_k$, and
    \item for $a\in A$ and $(W, \ell)\in B_k$, $a$ is adjacent to $(W, \ell)$ in $E(G_k^n)$ if and only if $a\in W$.
\end{itemize}

\begin{lemma}\label{lem:loweralpha}
 The graph $G^{\alpha+1}_{3k+3\alpha-1}$ has $\alpha$-edge-crossing width at least $k+1$.
\end{lemma}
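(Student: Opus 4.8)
The plan is to show that any tree-cut decomposition of $G := G^{\alpha+1}_{3k+3\alpha-1}$ with thickness at most $\alpha$ must have a node whose crossing number is at least $k+1$. Write $n = \alpha+1$ and $m = 3k+3\alpha-1$, so $A = \{a_1,\dots,a_n\}$ has $n = \alpha+1$ vertices, $B = \binom{A}{2}$ has $\binom{\alpha+1}{2}$ elements, and each pair $W \in B$ spawns $m$ vertices $(W,1),\dots,(W,m)$ in $B_k$, each adjacent exactly to the two vertices of $A$ lying in $W$. The key structural feature is that $G$ has many (namely $\binom{\alpha+1}{2}\cdot m$) degree-$2$ vertices, and for each of the $\binom{\alpha+1}{2}$ pairs $W$ there is a large "bundle" of $m$ parallel length-$2$ paths joining the two endpoints of $W$.

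The main steps I would carry out are as follows. First, fix a tree-cut decomposition $\mathcal T = (T,\mathcal X)$ of thickness at most $\alpha$ and suppose for contradiction its crossing number is at most $k$. Since $|A| = \alpha+1 > \alpha$, the $\alpha+1$ vertices of $A$ cannot all lie in one bag; pick a pair $a_i, a_j$ that lie in bags $X_{s}$ and $X_{s'}$ with $s \neq s'$, and consider the path $P$ in $T$ between $s$ and $s'$. Every edge $e = uv$ of $T$ on $P$ separates $a_i$ from $a_j$, so $\adh$-type considerations apply: the $m$ vertices $(W_{ij},\ell)$ for $W_{ij}=\{a_i,a_j\}$ each lie in some bag, and wherever $(W_{ij},\ell)$ sits, at least one of its two edges to $a_i,a_j$ crosses $X_t$ for every internal node $t$ of $P$ (and at the midpoint of $P$, if $(W_{ij},\ell)$ is not itself on $P$'s bags, both edges may cross, but I only need one). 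The point: choose any internal node $t$ of $P$ (if $P$ has length $\geq 2$) — then all $m$ edges "pass through" $t$ in the sense of contributing to $\cross_{\mathcal T}(t)$, giving $\cross_{\mathcal T}(t) \geq m > k$, a contradiction, unless $P$ has length exactly $1$, i.e. $s$ and $s'$ are adjacent in $T$.

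So the crux is the case where every pair of $A$-vertices in distinct bags occupies adjacent bags of $T$. This forces the bags containing $A$-vertices to sit on a tiny subtree — essentially all of $A$ lives on a single edge or a single node of $T$, so there is a node $t_0 \in V(T)$ such that $A \setminus X_{t_0}$ is "one step away." Then I count the crossing number at the node $t_0$ (or at the relevant neighbor) directly: the $\binom{\alpha+1}{2}$ bundles each of size $m$ contribute edges crossing $X_{t_0}$ according to how the $B_k$-vertices are distributed among the components of $T - t_0$, and a counting argument — each $B_k$-vertex $(W,\ell)$ not co-located with both endpoints of $W$ forces at least one crossing edge at $t_0$, while $X_{t_0}$ holds at most $\alpha$ vertices so it can "absorb" at most the bundles of pairs entirely inside $X_{t_0}$ — yields a crossing count exceeding $k$. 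The arithmetic is engineered so that $m = 3k+3\alpha-1$ beats the savings: even in the best case one cannot route more than roughly two bundles "for free" past any node, leaving on the order of $m$ crossing edges. I would make this precise by splitting into subcases on whether $A$ occupies one node, or two adjacent nodes, of $T$, and in each subcase exhibiting a node with crossing number $\geq k+1$.

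The main obstacle I anticipate is the bookkeeping in the "adjacent bags" case: the bound of $\alpha$ on thickness interacts with the number of pairs $\binom{\alpha+1}{2}$ and the bundle size $m$ in a way that needs care to see that no clever placement of the $B_k$-vertices (using empty bags or long paths in $T$ dangling off $t_0$) can reduce every node's crossing number below $k+1$. The right way to handle this is probably to argue that since thickness is at most $\alpha < |A|$, at least two bundles' worth of endpoints must be genuinely separated by $t_0$, and then show any $B_k$-vertex in a separated bundle adds a fresh edge to $\cross_{\mathcal T}(t_0)$, so $\cross_{\mathcal T}(t_0) \geq m - (\text{small correction}) \geq k+1$. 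The choice of constants $3k + 3\alpha - 1$ is exactly what makes this final inequality go through.
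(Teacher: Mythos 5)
Your first case (two $A$-vertices whose bags are non-adjacent in $T$) is essentially fine, up to a small correction: a bundle vertex lying in $X_t$ itself contributes no crossing at $t$, so what you get at an internal node $t$ of $P$ is $\cross_{\mathcal T}(t)\ge m-\alpha=3k+2\alpha-1\ge k+1$, not $m$; that is still enough. The genuine gap is in the crux case you yourself single out, where all of $A$ sits in one bag or two adjacent bags $X_p,X_q$. Both counting claims you propose there are false: it is not true that every $B$-vertex ``not co-located with both endpoints of $W$'' forces a crossing at $t_0$, nor that every vertex of a separated bundle adds a fresh edge to $\cross_{\mathcal T}(t_0)$. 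Take the split pair $a_1\in X_p$, $a_2\in X_q$ and $t_0=p$: a bundle vertex placed in the component of $T-p$ containing $q$ contributes nothing at $p$ (its edge to $a_1$ is incident with the bag $X_p$, and its edge to $a_2$ stays inside that component), and vertices placed inside $X_p\cup X_q$ contribute nothing at either $p$ or $q$. So no single predetermined node need see $m$ minus a small correction many crossings, and the inequality $\cross_{\mathcal T}(t_0)\ge k+1$ cannot be forced the way you plan; an adversary can split the bundle between the two sides so that each of $p$ and $q$ individually sees few crossings from any fixed placement you analyse at one node.

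What is missing is a count over several locations at once, for a single bundle (you do not need all $\binom{\alpha+1}{2}$ bundles). For the pair $a_1\in X_p$, $a_2\in X_q$: at most $k$ bundle vertices can lie in components of $T-\{p,q\}$ adjacent only to $p$ (otherwise their edges to $a_2$ already give $\cross_{\mathcal T}(p)\ge k+1$), at most $k$ in components adjacent only to $q$, and at most $2(\alpha-1)$ in $X_p\cup X_q$; since $2k+2(\alpha-1)<3k+3\alpha-1$, this is already the desired contradiction when $p$ and $q$ are adjacent, because then no component of $T-\{p,q\}$ sees both $p$ and $q$. The paper in fact runs this region count uniformly, with no case split: when $p$ and $q$ are non-adjacent, the remaining at least $k+\alpha+1$ bundle vertices must lie in the unique component $T^*$ of $T-\{p,q\}$ adjacent to both, and at least $k+1$ of them avoid the bag of the neighbor $p^*$ of $p$ in $T^*$, so their edges to $a_1$ give $\cross_{\mathcal T}(p^*)\ge k+1$. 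Your plan can be repaired by replacing the single-node argument in the adjacent case with this multi-region count, but as written the key step of your sketch would fail.
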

\begin{proof}
Suppose for contradiction that $\ecrw_\alpha(G^{\alpha+1}_{3k+3\alpha-1})\le k$.
    Let $(A, B_{3k+3\alpha-1})$ be the bipartition of $G^{\alpha+1}_{3k+3\alpha-1}$ given by the definition. 
 Let $\mathcal{T}=(T, \{X_t\}_{t\in V(T)})$ be a tree-cut decomposition of $G^{\alpha+1}_{3k+3\alpha-1}$ with thickness at most $\alpha$ and crossing number $\ecrw_\alpha(G^{\alpha+1}_{3k+3\alpha-1})$.
 Since each bag has at most $\alpha$ vertices of $G^{\alpha+1}_{3k+3\alpha-1}$, there are two distinct nodes $p$ and $q$ of $T$ such that each of $X_p$ and $X_q$ contains a vertex of $A$. Let $a_1\in A\cap X_p$ and $a_2\in A\cap X_q$.
 
 Let $\mathcal{C}$ be the set of connected components of $T-\{p,q\}$. Note that there is at most one connected component of $\mathcal{C}$ that has neighbors of both $p$ and $q$. Let $T^*$ be this component if one exists. 
 For each $y\in \{p,q\}$, let $T_y$ be the union of the connected components in $\mathcal{C}$ that has a neighbor of $y$ and does not have a neighbor of the vertex of $\{p,q\}\setminus \{y\}$.
 
 Let $Z$ be the set of vertices in $B_{3k+3\alpha-1}$ whose neighborhoods are exactly $\{a_1, a_2\}$.
 If $\bigcup_{t\in V(T_p)} X_t$ contains $k+1$ vertices of $Z$, then $\cross_\mathcal{T}(p)\ge k+1$. 
 If $\bigcup_{t\in V(T_q)} X_t$ contains $k+1$ vertices of $Z$, then $\cross_\mathcal{T}(q)\ge k+1$. Therefore, we may assume that each of $\bigcup_{t\in V(T_p)} X_t$  and $\bigcup_{t\in V(T_q)} X_t$ contains at most  $k$ vertices of $Z$.
 Also, $X_p\cup X_q$ may contain at most $2(\alpha-1)$ vertices of $Z$.
 Since $(3k+3\alpha-1)-2k-2(\alpha-1)=k+\alpha+1$,
 $T^*$ exists and $\bigcup_{t\in V(T^*)}X_t$ contains at least $k+\alpha+1$ vertices of $Z$. 
 
 Let $p^*$ be the neighbor of $p$ contained in $T^*$.
 Since $X_{p^*}$ contains at most $\alpha$ vertices, there are at least $k+1$ vertices of $Z$ contained in $\bigcup_{t\in V(T^*)\setminus \{p^*\}} X_t$.
 The edges between $a_1$ and these vertices cross $X_{p^*}$, and therefore, $\cross_\mathcal{T}(p^*)\ge k+1$. This implies the result.
\end{proof}

\begin{lemma}
Let $\alpha$ and $\beta$ be positive integers with $\alpha<\beta$. Then $\ecrw_\beta\not\WR\ecrw_\alpha$.
\end{lemma}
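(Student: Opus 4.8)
The plan is to exhibit a single family of graphs on which $\ecrw_\beta$ stays bounded while $\ecrw_\alpha$ is unbounded; by the definition of $\WR$ this is exactly what rules out a function $f$ with $\ecrw_\alpha(G)\le f(\ecrw_\beta(G))$ for all $G$. The natural candidate is the family $G^{\alpha+1}_{3k+3\alpha-1}$ constructed just before Lemma~\ref{lem:loweralpha}: that lemma already gives $\ecrw_\alpha(G^{\alpha+1}_{3k+3\alpha-1})\ge k+1$, so the $\alpha$-edge-crossing width of this family grows without bound as $k\to\infty$. It remains only to bound $\ecrw_\beta$ on the same family.

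For the upper bound I would take the tree-cut decomposition $(T,\mathcal{X})$ in which $T$ is a star $K_{1,\abs{B_{3k+3\alpha-1}}}$ with center $c$, the bag $X_c$ is the whole set $A$, and each leaf of $T$ carries a distinct vertex of $B_{3k+3\alpha-1}$ in its bag. Since $\abs{A}=\alpha+1\le\beta$, this decomposition has thickness at most $\beta$. To bound the crossing number I would use that $G^{\alpha+1}_{3k+3\alpha-1}$ is bipartite with parts $A$ and $B_{3k+3\alpha-1}$, so every edge is incident with $X_c=A$: at the center $c$, the components of $T-c$ are the singleton leaf bags and no edge joins two of them, so $\cross_{\mathcal{T}}(c)=0$; at any leaf $t$, the tree $T-t$ is connected, so $\cross_{\mathcal{T}}(t)=0$ as well. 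Hence the crossing number is $0$ and $\ecrw_\beta(G^{\alpha+1}_{3k+3\alpha-1})=0$.

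To finish, suppose toward a contradiction that $\ecrw_\beta\WR\ecrw_\alpha$, witnessed by a function $f$ with $\ecrw_\alpha(G)\le f(\ecrw_\beta(G))$ for every graph $G$. Applying this to $G=G^{\alpha+1}_{3k+3\alpha-1}$ and combining the two bounds above yields $k+1\le\ecrw_\alpha(G^{\alpha+1}_{3k+3\alpha-1})\le f(0)$ for every positive integer $k$, which is impossible. Therefore $\ecrw_\beta\not\WR\ecrw_\alpha$. Together with Lemma~\ref{lem1st}(1), which gives $\ecrw_\alpha\WR\ecrw_\beta$, this shows that $\ecrw_\alpha$ and $\ecrw_\beta$ are genuinely distinct parameters for all $\alpha<\beta$.

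I do not expect a real obstacle here: the graphs $G^n_k$ were designed precisely for this separation and the hard direction is already Lemma~\ref{lem:loweralpha}, so the only point needing care is verifying that the star decomposition has crossing number $0$, which relies solely on the bipartiteness of $G^n_k$ (no edge lies inside $B_k$) rather than on any counting.
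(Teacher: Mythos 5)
Your proposal is correct and follows essentially the same route as the paper: it uses the same family $G^{\alpha+1}_n$, the same lower bound from Lemma~\ref{lem:loweralpha}, and the same star-shaped tree-cut decomposition with $X_c=A$ and singleton leaf bags to get thickness $\alpha+1\le\beta$ and crossing number $0$. Your explicit verification that the crossing number vanishes (bipartiteness at the center, connectedness of $T-t$ at the leaves) just spells out what the paper leaves as "clear."
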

\begin{proof}
Let $\mathcal{B}=\{G^{\alpha+1}_n:n\in \mathbb{N}\}$. 
By Lemma~\ref{lem:loweralpha}, $\mathcal{B}$ has unbounded $\alpha$-edge-crossing width. 
We claim that for every $k$, $G^{\alpha+1}_k$ has $\beta$-edge-crossing width $0$.
Let $(A, B_k)$ be the bipartition of $G^{\alpha+1}_k$ given by the definition.
Let $T$ be a star with center $t$ and leaves $t_1, \ldots, t_{k{\beta\choose 2}}$. 
Let $X_t=A$ and each $X_{t_i}$ consists of a vertex of $B_k$. It is clear that $(T, \{X_v\}_{v\in V(T)})$ is a tree-cut decomposition of thickness at most $\alpha+1\le \beta$ and crossing number $0$.
\end{proof}

\subsection{$\alpha$-edge-crossing width, tree-cut width, and slim tree-cut width }\label{subsec:ecrwandtcw}

In this subsection, we show that $\stcw\WR \ecrw_\alpha$ and $\ecrw_\alpha\not\WR \stcw$. We also show that $\ecrw_\alpha\not\WR \tcw$  and $ \tcw\not\WR \ecrw_\alpha$.

For positive integers $k$ and $n$, let $S_{k,n}$ be the graph obtained from $K_{1,n}$ by replacing each edge with $k$ internally vertex-disjoint paths of length $2$. See Figure~\ref{fig:S3nDecomposition} for an illustration.
Ganian and Korchemna~\cite[Lemma 3]{Ganian2022} proved that the slim tree-cut width of $S_{2,n^2}$ is at least $n$. 

\begin{lemma}[Ganian and Korchemna~\cite{Ganian2022}]\label{lem:unbddstcw}
    The set $\{S_{2,n}:n\in \mathbb{N}\}$ has unbounded slim tree-cut width.
\end{lemma}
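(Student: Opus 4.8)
The plan is to leverage the cited fact that $\stcw(S_{2,n^2}) \ge n$ and show that this forces the whole family $\{S_{2,n} : n \in \mathbb{N}\}$ to have unbounded slim tree-cut width. The key observation is monotonicity: if $m \le n$, then $S_{2,m}$ is (isomorphic to) a subgraph of $S_{2,n}$, obtained by deleting the $n-m$ pendant "rays" (each ray being the $2$ length-two paths attached to one leaf of $K_{1,n}$), along with all their internal vertices. So it suffices to argue that slim tree-cut width does not increase under taking subgraphs, or at least under the specific operation of deleting a connected "branch" together with its attachment.

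First I would record the (standard, easy) fact that $\stcw$ is subgraph-monotone: if $H$ is a subgraph of $G$ then $\stcw(H) \le \stcw(G)$. To see this, take an optimal tree-cut decomposition $\mathcal{T} = (T, \mathcal{X})$ of $G$; restrict each bag $X_t$ to $V(H)$ to get a tree-cut decomposition $\mathcal{T}'$ of $H$ on the same tree $T$. For every edge $uv \in E(T)$, $\adh_{\mathcal{T}'}(uv) \subseteq \adh_{\mathcal{T}}(uv)$ since edges of $H$ are edges of $G$ and the two sides of the cut in $\mathcal{T}'$ are subsets of the corresponding sides in $\mathcal{T}$; hence adhesions do not grow. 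For the torso sizes, deleting vertices from $G$ can only decrease degrees in each torso $H_t$, so the set of vertices outside $X_t$ that are removed when forming the $2$-center (those of degree $1$) can only be larger, and $|V(\widehat{H_t})|$ for $\mathcal{T}'$ is at most that for $\mathcal{T}$; one must be slightly careful that vertices of $X_t$ are never suppressed, which is fine since the $2$-center only deletes degree-$1$ vertices in $V(H_t)\setminus X_t$. This gives $\stcw(\mathcal{T}') \le \stcw(\mathcal{T})$, hence $\stcw(H) \le \stcw(G)$.

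With monotonicity in hand, the lemma follows immediately: for any bound $c$, pick $n$ with $n > c$; since $S_{2,n^2}$ contains $S_{2,n}$ as a subgraph (delete all but $n$ rays) — wait, I want the other direction. Let me restate: for any $c$, choose $m = c+1$ and set $n = m^2$; then $S_{2,m}$ is a subgraph of $S_{2,n}$, so $\stcw(S_{2,n}) \ge \stcw(S_{2,m})$ — that is still the wrong direction. The correct reading: Lemma from \cite{Ganian2022} says $\stcw(S_{2,n^2}) \ge n$. Given $c$, take $n = c+1$; then $S_{2,(c+1)^2}$ is a member of the family $\{S_{2,n}\}$ with $\stcw \ge c+1 > c$. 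So actually no monotonicity is needed at all — the subfamily $\{S_{2,n^2} : n \in \mathbb{N}\}$ is already a subset of $\{S_{2,n} : n \in \mathbb{N}\}$, and it alone has unbounded slim tree-cut width. Thus the plan reduces to the one-line remark that $\{S_{2,n^2}: n\in\mathbb N\}\subseteq\{S_{2,n}:n\in\mathbb N\}$ and invoking the quoted bound.

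The main (and only) subtlety to watch for is whether the paper's statement of the cited result is exactly "$\stcw(S_{2,n^2}) \ge n$" as written in the excerpt; if so the proof is genuinely trivial and I would just write: "By the result of Ganian and Korchemna, $\stcw(S_{2,n^2}) \ge n$ for every $n$, and $\{S_{2,n^2} : n \in \mathbb{N}\}$ is a subfamily of $\{S_{2,n} : n \in \mathbb{N}\}$; hence the latter has unbounded slim tree-cut width." If instead a self-contained argument is wanted (not relying on the earlier lemma as a black box), then the real work is the lower bound $\stcw(S_{2,N}) \to \infty$ directly: one shows that in any tree-cut decomposition with small adhesion, most of the $N$ rays must be attached "near" a single node whose torso then accumulates many degree-$2$-or-higher vertices that survive into the $2$-center, forcing large torso size — this is the counting argument I would expect to be the genuine obstacle, mirroring the proof technique of Lemma~\ref{lem:loweralpha}. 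Given the excerpt explicitly attributes the statement to \cite{Ganian2022}, I will go with the short route.
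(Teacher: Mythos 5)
Your final argument is correct and matches the paper's treatment exactly: the paper gives no proof of its own but simply attributes the statement to Ganian and Korchemna, having just quoted their bound $\stcw(S_{2,n^2})\ge n$, from which unboundedness of $\{S_{2,n}:n\in\mathbb{N}\}$ is immediate since $\{S_{2,n^2}:n\in\mathbb{N}\}$ is a subfamily. The subgraph-monotonicity detour you sketch is unnecessary (as you yourself observe), so its minor rough edges (e.g.\ consolidated vertices dropping to degree $0$, which the $2$-center does not delete) do not affect the proof.
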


We further show the following.

\begin{lemma}\label{lem31}
The following statements hold.
\begin{enumerate}[(1)]
    \item $\{S_{3,n}:n\in \mathbb{N}\}$ has 1-edge-crossing width at most $2$.
    \item $\{S_{3,n}:n\in \mathbb{N}\}$ has unbounded tree-cut width.
\end{enumerate}
\end{lemma}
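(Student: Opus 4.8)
For part~(1), I would write down an explicit tree-cut decomposition of $S_{3,n}$ of thickness $1$ and crossing number $2$. Write $c$ for the center of $S_{3,n}$, $v_1,\dots,v_n$ for the leaves, and $w_{i,1},w_{i,2},w_{i,3}$ for the three degree-$2$ vertices on the paths that replaced the edge $cv_i$. Let $T$ be the tree obtained from a single node $t_0$ by attaching, for each $i\in[n]$, a path $\alpha_i\beta_i\gamma_i\delta_i$ with $\alpha_i$ adjacent to $t_0$, and set $X_{t_0}=\{c\}$, $X_{\alpha_i}=\{w_{i,1}\}$, $X_{\beta_i}=\{w_{i,2}\}$, $X_{\gamma_i}=\{w_{i,3}\}$, and $X_{\delta_i}=\{v_i\}$. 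Every bag is a singleton, so the thickness is $1$, and it remains only to verify $\cross_{\mathcal{T}}(t)\le 2$ at each node $t$, which is a short case check: no edge crosses $X_{t_0}$ or any $X_{\delta_i}$; the only edges crossing $X_{\alpha_i}$ are $cw_{i,2}$ and $cw_{i,3}$; the only edges crossing $X_{\beta_i}$ are $cw_{i,3}$ and $w_{i,1}v_i$; and the only edges crossing $X_{\gamma_i}$ are $w_{i,1}v_i$ and $w_{i,2}v_i$. Since this works for every $n$, the class has $1$-edge-crossing width at most~$2$.

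For part~(2), I would prove the quantitative bound that every tree-cut decomposition $\mathcal{T}=(T,\mathcal{X})$ of $S_{3,n}$ of tree-cut width $w$ satisfies $n\le w^2+w-1$; since $w>\sqrt n-1$ then follows, this shows that $\{S_{3,n}:n\in\mathbb{N}\}$ has unbounded tree-cut width. Root $T$ at a node $t_0$ with $c\in X_{t_0}$, and for a node $t$ let $Z_t$ be the union of the bags in the subtree of $T$ rooted at $t$ and $b(t)=|\{i:v_i\in Z_t\}|$. The plan rests on three facts. \emph{(a)~Adhesion bound}: for $t\ne t_0$ we have $c\notin Z_t$, so for each index $i$ with $v_i\in Z_t$ at least one edge of each of the three length-$2$ paths from $c$ to $v_i$ lies in the edge cut $\delta_{S_{3,n}}(Z_t,V(S_{3,n})\setminus Z_t)$, which is exactly $\adh_{\mathcal{T}}(e_t)$ for the edge $e_t$ from $t$ to its parent; since distinct bundles are vertex-disjoint, $b(t)\le|\adh_{\mathcal{T}}(e_t)|\le w$. \emph{(b)~Counting children}: since $c\in X_{t_0}$ and $X_{t_0}\subseteq V(\widetilde{H_{t_0}})$ give $|X_{t_0}|\le w$, at most $w-1$ of the $v_i$ lie in $X_{t_0}$, so $\sum_{t'}b(t')\ge n-w+1$ over the children $t'$ of $t_0$; combined with~(a), at least $(n-w+1)/w$ children $t'$ have $b(t')\ge 1$. \emph{(c)~Survival in the $3$-center}: for each such child $t'$, the vertex $z_{t'}$ of the torso $H_{t_0}$ obtained by consolidating $Z_{t'}$ belongs to $\widetilde{H_{t_0}}$; these vertices are pairwise distinct, so $|V(\widetilde{H_{t_0}})|\ge(n-w+1)/w$, and with $|V(\widetilde{H_{t_0}})|\le w$ this yields $n\le w^2+w-1$.

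Step~(c) is the heart of the proof and the step I expect to be the main obstacle. Fixing an index $i$ with $v_i\in Z_{t'}$, the three length-$2$ paths from $c$ to $v_i$ project under consolidation to three walks from $c$ to $z_{t'}$ in $H_{t_0}$; these are pairwise edge-disjoint because their first edges are the three distinct edges of $S_{3,n}$ at $c$ inside bundle~$i$, which remain distinct in $H_{t_0}$. Hence the edge-connectivity between $c$ and $z_{t'}$ in $H_{t_0}$ is at least $3$. The delicate point is that the exhaustive suppression producing $\widetilde{H_{t_0}}$ must not delete $z_{t'}$; this follows from the standard fact that suppressing a vertex of degree at most $2$, or deleting a vertex of degree at most $1$, never decreases the edge-connectivity between two surviving vertices, so the $c$-to-$z_{t'}$ edge-connectivity stays at least $3$ throughout, forcing $\deg(z_{t'})\ge 3$ at every stage; since $c\in X_{t_0}$ is also never deleted, $z_{t'}$ is never a candidate for suppression. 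Everything else — the crossing-number check in~(1) and the counting in~(a) and~(b) — is routine bookkeeping. (Alternatively, one could deduce~(2) from the observation that the multigraph obtained from $K_{1,n}$ by tripling each edge is an immersion of $S_{3,n}$ with unbounded tree-cut width, invoking monotonicity of tree-cut width under immersions.)
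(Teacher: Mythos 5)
Your part (1) is exactly the paper's construction (subdivide each edge of $K_{1,n}$ so that each branch of the decomposition tree is a path of four nodes, all bags singletons), and your crossing check matches; nothing to add there. For part (2) you take a genuinely different route: the paper simply invokes Wollan's Theorem 15, which says that $S_{3,n}$ with $n$ large contains a large wall as a weak immersion, and concludes unboundedness of tree-cut width from immersion-monotonicity, whereas you give a direct, self-contained counting argument yielding the explicit bound $n\le w^2+w-1$, i.e.\ $\tcw(S_{3,n})>\sqrt{n}-1$. Your argument is correct: the adhesion bound $b(t)\le|\adh_{\mathcal T}(e_t)|$ (in fact $3b(t)\le|\adh_{\mathcal T}(e_t)|$, though you do not need the factor $3$), the count of children with $b(t')\ge 1$, and the survival step are all sound; in particular the delicate point (c) is handled properly, since the three bundle paths project to three edge-disjoint $c$--$z_{t'}$ walks in the torso (multi-edges being retained under consolidation), and local edge-connectivity between two surviving vertices is indeed non-decreasing under deleting degree-$\le 1$ vertices and suppressing degree-$2$ vertices, so $z_{t'}$ keeps degree at least $3$ and is never suppressed, giving $|V(\widetilde{H_{t_0}})|\ge (n-w+1)/w$. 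What the two approaches buy: the paper's proof is a one-line citation but is non-elementary and gives no explicit dependence on $n$; yours uses only the definitions of torso and $3$-center and produces a concrete $\Omega(\sqrt n)$ lower bound, at the cost of being longer. Your parenthetical alternative (the tripled star is an immersion of $S_{3,n}$) is closer in spirit to the paper, but note it still requires proving that the tripled $K_{1,n}$ has unbounded tree-cut width, which essentially needs an argument of the same kind as your main one, so it is not really a shortcut.
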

\begin{proof}
Let 
\begin{itemize}
    \item $V(S_{3,n})=\{u_i:i\in \{0\}\cup [n]\}\cup\{v_{i,j}:i\in [n], j\in [3]\}$ and
    \item $E(S_{3,n})=\{u_0v_{i,j}, u_{i}v_{i,j}:i\in [n], j\in [3]\}$.
\end{itemize} 

(1) 
We construct a tree-cut decomposition $\mathcal{T}=(T,\mathcal{X})$ of $S_{3,n}$ which has thickness $1$ and crossing number at most $2$.
Let $T$ be the graph obtained from $K_{1,n}$ by replacing each edge with a path of length $4$. 
Let $c$ be the vertex of degree $n$ in $T$, and 
for $i\in[n]$, let $\{t_{i, j}:i\in[n], j\in [4]\}$ be a set of vertices of $T$ where $ct_{i, 1}t_{i, 2}t_{i, 3}t_{i, 4}$ is a subdivided path.

\begin{figure}[t]
    \centering
    \includegraphics[scale=0.9]{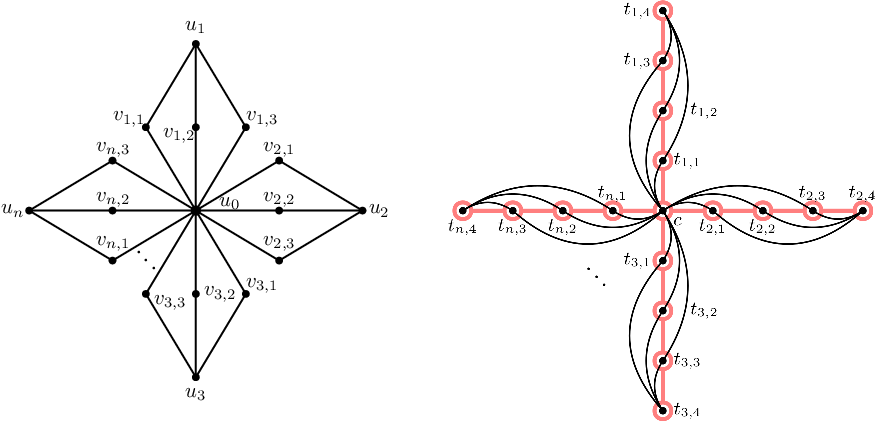}
    \caption{The graph $S_{3,n}$ and its decomposition.}
    \label{fig:S3nDecomposition}
\end{figure}
    
Let $X_c=\{u_0\}$. For each $i\in[n]$ and $j\in[3]$, let $X_{t_{i, j}}=\{v_{i, j}\}$. For each $i\in[n]$, let $X_{t_{i, 4}}=\{u_i\}$. Let $\mathcal{X}=\{X_v\}_{v\in V(T)}$.
See Figure~\ref{fig:S3nDecomposition}.
Note that the size of each bag is exactly $1$. Observe that $\cross_\mathcal{T}(t_{i,j})=2$ for all $i\in[n]$ and $j\in [3]$, and $\cross_\mathcal{T}(t_{i, 4})=0$ for all $i\in[n]$, and $\cross_\mathcal{T}(c)=0$. Hence $\ecrw_1(S_{3,n})\le 2$.

(2) 
Wollan~\cite[Theorem 15]{Wollan2015} proved that if $n$ is large, then $S_{3,n}$ contains a large wall as a weak immersion, and therefore, $\{S_{3,n}:n\in \mathbb{N}\}$ has unbounded tree-cut width. 
\end{proof}

\begin{lemma}
For every positive integer $\alpha$,
$\ecrw_\alpha\not\WR\stcw$ and $\ecrw_\alpha\not\WR\tcw$.
\end{lemma}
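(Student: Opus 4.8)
The plan is to produce, for each of the two separations, a family of graphs on which $\alpha$-edge-crossing width stays bounded while the other parameter blows up; the graphs $S_{3,n}$ and $S_{2,n}$ introduced just before Lemma~\ref{lem31} are exactly designed for this.

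First I would record that $\{S_{3,n}:n\in\mathbb{N}\}$ already has bounded $\alpha$-edge-crossing width for every fixed $\alpha$: Lemma~\ref{lem31}(1) gives $\ecrw_1(S_{3,n})\le 2$, and Lemma~\ref{lem1st}(1) then gives $\ecrw_\alpha(S_{3,n})\le\ecrw_1(S_{3,n})\le 2$ for all $\alpha\ge 1$ and all $n$. Combined with Lemma~\ref{lem31}(2), which says $\{S_{3,n}\}$ has unbounded tree-cut width, this immediately yields $\ecrw_\alpha\not\WR\tcw$: if $\tcw(G)\le f(\ecrw_\alpha(G))$ held for some function $f$, then $\tcw(S_{3,n})\le f(2)$ for every $n$, contradicting unboundedness.

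For $\ecrw_\alpha\not\WR\stcw$ I would invoke Lemma~\ref{lem:unbddstcw}, by which $\{S_{2,n}:n\in\mathbb{N}\}$ has unbounded slim tree-cut width, so it only remains to check that $\{S_{2,n}\}$ has bounded $\alpha$-edge-crossing width. The quickest route is to observe that $\alpha$-edge-crossing width does not increase under taking subgraphs — given a tree-cut decomposition of thickness at most $\alpha$, deleting a vertex from its bag keeps the thickness at most $\alpha$ and can only remove edges crossing a node — and that $S_{2,n}$ is the subgraph of $S_{3,n}$ obtained by deleting the vertices $v_{i,3}$; hence $\ecrw_\alpha(S_{2,n})\le\ecrw_\alpha(S_{3,n})\le 2$. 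Alternatively one can rerun the construction in the proof of Lemma~\ref{lem31}(1) using a path of length $3$ instead of $4$ between the centre and each leaf, with the analogous bag assignment, which directly gives a tree-cut decomposition of $S_{2,n}$ of thickness $1$ and crossing number at most $1$. Either way, $\{S_{2,n}\}$ witnesses $\ecrw_\alpha\not\WR\stcw$.

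I do not expect a genuine obstacle: once the two families are in hand, both statements are one-line consequences of the preceding lemmas, the only point needing a short argument being the subgraph-monotonicity remark (or, equivalently, the direct $S_{2,n}$ construction). If one prefers to avoid $S_{2,n}$ altogether, it also suffices to note that the $3$-center $\widetilde{H_t}$ of any torso is obtained from its $2$-center $\widehat{H_t}$ by suppressing vertices, so $|V(\widetilde{H_t})|\le|V(\widehat{H_t})|$ and hence $\tcw(G)\le\stcw(G)$ for every graph $G$; then the unbounded tree-cut width of $\{S_{3,n}\}$ already forces unbounded slim tree-cut width, and this single family settles both separations at once.
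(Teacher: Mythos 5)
Your proposal is correct and follows essentially the same route as the paper: it uses $S_{3,n}$ with Lemma~\ref{lem31} for $\ecrw_\alpha\not\WR\tcw$, and $S_{2,n}$ (as a subgraph of $S_{3,n}$, via Lemma~\ref{lem:unbddstcw}) for $\ecrw_\alpha\not\WR\stcw$, together with $\ecrw_1\WR\ecrw_\alpha$; you merely make the subgraph-monotonicity of $\ecrw_\alpha$ explicit where the paper leaves it implicit. Your closing observation that $\tcw(G)\le\stcw(G)$ would let the single family $\{S_{3,n}\}$ settle both separations is also valid, though not the route the paper takes.
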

\begin{proof}
Note that $S_{2,n}$ is isomorphic to an induced subgraph of $S_{3,n}$. So, by (1) of Lemma~\ref{lem31}, $S_{2,n}$ has $1$-edge crossing width at most $2$. On the other hand, Lemma~\ref{lem:unbddstcw} shows that $\{S_{2,n}:n\in \mathbb{N}\}$ has unbounded slim tree-cut width. This shows that $\ecrw_1\not\WR\stcw$. Since $\ecrw_1\WR \ecrw_\alpha$, we have $\ecrw_\alpha\not\WR\stcw$.

By (1) and (2) of Lemma~\ref{lem31}, $\{S_{3,n}:n\in \mathbb{N}\}$ has $1$-edge crossing width at most $2$, but unbounded tree-cut width. Therefore,  
$\ecrw_1\not\WR\tcw$.
Since $\ecrw_1\WR \ecrw_\alpha$, we have $\ecrw_\alpha\not\WR\tcw$.
\end{proof}

We now show that $\tcw\not\WR\ecrw_\alpha$. This implies that $\alpha$-edge crossing width and tree-cut width are incomparable.

\begin{lemma}\label{lem32}
For every positive integer $\alpha$, $\tcw\not\WR\ecrw_\alpha$.
\end{lemma}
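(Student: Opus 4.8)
The goal is to exhibit a family of graphs with bounded tree-cut width and unbounded $\alpha$-edge-crossing width, since $\tcw\not\WR\ecrw_\alpha$ means exactly that there is a graph class on which $\ecrw_\alpha$ is not bounded by any function of $\tcw$. The witnesses are already available: I would reuse the graphs $G^{\alpha+1}_n$ from Lemma~\ref{lem:loweralpha}. By that lemma the family $\mathcal{B}=\{G^{\alpha+1}_n:n\in\mathbb{N}\}$ has unbounded $\alpha$-edge-crossing width, so it suffices to show $\tcw(G^{\alpha+1}_n)\le\alpha+1$ for every $n$, and the incomparability statement follows.

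Fix $n$ and write $G=G^{\alpha+1}_n$ with its bipartition $(A,B_n)$, where $|A|=\alpha+1$ and every vertex of $B_n$ has exactly two neighbors, both in $A$. I would take the ``star'' tree-cut decomposition $\mathcal{T}=(T,\mathcal{X})$: let $T$ be a star with center $r$ and one leaf $\ell_b$ for each $b\in B_n$, set $X_r=A$, and set $X_{\ell_b}=\{b\}$. The thickness is $\alpha+1$, but for tree-cut width what matters is the adhesions and the $3$-centers. For each leaf edge $r\ell_b$ of $T$, the adhesion is precisely the set of edges of $G$ incident with $b$, of which there are exactly two; hence every adhesion has size $2$.

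It remains to bound the $3$-centers. At the center $r$, the torso $H_r$ is obtained from $G$ by consolidating each singleton $\{b\}$ (the vertex set of the component $\{\ell_b\}$ of $T-r$), which merely relabels vertices, so $H_r\cong G$; every vertex of $B_n$ then has degree $2$ in $H_r$ and lies outside $X_r=A$, so exhaustively suppressing $B_n$ leaves $\widetilde{H_r}$ on vertex set $A$, giving $|V(\widetilde{H_r})|=\alpha+1$. At a leaf $\ell_b$, the torso collapses all of $V(G)\setminus\{b\}$ into one vertex of degree $2$ lying outside $X_{\ell_b}$, which is then suppressed, so $|V(\widetilde{H_{\ell_b}})|=1$. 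Therefore $\tcw(G)\le\max(2,\alpha+1)=\alpha+1$, as needed. I do not expect a genuine obstacle; the only point requiring care is the torso/$3$-center bookkeeping at the center node --- verifying that consolidating singletons changes nothing and that, after suppressing the degree-$2$ vertices of $B_n$, exactly the set $A$ survives (with parallel edges that do not affect the vertex count).
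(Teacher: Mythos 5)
Your proposal is correct and matches the paper's proof: the paper also uses the family $G^{\alpha+1}_k$ (unbounded $\alpha$-edge-crossing width by Lemma~\ref{lem:loweralpha}) together with the star tree-cut decomposition having center bag $A$ and singleton leaf bags from $B_k$, noting adhesions of size $2$ and a $3$-center at the center with vertex set contained in $A$, giving tree-cut width at most $\alpha+1$. Your extra bookkeeping at the leaves and about parallel edges is just a more detailed write-up of the same argument.
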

\begin{proof}
We recall the graphs $G^n_k$ constructed in Subsection~\ref{subsec:ecrwandalpha}. 
By Lemma~\ref{lem:loweralpha}, $\{G^{\alpha+1}_k:k\in \mathbb{N}\}$ has unbounded $\alpha$-edge-crossing width. 

We claim that for every $k$, $G^{\alpha+1}_k$ has tree-cut width at most $\alpha+1$.
Let $(A, B_k)$ be the bipartition of $G^{\alpha+1}_k$ given by the definition.
Let $T$ be a star with center $t$ and leaves $t_1, \ldots, t_{k{\alpha+1\choose 2}}$. 
Let $X_t=A$ and each $X_{t_i}$ consists of a vertex of $B_k$.
Let $\mathcal{T}=(T, \{X_v\}_{v\in V(T)})$.
Note that the $3$-center of $H_t$ has only vertices of $A$. Thus, it has at most $\alpha+1$ vertices. Also, for every edge $e$ of $T$, $\adh_{\mathcal{T}}(e)\le 2$.  
So, $\mathcal{T}$ is a tree-cut decomposition of tree-cut width at most $\alpha+1$.
\end{proof}

Lastly, we show $\stcw\WR\ecrw_\alpha$. For this, we use another parameter called \emph{super edge-cut width} introduced by Ganian and Korchemna~\cite{Ganian2022}. It was shown that it is equivalent to slim tree-cut width.
The \emph{super edge-cut width} of a graph $G$, denoted by $\secw(G)$, is defined as the minimum edge-cut width of $(H, T)$ over all supergraphs $H$ of $G$ and maximal spanning forests $T$ of $H$. 
\begin{theorem}[Ganian and Korchemna~\cite{Ganian2022}]\label{thm:superecw}
    $\secw\WR \stcw$ and $\stcw\WR \secw$.
\end{theorem}
We show that $\secw\WR \ecrw_\alpha$ for every $\alpha$.

\begin{lemma}\label{lem:secwecrw}
    For every positive integer $\alpha$, $\secw\WR \ecrw_\alpha$.
\end{lemma}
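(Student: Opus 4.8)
The plan is to reduce the statement to the thickness-one case: I will show that $\ecrw_1(G)\le\secw(G)-1$ for every graph $G$, and then invoke Lemma~\ref{lem1st}(1) (with $1\le\alpha$) to get $\ecrw_\alpha(G)\le\ecrw_1(G)\le\secw(G)-1$. This yields $\secw\WR\ecrw_\alpha$ with witness function $f(x)=x-1$, which is what is claimed.

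To prove $\ecrw_1(G)\le\secw(G)-1$, I would fix a supergraph $H$ of $G$ on the same vertex set and a maximal spanning forest $T$ of $H$ realizing $\secw(G)=1+\max_{v\in V(G)}\abs{E^{H,T}_{loc}(v)}$, where $E^{H,T}_{loc}(v)$ denotes the local feedback edge set of $v$ in $(H,T)$. The construction then mirrors the one sketched just after the definition of edge-crossing width: I turn $T$ into a tree-cut decomposition of $G$ with singleton bags. Concretely, let $T'$ be any tree on $V(G)$ obtained from $T$ by adding edges that connect the components of $T$ (these are new tree-edges of the decomposition, not edges of $H$), set $X_v=\{v\}$ for every $v\in V(G)$, and let $\mathcal{T}=(T',\{X_v\}_{v\in V(G)})$. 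Its thickness is $1$, so $\mathcal{T}$ is an admissible decomposition for $\ecrw_1$.

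The key observation making this work is that, since $T$ is a \emph{maximal} spanning forest of $H$ and $E(G)\subseteq E(H)$, both endpoints of every edge of $G$ lie in the same connected component of $T$; hence for each edge $ab\in E(G)$ the $a$--$b$ path in $T'$ coincides with the $a$--$b$ path in $T$. Consequently, for a node $t$ of $T'$, an edge $ab\in E(G)$ crosses $X_t$ if and only if $t$ is an internal vertex of the $a$--$b$ path in $T$; in particular then $ab\notin E(T)$, so $ab\in E^{H,T}_{loc}(t)$. It follows that $\cross_{\mathcal{T}}(t)\le\abs{E^{H,T}_{loc}(t)}\le\secw(G)-1$ for every node $t$, so the crossing number of $\mathcal{T}$ is at most $\secw(G)-1$, giving $\ecrw_1(G)\le\secw(G)-1$.

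There is no serious obstacle here; the one point that needs care is verifying that completing the forest $T$ into the tree $T'$ does not alter the relevant $T$-paths (and hence the crossing count), which is precisely where maximality of $T$ in $H$ is used — without it an edge of $G$ could join two components of $T$ and route through the added tree-edges. I would also remark in passing that, combined with Theorem~\ref{thm:superecw} (which gives $\stcw\WR\secw$), this lemma immediately yields $\stcw\WR\ecrw_\alpha$, completing the corresponding arrow in Figure~\ref{fig:parameters}.
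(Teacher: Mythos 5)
Your proof is essentially the paper's: extend the optimal maximal spanning forest $T$ of $H$ to a tree $T'$ with singleton bags, and use maximality of $T$ (so that every edge of $G$ has both endpoints in one $T$-component, hence its $T'$-path is its $T$-path) to conclude that any edge crossing $X_t$ lies in $E^{H,T}_{loc}(t)$. One small gloss to watch: the definition of $\secw$ allows the supergraph $H$ to have vertices outside $V(G)$ (which is why the paper sets $X_t=\{t\}\cap V(G)$, permitting empty bags), so you cannot simply assume $V(H)=V(G)$; the fix is immediate --- build $T'$ on $V(H)$ and use $X_t=\{t\}\cap V(G)$ --- and the rest of your argument goes through unchanged.
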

\begin{proof}
    It is sufficient to show that $\secw\WR\ecrw_1$. Let $G$ be a graph with $\secw(G)=k$. Then there exist a supergraph $H$ of $G$ and a maximal spanning forest $F$ of $H$ for which the edge-cut width of $(H, F)$ is $k$. 
    Since the edge-cut width of $(H,F)$ is $k$, we have $|E^{H,F}_{loc}(v)|\le k-1$ for every vertex $v$ of $H$. 

    We construct a tree-cut decomposition of $G$ by extending $F$ to a tree.
    Let $T$ be a tree on $V(F)$ containing $F$ as a subgraph. Observe that every edge $e\in E(T)\setminus E(F)$ connects two distinct components of $F$.
    For every $t\in V(T)$, let $X_t=\{t\}\cap V(G)$.
    Then  $\mathcal{T}=(T, \{X_t\}_{t\in V(T)})$ is a tree-cut decomposition of $G$ of thickness $1$.
    
    We claim that 
$\cross_\mathcal{T}(t)\le k-1$ for every $t\in V(T)$.    
    Let $t\in V(T)$. 
    Suppose $e=uv$ is an edge of $G$ crossing $X_t$. 
    As $u$ and $v$ are contained in a component of $G$, they are also contained in a component of $F$. Therefore, the unique path from $u$ to $v$ in $T$ does not contain an edge of $E(T)\setminus E(F)$. Thus, $e\in E^{H,F}_{loc}(v)$. 
    This implies that 
    $\cross_\mathcal{T}(t)\le \abs{E^{H,F}_{loc}(v)}\le k-1$. 
    
    We deduce that $\ecrw_1(G)\le k-1$.
\end{proof}

\subsection{Edge-crossing width and tree-partition-width}\label{subsec:ecrwtpw}
In this subsection, we show that edge-crossing width and tree-partition-width are equivalent.

\begin{lemma}
For every graph $G$, $\ecrw(G)\le\tpw(G)$. Thus, $\tpw\WR\ecrw$.
\end{lemma}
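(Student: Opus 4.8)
The plan is to take a tree-partition $(T,\{X_t\}_{t\in V(T)})$ of $G$ witnessing $\tpw(G)=w$ and argue that, viewed as a tree-cut decomposition, it already has edge-crossing width at most $w$. The thickness is $\max_t |X_t|\le w$ by the definition of tree-partition-width, so the only thing to check is that the crossing number is also at most $w$. Fix a node $t\in V(T)$ and let $T_1,\dots,T_m$ be the components of $T-t$, with $Z_i=\bigcup_{s\in V(T_i)}X_s$. An edge $e=ab$ of $G$ crosses $X_t$ precisely when $a\in Z_i$ and $b\in Z_j$ for distinct $i\ne j$. By the tree-partition property, such an edge $ab$ must have its endpoints either in a common bag $X_s$ — impossible here since $X_s\subseteq Z_i$ for a single $i$ when $s\ne t$, and if $s=t$ then both endpoints lie in $X_t$, contradicting $a\in Z_i$ — or in bags $X_{s_1},X_{s_2}$ with $s_1s_2\in E(T)$. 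In the latter case, since $a\in Z_i$ and $b\in Z_j$ with $i\ne j$, the only way two adjacent nodes of $T$ can sit in different components of $T-t$ is that one of $s_1,s_2$ equals $t$; say $s_1=t$, so $a\in X_t$ and $b\in X_{s_2}$ where $s_2$ is a neighbour of $t$.

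So every edge crossing $X_t$ has one endpoint in $X_t$. For each vertex $a\in X_t$, I claim $a$ is incident to at most one crossing edge that "leaves" toward a fixed neighbour — more carefully, I will bound the total number of crossing edges by $|X_t|$. The key observation is that for a fixed $a\in X_t$, if $ab$ and $ab'$ both cross $X_t$ with $b\in X_{s}$, $b'\in X_{s'}$ and $s,s'$ neighbours of $t$, the tree-partition condition applied to edge $ab$ forces $s$ to be a neighbour of $t$, and likewise $s'$; but $b,b'$ can lie in different subtrees $T_i,T_j$. This does not immediately give a per-vertex bound, so instead I would route the counting through the adhesion structure: each neighbour $s_\ell$ of $t$ contributes the edges in $\delta_G(X_t, X_{s_\ell})$, i.e.\ exactly the adhesion of $ts_\ell$, and the set of all edges crossing $X_t$ is the disjoint union of these adhesions. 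Hence $\cross_{\mathcal T}(t)=\sum_\ell |\adh_{\mathcal T}(ts_\ell)|$, which is \emph{not} bounded by $w$ directly. This suggests the naive decomposition does not work, and the correct approach is to first massage the tree-partition: repeatedly split a node $t$ with many neighbours into a path of copies, distributing the neighbours so that each copy has bounded adhesion-sum while keeping every $X_t$ unchanged — but edge-crossing width counts \emph{all} edges crossing a bag in $T-t$, so moving a single bag to a pendant leaf makes its crossing number equal to the degree of that vertex in $G$, which is again unbounded.

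The resolution I expect the authors use, and which I would pursue, is to not keep the tree structure at all but rebuild it: from the tree-partition produce a tree-cut decomposition where each bag is a single vertex of $G$ arranged so that crossings are controlled by $w$. Concretely, subdivide each edge $t_1t_2$ of $T$ and, on the subdivision path, place the $\le w$ vertices of $X_{t_1}$ (or rather reorganize so that each vertex of $G$ gets its own node), using the fact that each $X_t$ induces at most $\binom{w}{2}$ internal edges and sends at most $w$ edges across each tree edge — since the underlying partition has every edge of $G$ going between adjacent parts or inside a part, each vertex $v\in X_t$ has all its neighbours inside $X_t$ or inside $X_s$ for neighbours $s$ of $t$. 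Then placing $v$ on a long path and pushing its incident tree edges appropriately, the number of $G$-edges crossing any single-vertex bag is at most $\deg(v)$ restricted to... again this is the sticking point. The honest plan, then, is: (i) show thickness $\le w$ is automatic; (ii) show that because $G$-edges only go between adjacent bags or within a bag, for any node $t$ every crossing edge has an endpoint in $X_t$ \emph{and} the other endpoint in an adjacent bag, so $\cross_{\mathcal T}(t)\le \sum_{v\in X_t}|\{u\in N_G(v): \sigma(u)\ne \sigma(v)\}|$; and (iii) conclude that if additionally we choose the tree-partition so that $T$ has maximum degree bounded in terms of $w$ (which can be done, e.g.\ Ding--Oporowski-type normalization gives a tree-partition of width $O(w)$ with bounded degree, or one argues via $\tpw$ being equivalent to bounded-degree tree-partition-width), then $\cross_{\mathcal T}(t)\le |X_t|\cdot\Delta(T)\cdot w \le f(w)$. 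The main obstacle is exactly this normalization step: pinning down the precise bounded-degree (or bounded-adhesion) version of a tree-partition that simultaneously keeps thickness $O(w)$ and makes the per-node crossing count $O(w^{O(1)})$; everything else is routine bookkeeping using that edges respect the partition.
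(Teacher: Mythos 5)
Your proposal goes wrong at a single, decisive point: you misread what it means for an edge to \emph{cross} $X_t$. By the paper's definition, $\cross_{\mathcal T}(t)$ counts edges whose two endpoints lie in two \emph{distinct} sets among $Z_1,\dots,Z_m$, where $Z_i=\bigcup_{s\in V(T_i)}X_s$ and $T_1,\dots,T_m$ are the components of $T-t$. The bag $X_t$ itself is not one of these sets, so an edge with an endpoint in $X_t$ never crosses $X_t$.

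Your own case analysis already establishes exactly what is needed, you just do not notice it. You correctly observe that in a tree-partition, any edge $ab$ either has both endpoints in one bag $X_s$ (so both in $X_t$, or both in the same $Z_i$ --- either way not crossing), or has $a\in X_{s_1}$, $b\in X_{s_2}$ with $s_1s_2\in E(T)$. In the latter case you correctly argue that the only way $a$ and $b$ could lie in distinct $Z_i,Z_j$ is if one of $s_1,s_2$ equals $t$ --- but then that endpoint is in $X_t$, hence in none of the $Z_i$, so the edge does \emph{not} cross $X_t$. The correct conclusion at this point is $\cross_{\mathcal T}(t)=0$ for every $t$, giving $\ecrw(G)\le\max(\tpw(G),0)=\tpw(G)$ directly. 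Instead you take ``every crossing edge has an endpoint in $X_t$'' as a nontrivial count, reinterpret it as a sum of adhesions, and head off into normalizing the tree-partition to bound the degree of $T$ --- all of which is unnecessary and does not lead anywhere useful. The paper's proof is a one-liner: a tree-partition, viewed as a tree-cut decomposition, has crossing number zero at every node.
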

\begin{proof}
Let $(T,\mathcal{X})$ be a tree-partition of $G$ whose thickness is $\tpw(G)$. By definition, $(T,\mathcal{X})$ is a tree-cut decomposition, and for every node $t$ of $T$, we have $\cross_{\mathcal{T}}(t)=0$. Thus, $\ecrw(G)\le\tpw(G)$.
\end{proof}

To show $\ecrw\WR\tpw$, we use a characterization of graphs of bounded tree-partition-width given by Ding and Oporowski~\cite{tpw1996}. 
Let $n$ be a positive integer. An $n$-grid is a graph whose vertex set is $\{(i,j):i,j\in [n]\}$ and edge set is $\{\{(i,j),(i',j')\}:i,j\in[n], |i-i'|+|j-j'|=1\}$.
An \emph{$n$-fan} is a graph $F_n$ obtained from the path graph on $n$ vertices by adding a vertex adjacent to all vertices of the path.
A \emph{thickened $n$-star} is $S_{n,n}$ constructed in Subsection~\ref{subsec:ecrwandtcw}.
A \emph{thickened $n$-path} is a graph obtained from the path graph on $n$ vertices by replacing each edge by $n$ internally vertex-disjoint paths of length two.
An \emph{$n$-wall} is a graph constructed from $n$-grid by
\begin{itemize}
    \item deleting all edges of the form $\{(2i,2j-1),(2i,2j)\}$ for all positive integers $i$ and $j$ such that $2\le 2i\le n$ and $1\le 2j\le n$, 
    \item deleting all edges form $\{(2i-1,2j),(2i-1,2j+1)\}$ for all integers $i$ and $j$ such that $1\le 2i-1\le n$ and $2\le 2j+1\le n$, and then 
    \item deleting vertices of degree $1$.
\end{itemize} 
See Figure~\ref{fig:FourFigures} for examples.

\begin{figure}
    \centering
    \includegraphics[scale=0.7]{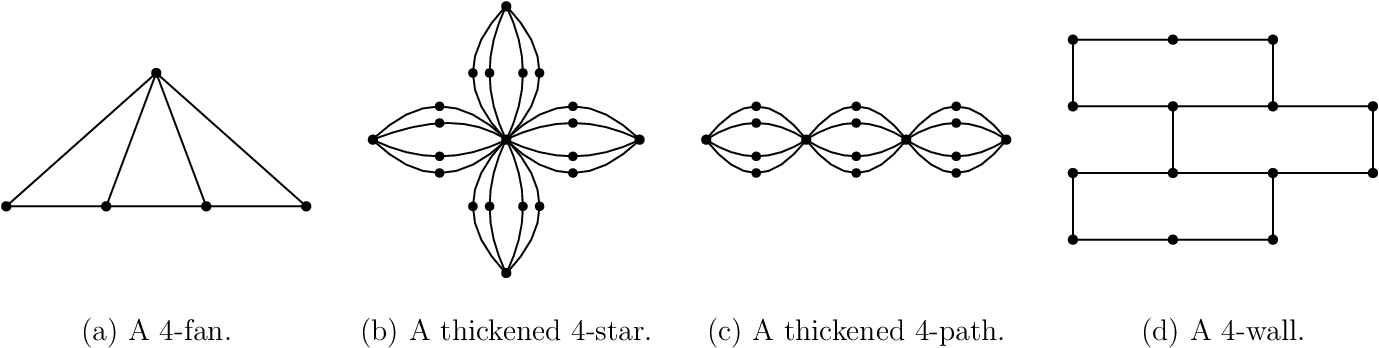}
    \caption{An example of an $n$-fan, a thickened $n$-star, a thickened $n$-path, and an $n$-wall, when $n=4$.}
    \label{fig:FourFigures}
\end{figure}

\begin{theorem}[Ding and Oporowski~\cite{tpw1996}]\label{tpw-iff}
Let $\gamma$ be a function such that for every graph $G$, $\gamma(G)$ is the maximum integer $r$ for which $G$ contains a subdivision of an $r$-fan, a thickened $r$-star, a thickened $r$-path, or an $r$-wall.
    Then $\gamma\WR \tpw$ and $\tpw\WR \gamma$. 
\end{theorem}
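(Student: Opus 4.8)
This is the characterization of Ding and Oporowski~\cite{tpw1996}, and in this paper we use it as a black box; nevertheless, here is how I would prove it. The statement is a conjunction of two implications of very different difficulty. The direction $\tpw\WR\gamma$ says that the four listed families are genuine obstructions, i.e.\ that a graph containing a large subdivision of one of them has large tree-partition-width; this is the routine direction. The direction $\gamma\WR\tpw$ says that they are the \emph{only} obstructions, equivalently that a graph with no large subdivision of an $r$-fan, thickened $r$-star, thickened $r$-path, or $r$-wall admits a tree-partition of bounded thickness, and this is the substantial structural content.

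For $\tpw\WR\gamma$ I would first note that tree-partition-width is monotone under subgraphs (restrict every bag to the vertex set of the subgraph), so it suffices to bound $\tpw$ from below on a subdivision of each of the four obstructions. For walls this requires nothing new: a subgraph that is a subdivision of the $r$-wall witnesses the $r$-wall as a minor, so the tree-width of $G$ is at least that of the $r$-wall, which tends to infinity with $r$; combining this with $\tw(G)\le 5\ecrw(G)-1\le 5\tpw(G)-1$ (Lemma~\ref{lem:twecrw} together with $\ecrw(G)\le\tpw(G)$) yields the required lower bound on $\tpw(G)$. The remaining three families have bounded tree-width, so I would argue directly from the locality of tree-partitions: if $xy\in E(G)$ then the bags containing $x$ and $y$ coincide or are adjacent in the tree of the partition, hence a subdivided edge of length $\ell$ confines its endpoints to bags at tree-distance at most $\ell$. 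In a subdivision of an $r$-fan this pins all the spine branch-vertices into a ball of bounded radius around the bag of the apex and forces consecutive ones into bags at bounded mutual distance; a pigeonhole count on how the vertices of the short subdivided spine- and spoke-segments joining them can be distributed in that ball then forces many vertices into one bag unless the thickness is already $\Omega(r)$. The thickened $r$-star and thickened $r$-path are handled by the same mechanism, using the high multiplicity of short connections between a pair of hub vertices.

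I would not attempt to reprove $\gamma\WR\tpw$; it is the main structural theorem of~\cite{tpw1996} and is where all the difficulty lies, which is precisely why we cite it rather than include a proof. Its proof begins from the fact that a graph with no large wall subdivision has bounded tree-width (by the excluded-grid theorem), fixes a tree-decomposition of small width, and builds a tree-partition from it by a careful recursive argument that regroups bags and absorbs the high-degree vertices; whenever a regrouping step fails, one must extract a large subdivided $r$-fan, thickened $r$-star, or thickened $r$-path. I expect this last step --- certifying that an obstruction to regrouping really produces one of the three bounded-treewidth substructures --- to be the genuine obstacle.
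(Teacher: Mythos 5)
The paper does not prove this statement at all: it is quoted verbatim from Ding and Oporowski and used as a black box (only the direction $\gamma\WR\tpw$, i.e.\ $\tpw(G)\le f(\gamma(G))$, is actually invoked, in Lemma~\ref{lem:tpwecrw}). So your decision to cite the substantive direction rather than reprove it is exactly the paper's treatment, and there is no internal proof to compare your sketch against.

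Two remarks on the extra material you volunteered for the routine direction $\tpw\WR\gamma$. The wall case is fine: subgraph-monotonicity of $\tpw$ plus $\tw(G)\le 5\ecrw(G)-1\le 5\tpw(G)-1$ does the job. But your locality/pigeonhole argument for the fan, thickened star and thickened path has a genuine gap: the statement ``a subdivided edge of length $\ell$ confines its endpoints to bags at tree-distance at most $\ell$'' is only useful when $\ell$ is bounded, whereas in an arbitrary subdivision the spoke and spine segments can be arbitrarily long, so the branch vertices are not pinned into any ball of bounded radius around the apex bag and the count does not go through as stated. A clean fix is already available inside this paper: both $\ecrw$ and $\tpw$ are monotone under subgraphs, $\ecrw(G)\le\tpw(G)$, Lemma~\ref{lem:subdivi} transfers lower bounds through subdivisions (if $H$ is a subdivision of the base graph $B$ then $\ecrw(B)\le 2\ecrw(H)$), and Lemmas~\ref{lem:thickenpath}, \ref{lem:thickenstar} and~\ref{lem:fanecrw} give the needed lower bounds on $\ecrw$ of thickened paths, thickened stars and fans; chaining these yields $\tpw(G)\ge\ecrw(G)\ge\ecrw(H)\ge\tfrac12\ecrw(B)$, which is large when $r$ is, with no assumption on the lengths of the subdivided segments.
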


We first show that if $H$ is a subdivision of $G$, then $\ecrw(G)\le 2\ecrw(H)$. Then we verify that each of an $r$-fan, a thickened $r$-star, a thickened $r$-path, and an $r$-wall has large edge-crossing width if $r$ is large. This will imply $\ecrw\WR\tpw$ by Theorem~\ref{tpw-iff}.

\begin{lemma}\label{lem:subdivi}
    Let $G$ and $H$ be graphs.
    If $H$ is a subdivision of $G$, then $\ecrw(G)\le 2\ecrw(H)$.
\end{lemma}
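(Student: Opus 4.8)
The plan is to take an optimal tree-cut decomposition $\mathcal{T} = (T, \{X_t\}_{t \in V(T)})$ of $H$ realizing $\ecrw(H)$ and to surgically turn it into a tree-cut decomposition of $G$ whose edge-crossing width is at most $2\ecrw(H)$. Since $H$ is a subdivision of $G$, every edge $e = uv \in E(G)$ corresponds to an internally disjoint path $P_e$ in $H$ from $u$ to $v$ whose internal vertices are the subdivision vertices; the vertex set of $H$ is $V(G)$ together with all these internal vertices, which I will call \emph{branch vertices} of $H$ (the vertices of $G$) versus \emph{subdivision vertices}. I want a decomposition of $G$, so I must get rid of the subdivision vertices from the bags, and I must account for the fact that a single edge of $G$ now replaces a whole path of edges of $H$.

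First I would build the tree $T'$ for $G$: start from $T$, and for each subdivision vertex $s$, remove $s$ from its bag $X_{t}$. This leaves some bags empty, which is allowed. The collection $\mathcal{X}' = \{X'_t\}$ with $X'_t = X_t \cap V(G)$ already satisfies $\bigcup_t X'_t = V(G)$ and has thickness at most $\ecrw(H)$. The content of the proof is bounding $\cross_{\mathcal{T}'}(t)$ for each $t \in V(T')$. Fix a node $t$ and a component $T_i$ of $T - t$; an edge $uv \in E(G)$ crosses $X'_t$ (separating $T_i$ from the rest) exactly when the two branch endpoints $u, v$ lie in different components of $T - t$ after restricting to $V(G)$. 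The key observation is that when $uv \in E(G)$ crosses $X'_t$ in $\mathcal{T}'$, the corresponding path $P_e$ in $H$ must use some edge of $H$ that crosses $X_t$ in $\mathcal{T}$ — indeed, $P_e$ goes from a vertex on the $T_i$-side to a vertex off the $T_i$-side, so at least one consecutive pair of vertices on $P_e$ straddles the partition at $t$. Thus each $G$-edge crossing $X'_t$ can be charged to at least one $H$-edge crossing $X_t$, giving $\cross_{\mathcal{T}'}(t) \le \cross_{\mathcal{T}}(t) \le \ecrw(H)$, which is even better than claimed.

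The main obstacle — and the reason the factor $2$ appears — is more subtle than the crossing count: it is that deleting subdivision vertices can leave a bag $X'_t$ empty while $t$ still has high degree in $T'$, but worse, it is the \emph{validity} of the map $e \mapsto P_e$ when two edges of $G$ share a subdivision vertex. They don't: distinct edges of $G$ give internally disjoint paths $P_e$, so the charging above is actually a well-defined injection-up-to-doublecounting, and no factor $2$ is needed for crossings. Where I expect the factor $2$ to genuinely be needed is if one instead keeps subdivision vertices and reorganizes the tree, or if the definition forces us to also merge nodes of $T'$ whose bags became empty (to keep $T'$ a legitimate tree-cut decomposition one may want to suppress degree-$\le 2$ empty nodes, and a suppressed node's two incident "sides" can each contribute crossings, doubling the bound). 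So the careful version of the argument is: (i) perform the bag restriction as above; (ii) optionally suppress empty nodes of degree at most $2$ to tidy $T'$; (iii) show that suppressing one such node merges two adjacent components and hence $\cross$ at a surviving node is bounded by the sum of the two crossing numbers at the two endpoints of the suppressed edge in $\mathcal{T}$, each at most $\ecrw(H)$, for a total of at most $2\ecrw(H)$; and simultaneously (iv) note the thickness is unchanged, hence at most $\ecrw(H) \le 2\ecrw(H)$. Combining (iii) and (iv) gives $\ecrw(G) \le 2\ecrw(H)$. I would write step (iii) as the heart of the proof and keep (i), (ii), (iv) brief.
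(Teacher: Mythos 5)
Your construction of $\mathcal{T}'$ (same tree, bags restricted to $V(G)$) is exactly the paper's, and you correctly note that empty bags are allowed so no tree surgery is needed. But your charging argument contains a genuine error, and your subsequent explanation of where the factor $2$ comes from is a red herring. You claim that if $uv\in E(G)$ crosses $Y_t$ in $\mathcal{T}'$ then the subdivided path $P_e$ must contain an edge of $H$ crossing $X_t$ in $\mathcal{T}$, hence $\cross_{\mathcal{T}'}(t)\le\cross_{\mathcal{T}}(t)$. This is false. When a consecutive pair on $P_e$ ``straddles the partition at $t$,'' one of its two vertices can lie \emph{inside} $X_t$ itself, and an edge of $H$ with an endpoint in $X_t$ does \emph{not} cross $X_t$: by definition, a crossing edge must join two distinct sets among $\bigcup_{s\in V(T_i)}X_s$ for the components $T_i$ of $T-t$, and $X_t$ is not one of these sets. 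Concretely, if $e=uv\in E(G)$ is subdivided once in $H$ by a vertex $w$ with $w\in X_t$, $u$ on the $T_1$-side and $v$ on the $T_2$-side, then neither $uw$ nor $wv$ crosses $X_t$, yet after deleting $w$ from the bag the edge $uv$ crosses $Y_t$.

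This missing case is precisely where the factor $2$ comes from: each $G$-edge crossing $Y_t$ either has a subdivided path containing an edge of $H$ that crosses $X_t$ (at most $\cross_{\mathcal{T}}(t)\le\ecrw(H)$ of these), \emph{or} has a subdivision vertex lying in $X_t$ (at most $|X_t|\le\ecrw(H)$ of these, using that distinct edges of $G$ have internally disjoint subdivided paths). Hence $\cross_{\mathcal{T}'}(t)\le\cross_{\mathcal{T}}(t)+|X_t|\le 2\ecrw(H)$. Your alternative story about suppressing empty degree-$\le 2$ nodes is not needed --- empty bags are permitted by the definition of tree-cut decomposition, the tree $T$ is left unchanged, and no such merging step appears in the proof --- and it would not by itself fix the flawed charging bound above.
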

\begin{proof}
Let $\mathcal{T}=(T,\{X_t\}_{t\in V(T)})$ be a tree-cut decomposition of $H$ of edge-crossing width $\ecrw(H)$.
For every edge $uv$ of $G$, if $uw_1\cdots w_mv$ is the subdivided path in $H$, then 
let $I(uv)=\{w_1, \ldots, w_m\}$. Let $I=\bigcup_{uv\in E(G)}I(uv)$, and let $Y_t=X_t\setminus I$ for all $t\in V(T)$.

We claim that $\mathcal{T}'=(T, \{Y_t\}_{t\in V(T)})$ is a tree-cut decomposition of $G$ of edge-crossing width at most $2\ecrw(H)$. Clearly, the thickness of $\mathcal{T'}$ is at most the thickness of $\mathcal{T}$. Thus, it is sufficient to show that the crossing number of $\mathcal{T'}$ is at most $2\ecrw(H)$.

Let $t$ be a node of $T$. If $V(T)=\{t\}$, then there is nothing to prove. We may assume that $T$ has at least two nodes. Let $T_1, \ldots, T_m$ be the connected components of $T-t$, and for each $i\in [m]$, let $V_i=\bigcup_{p\in V(T_i)}Y_p$. 

Let $F$ be the set of edges crossing $Y_t$ in $\mathcal{T}'$. Let $uv\in F$, and let $uw_1 \cdots w_m v$ be the subdivided path in $G$. As $uv$ crosses $Y_t$, $u$ and $v$ are contained in distinct sets of $Y_1, \ldots, Y_m$. Observe that if there is no edge in the path $uw_1\cdots w_mv$ crossing $X_t$ in $\mathcal{T}$, then 
one of $w_1, \ldots, w_m$ must be contained in $X_t$.
Therefore, the latter case can happen for at most $|X_t|$ many edges of $F$. This implies that $\cross_{\mathcal{T}'}(t)\le \cross_\mathcal{T}(t)+|X_t|\le 2\ecrw(H)$, as required.
\end{proof}

\begin{lemma}\label{lem:thickenpath}
Let $k$ be a positive integer. For a thickened $4k$-path $H$, $\ecrw(H)\ge k$.
\end{lemma}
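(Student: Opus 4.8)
The plan is to argue by contradiction. Suppose $\mathcal{T}=(T,\{X_t\}_{t\in V(T)})$ is a tree-cut decomposition of the thickened $4k$-path $H$ with edge-crossing width at most $k-1$; thus every bag has size at most $k-1$ and $\cross_{\mathcal{T}}(t)\le k-1$ for every node $t$. Write the underlying path of $H$ as $p_1,\dots,p_{4k}$, and for each $i\in[4k-1]$ let $x^{(i)}_1,\dots,x^{(i)}_{4k}$ be the internal (degree-two) vertices of the $4k$ parallel length-two paths joining $p_i$ to $p_{i+1}$. For a vertex $z$ of $H$, let $\sigma(z)$ denote the node whose bag contains $z$.

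First I would note that the $4k$ core vertices cannot all lie in a single bag (it would then have size $4k>k-1$), so there is an index $i$ with $u:=\sigma(p_i)\ne v:=\sigma(p_{i+1})$; fix such an $i$ and let $P$ be the path in $T$ from $u$ to $v$. The heart of the argument is that, whatever the positions $\sigma(x^{(i)}_j)$ of the $4k$ internal vertices are, at most $k-1$ of them can hide inside any one bag, while every remaining one is forced to contribute a crossing edge at some node near $P$.

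I would then split into two cases. If $P$ has an internal node $t$, then $p_i$ and $p_{i+1}$ lie in distinct components of $T-t$; for each $j$ with $\sigma(x^{(i)}_j)\neq t$, whichever component of $T-t$ contains $x^{(i)}_j$ (the one containing $p_i$, the one containing $p_{i+1}$, or a third one), at least one of the two edges $p_ix^{(i)}_j$ and $x^{(i)}_jp_{i+1}$ joins two distinct components of $T-t$ and hence crosses $t$; since the edges selected in this way are pairwise distinct (each contains the distinct vertex $x^{(i)}_j$), we get $\cross_{\mathcal{T}}(t)\ge 4k-|X_t|\ge 3k+1>k-1$. If instead $uv\in E(T)$, let $A$ and $B$ be the two components of $T-uv$ with $u\in A$ and $v\in B$: for every $j$ with $\sigma(x^{(i)}_j)\in A\setminus\{u\}$ the edge $x^{(i)}_jp_{i+1}$ crosses $u$, and for every $j$ with $\sigma(x^{(i)}_j)\in B\setminus\{v\}$ the edge $p_ix^{(i)}_j$ crosses $v$; as at most $|X_u|$ of the $x^{(i)}_j$ lie in $X_u$ and at most $|X_v|$ of them lie in $X_v$, we obtain $\cross_{\mathcal{T}}(u)+\cross_{\mathcal{T}}(v)\ge 4k-|X_u|-|X_v|\ge 2k+2$, so one of $u,v$ has crossing number at least $k+1>k-1$. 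Either case contradicts the assumed bound, so $\ecrw(H)\ge k$.

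I do not expect a genuine obstacle here; the only real work is the bookkeeping in the two cases — verifying in each position of $x^{(i)}_j$ which of its two incident edges joins two different components of $T-t$ (respectively of $T-u$ or $T-v$), and confirming that the edges selected for distinct $j$ are distinct. The one point to be careful about is the degenerate situation in which the path between $u$ and $v$ in $T$ is a single edge, which is exactly why the argument is organized into the two cases above.
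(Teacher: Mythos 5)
Your proof is correct and takes essentially the same approach as the paper's: locate an index $i$ where consecutive core vertices $p_i,p_{i+1}$ lie in different bags, then use the $4k$ common neighbours $x^{(i)}_j$ to force too many crossings at a node on (or incident to) the connecting path. The paper streamlines the case analysis by always working with the first edge $pp^*$ of that path and observing that one of its two sides holds at least $2k$ of the $x^{(i)}_j$, whereas you split on whether the path has an internal node; the bookkeeping differs slightly but the underlying argument is the same.
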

\begin{proof}
Let 
\begin{itemize}
    \item $V(H)=\{u_i:i\in [4k]\}\cup\{v_{i,j}:i\in [4k-1], j\in[4k]\}$ and
    \item $E(H)=\{u_iv_{i,j}, u_{i+1}v_{i,j}:i\in [4k-1], j\in[4k]\}$.
\end{itemize} 

Suppose for contradiction that 
$H$ admits a tree-cut decomposition $\mathcal{T}=(T,\{X_t\}_{t\in V(T)})$ of edge-crossing width less than $k$. 
As each bag has size less than $k$, one bag does not contain all vertices in $\{u_i:i\in [4k]\}$. This implies there exists $a\in [4k-1]$ such that $u_a$ and $u_{a+1}$ are contained in distinct bags of $\mathcal{T}$. Let $X_{p}$ and $X_{q}$ be the bags contain $u_a$ and $u_{a+1}$, respectively.

Let $pp^*$ be the edge of $T$ on the path from $p$ to $q$ in $T$. Let $T_p$ and $T_{p^*}$ be the components of $T-pp^*$ containing $p$ and $p^*$, respectively. Let $V_p=\bigcup_{t\in V(T_p)}X_t$ and $V_{p^*}=\bigcup_{t\in V(T_{p^*})}X_t$.
Note that either $V_p$ or $V_{p^*}$ contains at least 
$2k$ vertices in $\{v_{a,j}:j\in[4k]\}$. 

If $V_p$ contains at least 
$2k$ vertices in $\{v_{a,j}:j\in[4k]\}$, then $\cross_{\mathcal{T}}(p)\ge k$, because $|X_p|<k$.
Similarly, 
if $V_{p^*}$ contains at least 
$2k$ vertices in $\{v_{a,j}:j\in[4k]\}$, then $\cross_{\mathcal{T}}(p^*)\ge k$, because $|X_{p^*}|<k$.
These contradict the assumption that $\mathcal{T}$ has edge-crossing width less than $k$.
\end{proof}

\begin{lemma}\label{lem:thickenstar}
Let $k$ be a positive integer. For a thickened $4k$-star $H$, $\ecrw(H)\ge k$.
\end{lemma}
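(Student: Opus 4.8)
The statement to prove is that a thickened $4k$-star $H$ has $\ecrw(H) \ge k$. The plan is to mirror the structure of the proof of Lemma~\ref{lem:thickenpath} for the thickened path, adapting it to the star topology. First I would fix notation: write $V(H) = \{u_0\} \cup \{u_i : i \in [4k]\} \cup \{v_{i,j} : i \in [4k], j \in [4k]\}$ where $u_0$ is the center, $u_1, \ldots, u_{4k}$ are the original leaves of the star, and $v_{i,j}$ for $j \in [4k]$ are the $4k$ subdivision vertices on the $4k$ parallel paths of length two joining $u_0$ to $u_i$; the edges are $u_0 v_{i,j}$ and $u_i v_{i,j}$.

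Next, I would argue by contradiction: suppose $H$ admits a tree-cut decomposition $\mathcal{T} = (T, \{X_t\}_{t \in V(T)})$ of edge-crossing width less than $k$, so every bag has size less than $k$ and every node has crossing number less than $k$. Let $p$ be the node whose bag contains $u_0$. The key combinatorial fact is that since each of the $4k$ leaves $u_i$ has $4k$ internally disjoint length-two paths to $u_0$, and bags are small, many of these leaves are "separated" from $u_0$ in the decomposition. Concretely: at most $k-1$ of the vertices $u_i$ can lie in $X_p$ itself (as $|X_p| < k$), so there are at least $4k - (k-1) = 3k+1 \ge 2$ leaves $u_a$ with $u_a \notin X_p$; pick one such $a$. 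Then, just as in Lemma~\ref{lem:thickenpath}, let $p^*$ be the neighbor of $p$ on the path from $p$ to the node containing $u_a$, let $T_p, T_{p^*}$ be the two sides of $T - pp^*$, and set $V_p, V_{p^*}$ accordingly, with $u_0 \in V_p$ and $u_a \in V_{p^*}$. Among the $4k$ vertices $v_{a,1}, \ldots, v_{a,4k}$, at most $|X_p \cup X_{p^*}| \le 2k - 2$ can lie in the bags $X_p, X_{p^*}$, so at least $2k+2$ of them lie in $V_p \cup V_{p^*} \setminus (X_p \cup X_{p^*})$; hence one of $V_p, V_{p^*}$ contains at least $k+1$ of them outside $X_p, X_{p^*}$ — actually it is cleaner to say one side contains at least $\lceil (4k)/2 \rceil = 2k$ of the $v_{a,j}$'s, and then subtract the relevant bag. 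If $V_p$ contains at least $2k$ of the $v_{a,j}$, then since $u_a \in V_{p^*}$ and each such $v_{a,j}$ is adjacent to $u_a$, and $|X_p| < k$, at least $k$ of the edges $u_a v_{a,j}$ cross $X_p$...

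Here the main obstacle appears, and it is a genuine difference from the path case: in the path lemma the relevant "cut" edge was $pp^*$ and both endpoints of a crossing edge $u_a v_{a,j}$ were controlled, but for the star I must be careful about which node's crossing number blows up. The right move is: if $V_{p^*}$ (the side containing $u_a$) contains at least $2k$ of the $v_{a,j}$, then the edges $u_0 v_{a,j}$ for these $j$ all cross $X_p$ (since $u_0 \in X_p$ and the $v_{a,j}$ lie outside $V_p$... ) — no: $u_0 \in X_p \subseteq V_p$ and $v_{a,j} \in V_{p^*}$, so $u_0 v_{a,j} \in \adh_{\mathcal{T}}(pp^*)$ and in particular, removing $X_p$'s own vertices, these edges cross $X_p$; since $|X_p| \le k-1$ and the $v_{a,j}$ lie in a single component $T_{p^*}$ of $T - p$, we get $\cross_{\mathcal{T}}(p) \ge 2k - (k-1) \ge k$, a contradiction. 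Symmetrically, if $V_p$ contains at least $2k$ of the $v_{a,j}$, use the edges $u_a v_{a,j}$ which cross $X_{p^*}$ (as $u_a \in X_{p^*}$, $v_{a,j} \in V_p$, and $V_p$ is one component of $T - p^*$), giving $\cross_{\mathcal{T}}(p^*) \ge 2k - (k-1) \ge k$, again a contradiction. In either case $\mathcal{T}$ has edge-crossing width at least $k$, contradiction, and the lemma follows. The one subtlety worth double-checking when writing the full proof is that the $v_{a,j}$ vertices landing in a given component of $T - p$ (respectively $T - p^*$) really do all contribute to $\cross_{\mathcal{T}}(p)$ (respectively $\cross_{\mathcal{T}}(p^*)$) rather than being absorbed as "internal" to one side — but this is immediate since their unique neighbor $u_0$ (respectively $u_a$) lies in $X_p$ (respectively $X_{p^*}$), so these are genuinely edges from $X_p$ to a single off-$p$ component, each counted in $\cross_{\mathcal{T}}(p)$.
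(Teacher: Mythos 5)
Your initial instinct --- that when $V_p$ contains at least $2k$ of the $v_{a,j}$, the edges $u_a v_{a,j}$ cross $X_p$ (since $u_a\in V_{p^*}$ and all but at most $|X_p|\le k-1$ of these $v_{a,j}$ lie in $V_p\setminus X_p$, hence in components of $T-p$ distinct from $T_{p^*}$) --- is exactly right and is the paper's argument. But the ``right move'' you then substitute rests on a misreading of the definition of $\cross_{\mathcal{T}}(t)$, and that misreading breaks both of your cases.

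By definition, $\cross_{\mathcal{T}}(t)$ counts only edges whose two endpoints lie in two \emph{distinct} sets $Z_i=\bigcup_{v\in V(T_i)}X_v$, where the $T_i$ are the components of $T-t$. An edge with one endpoint in the bag $X_t$ itself is incident with at most one $Z_i$ and therefore does \emph{not} cross $X_t$. This directly contradicts the ``subtlety'' you claim to have checked at the end: edges ``from $X_p$ to a single off-$p$ component'' are precisely the edges that are \emph{not} counted in $\cross_{\mathcal{T}}(p)$. Concretely, in your case ``$V_{p^*}$ contains $\ge 2k$ of the $v_{a,j}$'' you use the edges $u_0 v_{a,j}$ with $u_0\in X_p$ and claim they contribute to $\cross_{\mathcal{T}}(p)$; they do not. (The subtraction $2k-(k-1)$ is also misplaced there, since the $v_{a,j}\in V_{p^*}$ are disjoint from $X_p$.) In your second case you assert $u_a\in X_{p^*}$, which is unjustified --- $p^*$ is only the neighbor of $p$ toward $q$, and $u_a\in X_q$ with $q$ possibly deeper in $T_{p^*}$ --- and if $u_a$ did lie in $X_{p^*}$ then, by the same issue, the edges $u_a v_{a,j}$ would not contribute to $\cross_{\mathcal{T}}(p^*)$.

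The fix is to swap which node you charge in each case, matching the paper. If $V_p$ contains at least $2k$ of the $v_{a,j}$, at least $k+1$ of them lie in $V_p\setminus X_p$, i.e.\ in components of $T-p$ other than $T_{p^*}$, while $u_a\in V_{p^*}$ lies in $T_{p^*}$; the edges $u_a v_{a,j}$ then cross $X_p$, so $\cross_{\mathcal{T}}(p)\ge k$. Symmetrically, if $V_{p^*}$ contains at least $2k$ of the $v_{a,j}$, at least $k+1$ lie in $V_{p^*}\setminus X_{p^*}$, in components of $T-p^*$ other than $T_p$, while $u_0\in X_p\subseteq V_p$ lies in $T_p$; the edges $u_0 v_{a,j}$ then cross $X_{p^*}$, so $\cross_{\mathcal{T}}(p^*)\ge k$. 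Either way yields the contradiction.
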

\begin{proof}
Let 
\begin{itemize}
    \item $V(H)=\{c\}\cup \{u_i:i\in [4k]\}\cup\{v_{i,j}:i\in [4k], j\in[4k]\}$ and
    \item $E(H)=\{cv_{i,j}, u_{i}v_{i,j}:i\in [4k], j\in[4k]\}$.
\end{itemize} 

Suppose for contradiction that 
$H$ admits a tree-cut decomposition $\mathcal{T}=(T,\{X_v\}_{v\in V(T)})$ of edge-crossing width less than $k$. 
As each bag has size less than $k$, the bag containing $c$ does not contain all vertices in $\{u_i:i\in [4k]\}$. Let $a\in [4k]$ such that $c$ and $u_a$ are contained in distinct bags of $\mathcal{T}$. Let $X_{p}$ and $X_{q}$ be the bags containing $c$ and $u_{a}$, respectively.

Let $pp^*$ be the edge of $T$ on the path from $p$ to $q$ in $T$. Let $T_p$ and $T_{p^*}$ be the components of $T-pp^*$ containing $p$ and $p^*$, respectively. Let $V_p=\bigcup_{t\in V(T_p)}X_t$ and $V_{p^*}=\bigcup_{t\in V(T_{p^*})}X_t$.
Note that either $V_p$ or $V_{p^*}$ contains at least 
$2k$ vertices in $\{v_{a,j}:j\in[4k]\}$. 

If $V_p$ contains at least 
$2k$ vertices in $\{v_{a,j}:j\in[4k]\}$, then $\cross_{\mathcal{T}}(p)\ge k$, because $|X_p|<k$.
Similarly, 
if $V_{p^*}$ contains at least 
$2k$ vertices in $\{v_{a,j}:j\in[4k]\}$, then $\cross_{\mathcal{T}}(p^*)\ge k$, because $|X_{p^*}|<k$.
These contradict the assumption that $\mathcal{T}$ has edge-crossing width less than $k$.
\end{proof}

\begin{lemma}\label{lem:fanecrw}
    Let $k\ge 3$ be an integer. For a $(6k^2-6k-1)$-fan $F_{6k^2-6k-1}$, $\ecrw(F_{6k^2-6k-1})\ge k$. 
\end{lemma}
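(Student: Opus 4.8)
The plan is to argue by contradiction, following the same template used in Lemmas~\ref{lem:thickenpath} and~\ref{lem:thickenstar}, but with one extra layer of bookkeeping because the hub vertex of the fan is adjacent to \emph{all} vertices. Suppose $\mathcal{T}=(T,\{X_t\}_{t\in V(T)})$ is a tree-cut decomposition of $F:=F_{6k^2-6k-1}$ of edge-crossing width less than $k$; in particular every bag has size at most $k-1$ and $\cross_\mathcal{T}(t)\le k-1$ for all $t$. Write $c$ for the hub and $p_1p_2\cdots p_N$ for the spine path, where $N=6k^2-6k-1$. Let $t_0$ be the node with $c\in X_{t_0}$, and consider $\mathcal{T}$ as rooted at $t_0$. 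For each $i$, let $\tau(i)\in V(T)$ be the node whose bag contains $p_i$.

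First I would localize the spine. Because each bag has fewer than $k$ vertices, the spine vertices are spread over at least $\lceil N/(k-1)\rceil$ distinct bags, so along the spine path there are many consecutive pairs $p_i,p_{i+1}$ lying in different bags; each such pair, together with the spine edge $p_ip_{i+1}$ and the two hub edges $cp_i$, $cp_{i+1}$, gives edges that must be accounted for at some node. The key local estimate is the one from the previous two lemmas: if $uv$ is an edge with $u,v$ in distinct components of $T-t$ whose bags both avoid $t$, then $uv$ crosses $X_t$. I would apply this at the node $t^\ast$ on the $\tau(i)$--$\tau(i+1)$ path adjacent to $\tau(i)$ (or symmetrically), splitting $T-t^\ast$ into the side $V$ containing $p_i$ and the rest: the hub edges from $c$ to all spine vertices on the side \emph{not} containing $c$, together with the spine edges leaving that side, all cross $X_{t^\ast}$ unless their endpoints land in the at most $k-1$ vertices of $X_{t^\ast}$. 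The counting goal is to exhibit a node through which more than $k-1$ such edges pass.

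The quantitative heart is a pigeonhole argument on how the at most $N$ spine vertices distribute among the subtrees hanging off the rooted path from $\tau$-nodes back to $t_0$, combined with the fact that only $k-1$ hub edges can be ``absorbed'' into any one bag. Concretely: root at $t_0$, and for a node $t\neq t_0$ let $D_t=\bigcup_{t'\text{ desc.\ of }t}X_{t'}$. Every spine vertex in $D_t$ that is not in $X_t$ and whose spine-neighbour lies outside $D_t$ contributes either its hub edge or a spine edge crossing $X_t$; pushing $t$ down the tree while keeping $|D_t\cap\{p_1,\dots,p_N\}|$ in a controlled window $[2k-2,\,4k-4]$ (possible because bags shrink the count by less than $k$ each step) yields a node where at least $2k-2$ spine vertices are ``trapped'' below but the spine re-enters the outside, so at least $2k-2-(k-1)=k-1$ crossing edges come from spine edges or, if the spine stays inside, at least $2k-2-(k-1)\ge k$ hub edges to $c$ cross $X_t$. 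The constant $6k^2-6k-1=6(k-1)^2+6(k-1)-1$ is exactly what is needed so that even after discarding the $O(k)$ spine vertices that can hide in the two or three relevant bags and the $O(k^2)$ edges already charged elsewhere, a surplus of at least one remains to force $\cross_\mathcal{T}(t)\ge k$ at some $t$.

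The main obstacle I anticipate is the case analysis around the hub: unlike the thickened path and star, here every node $t$ with $c\notin D_t$ sees a hub edge for \emph{every} spine vertex below it, so one must be careful not to double-count between the ``spine crossing'' contribution and the ``hub crossing'' contribution, and one must choose the splitting node so that the component not containing $c$ is the one carrying the large block of spine vertices. Handling the boundary situation where $c$'s bag itself already contains $k-1$ spine vertices, and the degenerate case $|V(T)|=1$, are routine but need to be dispatched first. I would structure the proof as: (i) reduce to $|V(T)|\ge 2$ and set up the rooting at $t_0$; (ii) prove the ``trapped block'' lemma by pushing down the tree; (iii) split into the subcase where the spine exits the block (spine edges cross) and the subcase where it does not (hub edges cross), and in each derive $\cross_\mathcal{T}(t)\ge k$, contradicting the assumed edge-crossing width.
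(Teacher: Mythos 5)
Your overall strategy (contradiction, bag size and crossing number both at most $k-1$, counting hub edges that must cross some bag) is in the right spirit, but the two load-bearing steps are not sound as sketched, and they are precisely the points where the paper's proof does something different. First, the ``trapped block'' lemma fails: when you push $t$ down from the root, the quantity $|D_t\cap\{p_1,\dots,p_N\}|$ does not decrease by less than $k$ per step, because the spine vertices below $t$ can split among many children. Consider the decomposition in which $T$ is a star, the centre bag contains the hub $c$ together with a few spine vertices, and each leaf bag contains a block of at most $k-1$ consecutive spine vertices: every node $t\neq t_0$ has $|D_t\cap\{p_1,\dots,p_N\}|\le k-1$, so no node lies in your window $[2k-2,4k-4]$, and your case analysis at a node below the root never fires. (This decomposition is of course not a counterexample to the lemma --- the many spine edges between different leaves cross the \emph{centre} bag --- but the crossings live at the hub's own node, which your scheme, restricted to $t\neq t_0$ with $c\notin D_t$, never inspects.) The paper avoids exactly this by not rooting at the hub: it orients each tree edge toward the heavier side, takes a sink node $q$, and crucially selects a \emph{subset $S$ of the components} of $T-q$ whose total bag-weight lies strictly between $\tfrac14|V(G)|$ and $\tfrac34|V(G)|$; a single subtree cannot play this role, as the star example shows. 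It then splits on whether the hub lies in $X_q$: if not, hub edges from the side without the hub already give $\ge k$ crossings at $q$; if the hub is in $X_q$, it extracts a spine subpath $P$ of length about $2k-1$ avoiding $X_q$ and all endpoints of edges crossing $X_q$, observes $P$ must sit inside a single component $T^*$ of $T-q$, and counts the hub edges from $P\setminus X_{q'}$ crossing the bag of the neighbour $q'$ of $q$ in $T^*$. Your sketch has no analogue of this second case.

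Second, even locally the arithmetic and the edge accounting do not close. In the ``spine exits'' subcase the number of spine edges crossing $X_t$ is governed by the number of maximal spine segments leaving $D_t$, not by the number of trapped spine vertices minus $|X_t|$ (a single contiguous block of $2k-2$ trapped vertices yields only two such spine edges), and the claimed inequality $2k-2-(k-1)\ge k$ is false ($k-1<k$), so neither subcase as written reaches $\cross_{\mathcal{T}}(t)\ge k$. The cleaner local count --- every spine vertex in $D_t\setminus X_t$ sends its hub edge across $X_t$ whenever $c\notin D_t$ --- does work, but it only yields a contradiction once you can guarantee a node (or, as in the paper, a node $q$ plus a grouping $S$ of components, and in the hub-in-$X_q$ case the neighbouring node $q'$) trapping at least $2k-1$ spine vertices on the side away from the hub; establishing that guarantee is the missing core of the proof, and it is what the paper's balancing claim and two-case analysis supply.
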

\begin{proof}
Let $G=F_{6k^2-6k-1}$ with the vertex $u$ of degree $6k^2-6k-1$.
Suppose not.
Let $\mathcal{T}=(T,\{X_v\}_{v\in V(T)})$ be a tree-cut decomposition of $G$ with edge-crossing width less than $k$. For each node $t$ of $T$, let $\mathcal{C}_t$ be the set of all components of $T-t$. Since $k\ge 3$, we have $6k^2-6k-1\ge 2k$. So, $G$ has more than $2k$ vertices. 

\begin{claim}
There are a node $q$ of $T$ and a non-empty subset $S$ of $\mathcal{C}_q$ such that
\[
        \frac{1}{4}|V(G)|<\sum_{C\in S}\left(\sum_{v\in V(C)}|X_v|\right)\le \frac{3}{4}|V(G)|.
        \]
\end{claim}
\begin{clproof}
For each edge $ab\in E(T)$, let $T_a$ and $T_b$ be the components of $T-ab$ containing $a$ and $b$, respectively. If $\sum_{t\in V(T_a)}\abs{X_t}\le\sum_{t\in V(T_b)}\abs{X_t}$, then we direct the edge $ab$ from $a$ to $b$.
Since each leaf bag $X_t$ contains at most $k$ vertices and $G$ has more than $2k$ vertices, the edge of $T$ incident with~$t$ is directed from $t$ to its neighbor.

So, there is a node $q$ such that all edges incident with $q$ are directed to $q$. We claim that $q$ satisfies the required property.
Suppose not. Let $S$ be an inclusion-wise maximal subset of $\mathcal{C}_q$ where $\sum_{C\in S}\left(\sum_{t\in V(C)}|X_t|\right)\le \frac{1}{4}|V(G)|$.
Since $\abs{X_q}\le k<\frac{1}{2}|V(G)|$, we have $\mathcal{C}_q\setminus S\neq \emptyset$. Let $C^*\in \mathcal{C}_q\setminus S$. Because of the directions defined on edges of $T$, we have $\left(\sum_{t\in V(C^*)}|X_t|\right)\le \frac{1}{2}\abs{V(G)}$. Therefore, we have 
$\frac{1}{4}|V(G)|< \sum_{C\in S\cup \{C^*\}}\left(\sum_{t\in V(C)}|X_t|\right)\le \frac{1}{4}|V(G)|+\frac{1}{2}|V(G)|=\frac{3}{4}|V(G)|.$ 
\end{clproof}

        First assume that $X_q$ does not contain the vertex $u$. Then either  
        $\bigcup_{C\in S}\left(\bigcup_{t\in V(C)}X_t\right)$ contains $u$ or 
        $\bigcup_{C\in \mathcal{C}_q\setminus S}\left(\bigcup_{t\in V(C)}X_t\right)$ contains $u$. Assume that $\bigcup_{C\in S}\left(\bigcup_{t\in V(C)}X_t\right)$ contains $u$.
        By the choice of $q$ and $S$, the union of $\bigcup_{C\in \mathcal{C}_q\setminus S}\left(\bigcup_{t\in V(C)}X_t\right)$ and $X_q$ contains more than $\frac{6k^2-6k}{4}\ge 2k-1$ vertices of $F_{6k^2-6k-1}$. As $\abs{X_q}<k$, 
        $\bigcup_{C\in \mathcal{C}_q\setminus S}\left(\bigcup_{v\in V(C)}X_v\right)$ contains at least $k$ vertices of $F_{6k^2-6k-1}$.
        This implies that $\cross_\mathcal{T}(q)\ge k$, a contradiction. A similar argument shows that if $\bigcup_{C\in \mathcal{C}_q\setminus S}\left(\bigcup_{v\in V(C)}X_v\right)$ contains $u$, then $\cross_\mathcal{T}(q)\ge k$, which leads to a contradiction.

    Therefore, we may assume that $X_q$ contains $u$. Let $P$ be the component of the graph obtained from $F_{6k^2-6k-1}$ by removing all vertices in $X_q$ and all vertices incident with edges crossing $X_q$, such that $\abs{V(P)}$ is maximum. Since we remove at most $(2k-2)+(k-2)=3k-4$ vertices of the path $F_{6k^2-6k-1}-u$, there are at most $3k-3$ components and therefore, $P$ has at least  
    $\frac{6k^2-6k-1-(3k-4)}{3k-3}=2k-1$ vertices.

    Since we delete all vertices incident with crossing edges, $P$ is fully contained in the part corresponding to one of the components of $T-q$. Let $T^*$ be the component of $T-q$ where $V(P)$ is contained in $\bigcup_{v\in V(T^*)}X_v$.
    Let $q'$ be the node in $T^*$ adjacent to $q$.
    Since $X_{q'}$ has less than $k$ vertices, 
    $P$ has at least $k$ vertices that is contained in
    $\left(\bigcup_{v\in V(T^*)}X_v\right)\setminus X_{q'}$.
    This implies that $\cross_\mathcal{T}(q')\ge k$, a contradiction.

    We conclude that $F_{6k^2-6k-1}$ has edge-crossing width at least $k$.
\end{proof}

Now, we show that $\ecrw\WR\tpw$.
\begin{lemma}\label{lem:tpwecrw}
    $\ecrw\WR\tpw$.
\end{lemma}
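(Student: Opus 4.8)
The plan is to deduce this from the Ding--Oporowski characterization (Theorem~\ref{tpw-iff}). Since that theorem gives $\gamma\WR\tpw$, i.e. $\tpw(G)\le f_0(\gamma(G))$ for some function $f_0$, it suffices to prove that $\gamma(G)$ is bounded by a function of $\ecrw(G)$; equivalently, that a graph of small edge-crossing width contains no subdivision of a large $r$-fan, thickened $r$-star, thickened $r$-path, or $r$-wall. A preliminary observation I would record first is that edge-crossing width is monotone under taking subgraphs: given a tree-cut decomposition $(T,\{X_t\})$ of $G$ and a subgraph $H\subseteq G$, the restricted bags $X_t\cap V(H)$ form a tree-cut decomposition of $H$ whose thickness and crossing numbers do not exceed those of $(T,\{X_t\})$, so $\ecrw(H)\le\ecrw(G)$.

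For the fan and the two thickened families I would argue uniformly. Suppose $G$ contains a subdivision $H$ of one of these obstructions $M_r$ with parameter $r$. By the subgraph observation, $\ecrw(H)\le\ecrw(G)$, and since $H$ is a subdivision of $M_r$, Lemma~\ref{lem:subdivi} gives $\ecrw(M_r)\le 2\ecrw(H)\le 2\ecrw(G)$. On the other hand, each family is monotone in the sense that $M_{r'}$ is a subgraph of $M_r$ whenever $r'\le r$ (keep $r'$ of the spine or leaf vertices and, for the thickened families, $r'$ of the parallel length-two paths), so by the subgraph observation again together with Lemmas~\ref{lem:fanecrw}, \ref{lem:thickenstar}, and~\ref{lem:thickenpath}, a large $r$ would force $\ecrw(M_r)$ to be large, a contradiction. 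Concretely this bounds $r$ by an explicit function of $\ecrw(G)$: roughly $r=O(\ecrw(G))$ for the thickened star and the thickened path, and $r=O(\ecrw(G)^2)$ for the fan.

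The wall case I would route through tree-width rather than through $\ecrw$ directly. If $G$ contains a subdivision $H$ of the $r$-wall $W_r$, then $W_r$ is a minor of $H$, hence a minor of $G$, so $\tw(W_r)\le\tw(G)$; combining this with Lemma~\ref{lem:twecrw} ($\tw(G)\le 5\ecrw(G)-1$) and the well-known fact that the tree-width of the $r$-wall tends to infinity with $r$ (it contains an $\Omega(r)\times\Omega(r)$ grid as a minor), we again bound $r$ by a function of $\ecrw(G)$. Taking $f$ to be the maximum of the four resulting bounds yields $\gamma(G)\le f(\ecrw(G))$, and then $\tpw(G)\le f_0(f(\ecrw(G)))$, which is the desired $\ecrw\WR\tpw$.

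I expect the main obstacle to be essentially bookkeeping: assembling the four separate bounds and being careful that each obstruction of parameter $r$ really contains, as a subgraph, every smaller obstruction of the same family, so that the ``size $4k$'' (and ``size $6k^2-6k-1$'') lemmas above suffice to bound $r$. The only non-elementary external input is the standard estimate that walls have unbounded tree-width, which can simply be cited.
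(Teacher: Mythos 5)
Your proposal is correct and takes essentially the same route as the paper: it reduces to the Ding--Oporowski characterization (Theorem~\ref{tpw-iff}) and excludes each obstruction family by combining Lemmas~\ref{lem:subdivi}, \ref{lem:thickenpath}, \ref{lem:thickenstar}, \ref{lem:fanecrw} and~\ref{lem:twecrw}, merely making explicit the subgraph-monotonicity of $\ecrw$ and the monotonicity of the obstruction families in $r$, which the paper uses implicitly. The only (inessential) deviation is the wall case, where you use minor-monotonicity of tree-width on $G$ directly instead of first lower-bounding the edge-crossing width of (subdivided) walls; both variants rest on Lemma~\ref{lem:twecrw} and the unboundedness of wall tree-width.
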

\begin{proof}
Let $\gamma$ be the function defined in Theorem~\ref{tpw-iff}. 
As  $\gamma\WR \tpw$, there is a non-decreasing function $f$ such that for every graph $G$, $\tpw(G)\le f(\gamma(G))$. Let $g(x)=f( 24x^2+36x+11 )$. We claim that for every graph $G$, $\tpw(G)\le g(\ecrw(G))$. This will imply that $\ecrw\WR \tpw$.

Suppose that $G$ has edge-crossing width $k$. Then by Lemmas~\ref{lem:thickenpath}, \ref{lem:thickenstar}, \ref{lem:fanecrw} together with \ref{lem:subdivi}, $G$ has no subdivision isomorphic to $F_{24(k+1)^2-12(k+1)-1}=F_{24x^2+36x+11}$, a thickened $(8k+8)$-path, or a thickened $(8k+8)$-star.
Since an $n$-wall contains a minor isomorphic to an $n$-grid and an $n$-grid has tree-width $n$, an $n$-wall has tree-width at least $n$ and by Lemma~\ref{lem:twecrw}, it has edge-crossing width at least $\frac{n+1}{5}$. Furthermore, by Lemma~\ref{lem:subdivi}, a subdivision of an $n$-wall has edge-crossing width at least $\frac{n+1}{10}$. This implies that  if $G$ has a subdivision of a $(10k+9)$-wall, then $\ecrw(G)\ge k+1.$ Therefore, $G$ has no subdivision of a $(10k+9)$-wall.

Thus, $\gamma(G)\le \max(24k^2+36k+11, 8k+8, 10k+9)=24k^2+36k+11$. Since $f$ is non-decreasing, $\tpw(G)\le f(\gamma(G))\le f(24k^2+36k+11)=g(k)$.
\end{proof}

It is an open problem to find a simple and direct upper bound of tree-partition-width in terms of edge-crossing width.

\section{Conclusion}\label{sec:conclusion}

In this paper, we introduced a width parameter called $\alpha$-edge-crossing width, which lies between edge-cut width and tree-partition-width, and which is incomparable with tree-cut width. We showed that \textsc{List Coloring} and \textsc{Precoloring Extension} are FPT parameterized by $\alpha$-edge-crossing width. It would be interesting to find more problems that are W[1]-hard parameterized by tree-partition-width, but FPT by $\alpha$-edge-crossing width for any fixed $\alpha$. There are five more problems that are known to admit FPT algorithms parameterized by slim tree-cut width, but W[1]-hard parameterized by tree-cut width~\cite{Ganian2022}, and these problems are candidates for the next research.  

We remark that the \textsc{Edge-Disjoint Paths} problem is one of the W[1]-hard problems parameterized by tree-width that motivates to study width parameters based on edge cuts. 
    Fleszar, Mnich, and Spoerhase~\cite{FleszarMS2018} proved that \textsc{Edge-Disjoint Paths} is NP-hard on graphs admitting a vertex cover of size $3$ and graphs admitting a feedback vertex set of size $2$. This implies that for every $\alpha\ge 3$, this problem is NP-hard on graphs of $\alpha$-edge-crossing width $0$. 
    For $\alpha=1$, one can easily design a fixed parameter algorithm  for \textsc{Edge-Disjoint Paths} parameterized by $\alpha$-edge-crossing width, similar to the algorithm parameterized by edge-cut width in \cite{BrandCHGK2022}. For $\alpha=2$, we leave the problem of deciding whether \textsc{Edge-Disjoint Paths} is FPT parameterized by $\alpha$-edge-crossing width as an open problem. 

We also introduced edge-crossing width, that is equivalent to tree-partition-width. However, our proof is based on the characterization of graphs of bounded tree-partition-width due to Ding and Oporowski~\cite{tpw1996}, and finding an elementary upper bound of tree-partition-width in terms of edge-crossing width is an interesting problem. More specifically, we ask whether there is a constant $c$ such that for every graph $G$, $\tpw(G)\le c\cdot \ecrw(G)$.

\subsection*{Acknowledgments}
\noindent
We thank the anonymous reviewers for comments to improve the original manuscript.

\noindent An extended abstract of this paper appeared in the proceedings of WG 2023~\cite{ChangKL2023}. Y. Chang, O. Kwon, and M. Lee are supported by the National Research Foundation of Korea (NRF) grant funded by the Ministry of Science and ICT (No. NRF-2021K2A9A2A11101617 and RS-2023-00211670). O. Kwon is also supported by Institute for Basic Science (IBS-R029-C1).

\subsection*{Declarations}
\noindent
Declarations of interest: none.

\end{document}